\newcommand{\nB}{|B|}
\newcommand{\nA}{|A|}
\newtheorem{theorem}{Theorem}
\newtheorem{comment}[theorem]{Comment}
\newtheorem{corollary}[theorem]{Corollary}
\newtheorem{example}[theorem]{Example}
\newtheorem{lemma}[theorem]{Lemma}
\newtheorem{proposition}[theorem]{Proposition}
\newtheorem{remark}[theorem]{Remark}
\renewenvironment{proof}[1][Proof]{\noindent\textbf{#1.} }{\
\rule{0.5em}{0.5em}}
\numberwithin{equation}{section}
\numberwithin{theorem}{section}
\newcommand{\be}{\begin{equation}}
\newcommand{\ee}{\end{equation}}
\newcommand{\Prob}{\mathbb{P}}
\newcommand{\R}{\mathbb{R}}
\newcommand{\esp}{\mathbb{E}}
\newcommand{\K}{\mathcal{K}}
\newcommand{\eps}{\varepsilon}
\newcommand{\Horm}{H\"ormander }
\newcommand{\Hold}{H\"older}
\newcommand{\Gy}{Gy\"ongy}
\title{Varadhan's formula, conditioned diffusions, and local volatilities
\thanks{
Corresponding author: Stefano De Marco, Ecole Polytechnique-CMAP, Route de Saclay, 91128 Palaiseau Cedex, France. demarco@cmap.polytechnique.fr
\newline \emph{Key words and phrases}: conditional density asymptotics, local volatility, stochastic volatility, large deviations.
\newline \emph{2010 Mathematics Subject Classification}: AMS 91G20, 91G80, 60H30, 65C30.
}}
\author{Stefano De Marco, Peter Friz
\\
Ecole Polytechnique, TU and WIAS Berlin}
\renewcommand\footnotemark{}
\date{}
\begin{document}
\maketitle

\begin{abstract}
\noindent
Motivated by marginals-mimicking results for It\^o processes \cite{Dupire,GyonLocVol} via SDEs and by their applications to volatility modeling in finance, we discuss the weak convergence of the law of a hypoelliptic diffusions conditioned to belong to
a target affine subspace at final time, namely $\mathcal{L}(Z_t|Y_t = y)$ if $X_{\cdot}=(Y_\cdot,Z_{\cdot})$.
To do so, we revisit Varadhan-type estimates in a small-noise (as opposed to small-time) regime, studying the density of the lower-dimensional component $Y$.	
The application to stochastic volatility models include the small-time and, for certain models, the large-strike asymptotics of the \Gy--Dupire local volatility function. The final product are asymptotic formulae that can (i) motivate parameterizations of the local volatility surface and (ii) be used to extrapolate local volatilities in a given model.
\end{abstract}

\section{Introduction}

Consider an $n$-dimensional diffusion process given by the solution of
\be \label{e:SDEintro}
dX_t = b\left( X_t\right) dt + \sum_{j=1}^d \sigma_j \left( X_t\right) dW^j_t, \ X_{0}=x_0 \in \R^n.
\ee
Applications to finance suggest a splitting of the state space, say $X=(Y,Z) \in \R^l \times \R^{n-l} \cong \R^n$, where
$Y$ is the main process of interest (for instance: price or log-price of an asset) and $Z$ some auxiliary process (for instance: stochastic volatility, possibly multi-dimensional). There is a massive amount of literature concerning $p_t$, the probability distribution function of $X_t$ at small times $t$. In the {\it elliptic} case, that is when $span \{\sigma_1, ..., \sigma_d\} = \R^n$ such investigations go back to Varadhan (``$2t \log p_t(x,y) \sim d^2(x,y)$'') and then Molchanov \cite{Mol75} for full expansions of $p_t$. The hypoelliptic situation (assuming the strong H\"ormander condition $Lie\{\sigma_1, ..., \sigma_d\} = \R^n$) was then studied by Azencott, Bismut, Leandre, Ben Arous,...  (the function $d$ is then be interpreted as control distance associated to the diffusion vector fields.)

Similar results were recently obtained for $f_t$, the density of $Y_t$ (that is, a marginal density of $X_t$) by Deuschel et al. see \cite{DFJV:I,DFJV:II}, improving on earlier works of Takanobu--Watanabe \cite{TakaWat}. We postpone a detailed comparison of \cite{TakaWat} and our results to Section \ref{sec:TW} below.

Our main results are: (i) a Varadhan formula for $f_t$ in the short time limit, which is seen to be valid in great generality (without the need to check for the non-degeneracy conditions that appear as assumptions in \cite{TakaWat,DFJV:I,DFJV:II}\footnote{As was seen in \cite{DFJV:II}, checking non-focality requires a non--trivial analysis of certain Hamiltonian systems. We also note that focality actually happens in reasonably simple situations, e.g. in the context of a 2-dimensional Black--Scholes basket, as was seen in \cite{BFLbaskets}.}) and then (ii) a limit theorem for $Z$ conditioned on the value of $Y$. As far as we know, even in the elliptic case our results, concerning the marginals of $X_t$, are new.
The limit here may again be short time or, more generally, small noise. In fact, the small noise situations poses new difficulties (for instance, in a strictly hypoelliptic setting Varadhan's formula may fail!) but then offers new applications: indeed, contribution (iii) of this paper is concerned with a class of stochastic volatility models introduced by Stein--Stein: we exploit scaling in a way that the small noise asymptotics for the conditioned diffusions gives us a (computable) expression for the asymptotic slope of local variance (the square of Dupire's local volatility as induced from option prices).
Postponing precise assumptions to Section \ref{s:VaradhanSmallNoise}, our first main result is stated as follows

\begin{theorem} \label{t:smallTimeIntro}
(i) Let $X_t = (Y_t, Z_t)$ as above. Under a strong H\"ormander condition, $Y_t$ admits a density $f_t$ for $t>0$ and the following Varadhan type formula holds: for every $y \in \R^l$
\[
\lim_{t \to 0} t \log f_t (y) = -\inf_{\{x=(y,z):z\in\R^{n-l}\}} \Lambda(x)=: \Lambda(N_y)
\]
where $2\Lambda(x)=d^2(x_0,x)$ is the squared control distance associated to $\{\sigma_1,...,\sigma_d\}$ and $N_y = \{x $ $=(y,z):z\in\R^{n-l}\}$.

(ii) Under a further technical assumption (always satisfied in the elliptic case - see Thm \ref{t:conditionalLaw} for a precise statement)
\begin{equation} \label{CondLawT}
\mathcal{L}\left( Z_{t} |Y_{t} = y\right)
\Longrightarrow \delta _{z^{\ast }(y)} \qquad \text{as } t
\downarrow 0
\end{equation}
in the sense of weak convergence of probability measures, provided there exists a unique minimizer $z^{\ast }(y)$ for the problem
$\mbox{argmin}_{x \in N_y} \Lambda(x)$.
\end{theorem}

While the above theorem is clearly useful (it implies, for instance, short time asymptotics for local volatility; on a technical level, we remove the ellipticity requirement from \cite{BBFstochVol2004}), it does not lend itself to understand spatial asymptotics. To this end, we generalize the setup and discuss small noise problems of the form
\be \label{e:baseSDE} 
dX^{\varepsilon}_t=b_{\varepsilon }\left( X^{\varepsilon}_t\right)
dt+\varepsilon \sum_{j=1}^d \sigma_j \left( X^{\varepsilon}_t\right) dW^j_t, \ X_{0}^{\varepsilon
}=x_{0}^{\varepsilon},
\ee
with $x_0^{\eps} \to x_0$ and $b_{\varepsilon }\rightarrow b_0$ (in the precise sense of condition \eqref{e:convergDrift} below).
By Brownian scaling, the short time setting $t\to0$ falls into this setting by taking $t=\varepsilon^2$, $b_{\varepsilon} = \varepsilon^2 b$ and $x_0^{\eps} = x_0$.

\begin{theorem} \label{t:smallNoiseIntro}
(i)
Write $X^{\varepsilon}_t = \left( Y^{\varepsilon }_t,Z^{\varepsilon }_t\right)$. Under a strong H\"ormander condition, and a further technical assumption (which is always satisfied in the elliptic case, or also when $b_0 \equiv 0$ - see Thm \ref{t:VaradhMargin} for precise assumptions), the following Varadhan type formula holds for the density $f^{\eps}_t$ of $Y^{\varepsilon }_t$: for every $t>0$ and $y \in \R^l$
\[
\lim_{\epsilon \to 0} \eps^2 \log f^{\epsilon}_t (y) = -\Lambda_t(N_y).
\]
where, as before $N_y = {\{x=(y,z):z\in\R^{n-l}\}}$. (Although the action $\Lambda$ is still given in terms of a variational problem, cf. \eqref{e:rateFctX} below, it has no more the interpretation as point-to-subspace distance.)

(ii) Under the same assumptions as above
we have, for fixed $y$ and $t>0$,

\begin{equation} \label{CondLawEps}
\mathcal{L}\left( Z_{t}^{\varepsilon }|Y_{t}^{\varepsilon }=y\right)
\Longrightarrow \delta _{z_{t}^{\ast }(y)} \qquad \text{as }\varepsilon
\downarrow 0
\end{equation}
provided there exists a unique minimizer $z_{t}^{\ast }(y)$ for the problem
$\mbox{argmin}_{x\in N_y}\Lambda_t(x)$.
\end{theorem}

\noindent
The case $l=n$ in Theorem \ref{t:smallTimeIntro}(ii) (``from $x_0 \in \R^n$ to $x \in \R^n$'') is covered by the result of Molchanov \cite{Mol75} for elliptic diffusions, and more recently by Bailleul \cite{BailleulBridges} in the hypoelliptic setting.
Besides the more general framework of small noise asymptotics that we consider in Theorem \ref{t:smallNoiseIntro}, in (ii) the final target set for the process $X$ is an affine subspace instead of a single point (``from $x_0$ to $N_y$'', restoring the point-to-point situation when $l=n$).
Also, the results of \cite{Mol75,BailleulBridges} are given on compact manifolds, while we work here with $\R^n$-valued processes, and need to rely on some non-trivial tail bounds.

Following the well-known projection results \cite{GyonLocVol, Dupire} for It\^o SDEs, we then have the following corollaries of Theorem \ref{t:smallNoiseIntro} for the local volatility:

\begin{theorem} \label{t:localVolIntro} 
(i) [\textbf{Local volatility, short time behavior}] In a generic stochastic volatility model $(Y,Z)$ (where $Y$ denotes log-price and $Z$ stochastic volatility)
\begin{equation}  \label{e:shortTimeLocalVolIntro}
\sigma^2_{loc}(t,y) = 
\esp[\left(Z_t\right)^2 |Y_t=y] \to z^*(y)^2 \qquad \mbox{as } t \to 0.
\end{equation}
Here $z^*=z^*(y)$ is the ``most likely'' arrival point, computed as $argmin$ of $z \in \R \mapsto \Lambda^{SV}(y,z)$ where $\Lambda^{SV}(\cdot)$ is the action associated to the stochastic volatility model. (Eventually, explicit computations depend on the specific model considered.)

(ii) [\textbf{Local volatility `wings' in the Stein--Stein model}] In the Stein--Stein model (where $Z$ follows an Ornstein--Uhlenbeck process, see equation \eqref{e:steinStein}) \footnote{When the correlation parameter between the log-price $Y$ and the instantaneous volatility $Z$ is not null, the Stein--Stein model is also known as Sch{\"o}bel and Zhu model, see \cite{SchZhu}.}, small noise asymptotics lead to 
\be  \label{e:steinSteinWingsIntro}
\lim_{y \to \pm \infty} \frac{\sigma^2_{\mathrm{loc}}(t,y)}{|y|}
= 
\lim_{y \to \pm \infty} \frac1{|y|}
\esp[\left(Z_t\right)^2 |Y_t=y]
=
c_t^{\pm} > 0
\ee
where the constants $c_t^{\pm}$ are given explicitly in terms of the model parameters. 
\end{theorem}

\noindent
As we will show in Section \ref{s:localVol}, under some special parameter configuration of the Stein--Stein model, the explicit expression of the constants $c_t^{\pm}$ appearing in \eqref{e:steinSteinWingsIntro} turns out to be consistent with known results from moment explosion for affine models \cite{KellRes}.

As a particular consequence of Theorem \ref{t:localVolIntro}(i), we see that the local volatility surface generated by a fairly general stochastic volatility model (see Theorem \ref{t:smallTimeLocalVol} below for detailed assumptions) \emph{does not} explode in small time. 
Conversely, realistic local volatility surfaces typically do exhibit an explosive behavior out of the money when time goes to zero (namely $\sigma_{\mathrm{loc}}(t,y) \to \infty$ as $t \to 0$, for all $y \neq 0$).
This is the case for the parametric local volatility surface calibrated to SPX option data in \cite[Section 4]{GathWang}, and for the local volatilities obtained via Dupire's formula from realistic SSVI parameterisations \cite{SSVI} of the implied volatility (and finally, the same behavior also appears when applying Dupire's formula to option prices generated by models with jumps - see a related discussion in \cite{FrizGerhYor}).
Apart from providing a tool for computing the limit, Theorem \ref{t:localVolIntro}(i) tells precisely that the local volatilities generated by a standard homogeneous stoch vol model based on Brownian diffusions are not able to capture this phenomenon -- thus providing a negative result for this class of models, in the same spirit as \cite[Theorem 2]{FukasawaSkew}, which focuses on the behavior of the at-the-money implied volatility skew.

On the other side, in analogy with the large-strike behavior of implied volatility \cite{Lee, GulForm}, the linear asymptotic behavior of the local variance in Theorem \ref{t:localVolIntro}(ii) is likely to hold in even wider classes of stochastic volatility models (the same result is indeed known to hold for the Heston model, see \cite{DmFG}, based on affine principles).
On the one hand, the knowledge of an explicit spatial asymptotics for the local volatility can motivate the choice of functional forms used to smooth out and/or extrapolate a local volatility surface calibrated to market data.
Already in use among practitioners, SVI-type parameterizations of the local variance, cf. again \cite[Section 4]{GathWang}, are compatible with the asymptotic result in \eqref{e:steinSteinWingsIntro}.
On the other hand, a robust implementation of the local volatility surface isof course the basis for a Monte-Carlo evaluation of exotic options under local volatility.
Once this step is achieved, the comparison of the prices of volatility-sensitive products (cliquets, barriers,...) under a stochastic volatility model and the corresponding `projected' loc vol model is often used by option trading desks in order to quantify the impact related to different volatility dynamics. This procedure often enters as an important step in the assessment of volatility model risk by model validation teams.
Theorem \ref{t:localVolIntro} allows to extrapolate the local volatility function with explicit formulae in extreme regions, where the implementation of Dupire's formula typically suffers from numerical instabilities.
\medskip

\textbf{Acknowledgment.} SDM thanks Davide Barilari and Luca Rizzi for interesting discussions and insights on affine control systems.
PKF acknowledges support from European Research Council under the European Unions Seventh
Framework Programme (FP7/2007-2013) / ERC Grant Agreement \#258237 and DFG grant FR2943/2.
SDM acknowledges funding from the research programs `Chaire Risques Financiers', `Chaire March\'es en mutation' and `Chaire Finance et d\'eveloppement durable'.

\subsection{Small noise systems}

Standing assumption throughout this paper is that the vector fields $\sigma_1, \dots, \sigma_d$ and the functions of the one parameter family $b_{\eps}$ are smooth ($C^{\infty}$) functions: postponing any precise set of assumptions to the following sections, let us say here that our main results are stated under a boundedness assumption on the $b_{\eps}$ and the $\sigma_j$ together with their derivatives of all orders, and then extended to a class of 2-dimensional diffusions (stochastic volatility models) with unbounded coefficients.
We assume that
\be \label{e:initCond} 
x_0^{\eps} \rightarrow x_0 \in \R^n \qquad \text{as $\eps \downarrow 0$},
\ee
and that $b_{\eps}$ converges to some limit vector field $b_0 \in C^{\infty}$
\be \label{e:convergDriftSimple}
b_{\eps} \to b_0 \qquad \text{as $\eps \downarrow 0$}
\ee
uniformly on compact sets of $\R^n$.

Under assumptions \eqref{e:initCond} and \eqref{e:convergDriftSimple}, it is known that the process $X^{\eps}$ satisfies a Large Deviation Principle (LDP) on the path space $C([0,T];\R^n)$ as $\eps \downarrow 0$ (for a nice recent summary about large deviation principles for small-noise diffusions, see Baldi and Caramellino \cite{BaldiCar}, and references therein).
The deviations of $X^{\eps}$ are driven by the solutions of the limiting controlled differential system
\begin{equation} \label{e:detFlow}
d\varphi^h_t = b_0(\varphi^h_t) dt + \sum_{j=1}^d \sigma_j (\varphi^h_t) d h^j_{t}, \quad \varphi^h_0 = x_0,
\end{equation}
where $h \in H_T \subset C([0,T]; \R^n)$, and for any $t \le T$, $H_t$ denotes the Cameron-Martin (Hilbert) space of absolutely continuous functions with derivative in $L^2([0,t]; \R^n)$, equipped with the norm\footnote{We commit a slight abuse of notation writing $|h|_H$, instead of $|h|_{H_t}$, for $h \in H_t$: the time variable, kept fixed in our results, will always be clear from the context.} $|h|_H^2 := || \dot{h}||_{L^2}^2 = \int_0^t |\dot{h}_s|^2 ds$.
Following the typical terminology in large deviations theory, for every $t \le T$ we define the action function $\Lambda_t: \R^n \to [0,\infty)$ by
\begin{equation} \label{e:rateFctX}
\Lambda_t(x) = \inf \Bigl\{ \frac12 |h|_H^2 : h \in \mathcal{K}_t^x \Bigr\}, \qquad x \in \R^n,
\end{equation}
with the convention $\inf \emptyset=\infty$, where 
\[
\mathcal{K}_t^x = \{ h \in H_t : \varphi^h_t(x_0)=x \}
\]
is the set of controls steering the trajectories of the system \eqref{e:detFlow} from the point $x_0$ to the point $x$ in time $t$.
Following standard terminology, we call \emph{minimizing control} any control $h_0 \in \mathcal{K}_t^x$ realizing the infimum in \eqref{e:rateFctX}, namely such that $\frac12 |h_0|_H^2 = \Lambda_t(x)$.
Some properties of $\Lambda_t$ are presented in Lemma \ref{l:propertiesLambda} below.
For every fixed $t >0$, the LDP for the family of finite dimensional random variables $\{ X^{\eps}_t \}_{\eps}$ reads
\begin{equation} \label{e:LDPlawX}
\limsup_{\eps \to 0} \eps^2 \log \Prob(X^{\eps}_t \in C) \le -\Lambda_t(C);
\qquad
\liminf_{\eps \to 0} \eps^2 \log \Prob(X^{\eps}_t \in G) \ge -\Lambda_t(G),
\end{equation}
for every closed set $C$ and open set $G$ in $\R^n$.
Following a common convention in large deviations theory, we denote $\Lambda_t(E)= \inf_{x \in E} \Lambda_t(x)$.
\medskip

The large deviations principle \eqref{e:LDPlawX} is very general, and depends only on some mild Lipschitz conditions on the coefficients of the SDE. 
We will be concerned with the situation where the fixed-time distribution of $X^{\eps}$ possesses a density: as it is common in the field of hypoelliptic heat kernel asymptotics \cite{LeMaj,LeMin,BenA}, we assume that strong \Horm condition holds at all points:
\be \label{e:strongH}
\begin{aligned}
\text{(sH)} \quad Lie(\sigma_1, \dots,\sigma_d)_x := \text{span} \{ &\sigma_1, \dots,\sigma_d ; \ [\sigma_i, \sigma_j] : 1 \le i,j \le d;
\\
&[[\sigma_i, [\sigma_l, \sigma_m]] : 1 \le i,l,m \le d; \dots \} \bigr|_x = \R^n \quad	\forall x \in \R^n,
\end{aligned}
\ee
that is, the linear span of the $\sigma_1, \dots, \sigma_d$ and all their Lie brackets\footnote{By definition, $[\sigma_i, \sigma_j ]^k(x) = \sigma_i \cdot \nabla \sigma_j^k(x) - \sigma_j \cdot \nabla \sigma_i^k(x)$, $k=1,\dots,n$.} is the full tangent space to $\R^n$ at $x$ for all $x \in \R^n$.
It is a classical result (due to \Horm, Malliavin) that the law of $X^{\eps}_t$ admits a smooth density with respect to the Lebesgue measure on $\R^n$ for every $t>0$.\footnote{As is well known, weak \Horm condition at the starting point $x_0$ is a sufficient condition to have a smooth density.
Some of our technical results are actually proved under this assumption (weaker than \eqref{e:strongH}), see Lemma \ref{l:keyEstim} in Appendix \ref{s:app2}.}

In order to study the asymptotic behavior of the density of $X^{\eps}_t$, we impose the convergence of the partial derivatives of the drift vector field $b_{\eps}$: in addition to \eqref{e:convergDriftSimple}, for every multi-index $\alpha \in \{1,\dots,n\}^k$
\be \label{e:convergDrift}
\partial^{\alpha}_x b_{\eps} \to \partial^{\alpha}_x b_0
\qquad \text{as $\eps \downarrow 0$}
\ee
uniformly on compact sets of $\R^n$, where $\partial^{\alpha}_x := \partial^k_{x_{\alpha_1},\dots,x_{\alpha_k}}$.
Furthermore, we assume that the families of norms  $|b_{\eps}|_{\infty}$ and $|\partial^{\alpha}_x b_{\eps}|_{\infty}$ are uniformly bounded in $\eps$, for every $\alpha$.
\medskip

\textbf{The deterministic Malliavin matrix $C_{x_0}(h)$}.
For every $t \in [0,T]$, the map $h \mapsto \varphi^h_t(x_0)$ is differentiable (indeed, $C^{\infty}$) from $H$ into $\R^n$, see Bismut \cite[Theorem 1.1]{Bism}.
Let us denote $D \varphi^h_t(x_0) \in Lin(H,\R^n)$ its Fr\'echet derivative at $h$. 
On the other hand, for fixed $h$, $\varphi^h_t(x)$ is a diffeomorphism as a function of $x \in \R^n$: we denote $\Phi^h_t(x) \in Lin(\R^n,\R^n)$ its differential at $x$.
The method of variation of constants allows to express $D \varphi^h_t(x_0)[k]$, the image of $k \in H$ through the linear map $D \varphi^h_t(x_0)$, via the representation formula
\[
D \varphi^h_t(x_0)[k] =
\int_0^t \sum_{j=1}^d \Phi^h_t(x_0)\Phi^h_s(x_s)^{-1} \sigma_j(x_s) \dot{k}^j_s ds, \qquad x_s=\varphi^h_s(x_0).
\]

Following Bismut \cite{Bism}, and in analogy with the stochastic Malliavin matrix, we introduce the \emph{deterministic} Malliavin covariance matrix $C_{x_0}(h)$, whose entries are given by
\be \label{e:mallMatrix}
\begin{aligned}
C_{x_0}(h)^{i,j} &=
\int_0^t \sum_{l=1}^d
\left[ \Phi^h_t(x_0) \Phi^h_s(x_s)^{-1} \sigma_l(x_s) \right]^i
\left[ \Phi^h_t(x_0) \Phi^h_s(x_s)^{-1} \sigma_l(x_s) \right]^j
ds.
\end{aligned}
\ee
It is a fundamental remark due to Bismut \cite[Theorem 1.3]{Bism} that
$D \varphi^{h}_t(x_0)$ has full rank $n$ if and only if the matrix $C_{x_0}(h)$ is invertible.
The invertibility of $C_{x_0}(h)$ is related to the non-degeneracy of the vector fields $\sigma_j$; in the presence of a locally elliptic diffusion coefficient - which is the case for several financial applications - the following invertibility condition is useful, and easy to check:
\begin{lemma} \label{l:invertibilityLocalEllipt}
Let $h \in \mathcal{K}_t^x$. If there exists $s \in [0,t]$ such that
\[
\mbox{span}[\sigma_1, \dots, \sigma_d]_{x_s} = \R^n, \qquad x_s=\varphi^h_s(x_0),
\]
then $C_{x_0}(h)$ is invertible.
\end{lemma}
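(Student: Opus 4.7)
The plan is to exploit the Gramian structure of $C_{x_0}(h)$ and reduce invertibility to a pointwise spanning argument at the distinguished time $s$. Write $v_l(u) := \Phi^h_t(x_0)\Phi^h_u(x_u)^{-1}\sigma_l(x_u)$ for $u \in [0,t]$ and $l=1,\dots,d$, and let $M(u) \in \R^{n \times d}$ be the matrix whose $l$-th column is $v_l(u)$. Then by \eqref{e:mallMatrix},
\[
C_{x_0}(h) = \int_0^t M(u) M(u)^T \, du,
\qquad
\xi^T C_{x_0}(h) \xi = \int_0^t \sum_{l=1}^d \langle \xi, v_l(u) \rangle^2 \, du
\quad \forall \, \xi \in \R^n.
\]
So $C_{x_0}(h)$ is symmetric positive semi-definite; it suffices to prove its kernel is trivial.

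Next, suppose $\xi \in \R^n$ satisfies $\xi^T C_{x_0}(h) \xi = 0$. Then the non-negative integrand vanishes a.e., so $\langle \xi, v_l(u) \rangle = 0$ for every $l$ and almost every $u \in [0,t]$. Because $h \in H_t$ is absolutely continuous and the $\sigma_j$, $b_0$ are smooth, the trajectory $u \mapsto x_u=\varphi^h_u(x_0)$ is continuous, and the flow Jacobian $\Phi^h_u$ solves a linear matrix ODE with $L^2$-integrable coefficients (driven by $\dot{h}$), hence $u \mapsto \Phi^h_u(x_u)^{-1}$ is continuous. Consequently $u \mapsto v_l(u)$ is continuous on $[0,t]$, and the orthogonality $\langle \xi, v_l(u) \rangle = 0$ extends from a.e. $u$ to every $u \in [0,t]$.

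Finally, apply this at the distinguished time $s$ provided by the hypothesis. The linear map $\Phi^h_t(x_0)\Phi^h_s(x_s)^{-1}$ is a composition of flow differentials and is therefore invertible, so it sends a spanning set to a spanning set; since $\{\sigma_1(x_s),\dots,\sigma_d(x_s)\}$ spans $\R^n$ by assumption, $\{v_1(s),\dots,v_d(s)\}$ spans $\R^n$ as well. The condition $\langle \xi, v_l(s) \rangle = 0$ for all $l$ then forces $\xi = 0$, proving that $C_{x_0}(h)$ has trivial kernel and is thus invertible.

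The main (minor) obstacle is justifying the passage from a.e.\ vanishing to pointwise vanishing: one must ensure that the variation-of-constants representation of $\Phi^h_u$ yields a continuous, not merely measurable, map $u \mapsto v_l(u)$, despite the fact that the control $h$ is only absolutely continuous. This follows from standard ODE theory for linear systems with locally integrable coefficients, together with the boundedness assumptions on $b_0$ and derivatives of the $\sigma_j$ on the compact trajectory $\{x_u : u \in [0,t]\}$. Apart from this, the proof is linear-algebraic.
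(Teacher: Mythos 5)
Your proof is correct. The paper itself does not supply a proof — it calls the lemma an easy linear-algebra exercise and points to Bismut (Theorem 1.10) and Deuschel--Friz--Jacquier--Violante (Proposition 2.1) — and your argument (Gramian structure, positive semi-definiteness, continuity of $u \mapsto v_l(u)$ to upgrade a.e.\ vanishing to pointwise vanishing, then invertibility of the flow differentials to propagate the spanning condition from $\sigma_l(x_s)$ to $v_l(s)$) is precisely the standard one those references give. The only stylistic alternative worth mentioning: instead of passing from a.e.\ to everywhere vanishing, one can use the same continuity to observe that the spanning condition holds on a whole neighborhood of $s$, so for $\xi \neq 0$ the integrand is strictly positive on a set of positive measure; this is equivalent and relies on the same facts.
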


\noindent
The proof of Lemma \ref{l:invertibilityLocalEllipt} is an easy linear-algebra exercise, see \cite[Theorem 1.10]{Bism} or \cite[Proposition 2.1]{DFJV:I}.
A sufficient condition for $C_{x_0}(h)$ to be invertible for every $h \neq 0$, stronger than \Horm's condition, is given as condition (H2) in \cite[Chap.1]{Bism}.
\medskip

\textbf{Notation for densities.} We denote
\[
p_t^{\eps}(\cdot) = p_t^{\eps}(x), \qquad x \in \R^n
\]
the density of $X^{\eps}_t$ and
\[
f_t^{\eps}(\cdot) = f_t^{\eps}(y), \qquad y \in \R^l
\]
the density of the $\R^l$-valued projection
\[
Y^{\eps}_t = \Pi_l X^{\eps}_t := (X^{\eps,1}_t, \dots, X^{\eps,l}_t) \in \R^l, \qquad l \le n.
\]
It is clear that 
\begin{equation} \label{e:reprMarginal}
f_t^{\eps}(y) = \int_{\R^{n-l}} p_t^{\eps}(y,z) dz,
\end{equation}
where $p_t^{\eps}(y,z):= p_t^{\eps}((y,z))$.
Note that the (limiting) initial condition $x_0$ is fixed in the present discussion and, in contrast with the usual convention in heat kernel analysis, we do not write $p_t(x_0,x)$ - including the initial condition in the symbol for the density - in order to avoid any confusion between initial and terminal points when writing $p_t^{\eps}(y,z)$ for $(y,z) \in \R^{l} \times \R^{n-l} \cong  \R^n$.

Finally, we denote $|\cdot|$ the infinity norm in $\R^n$, and $B_R(x)$ (resp. $B_R^c(x)$) the associated closed ball of radius $R$ around $x$ (resp. the complementary of the ball).

\subsection{Relation to the works of Takanobu--Watanabe}  
\label{sec:TW}

As is well known, conditional expectation of Wiener-functionals can be
analyzed using Watanabe's pullbacks of delta functions \cite[Ch.V, Sec.9]{IW}.
More specifically, for smooth functions $\phi$
one has
\begin{equation}
\esp \left[ \phi \left( Z_{1}\right) |Y_{1}=y\right] =\frac{\esp \left[ \delta
_{y}\left( Y_{1}\right) G\left( \omega \right) \right] }{\esp \left[ \delta
_{y}\left( Y_{1}\right) \right] } 
\quad
\text{ with }G\left( \omega \right) =\phi
\left( Z_{1}\right), \label{equ:condEgenW}
\end{equation}%
where $\delta_y(\cdot)$ denotes the Dirac delta function at $y$.
When $X=(Y,Z)$ is the solution to a stochastic differential equation, assuming
small noise dynamics of the form%
\begin{equation}
dX_{t} = \varepsilon ^{2}b\left( X_{t}\right) dt+\varepsilon
\sum_{j=1}^{d}\sigma_{j}\left( X_{t}\right)
dW_{t},\,\,X_{0}=x_{0}  \label{equ:TWdX}
\end{equation}%
(cf. \cite[equation (2.1)]{TakaWat}), writing $\left( Y^{\varepsilon
},Z^{\varepsilon }\right) $ to indicate dependence on $\varepsilon >0$,
asymptotic expansions of the form 
\begin{equation}
\esp\left[ \delta _{y}\left( Y_{1}^{\varepsilon }\right) G\left( \varepsilon
,\omega \right) \right] \sim e^{-\Lambda /\varepsilon ^{2}}\varepsilon
^{-(l+m)}c_{0}  \label{e51TW}
\end{equation}%
were obtained in \cite[Thm. 5.1]{TakaWat}, with $c_0 =c_0 (G)$ given in \cite[Equ. (3.2)]{TakaWat}. It is then tempting to combine 
(\ref{equ:condEgenW}) and (\ref{e51TW}) such as to obtain an asymptotic expansion
of 
$$\esp\left[ \phi \left( Z_{1}^{\varepsilon }\right) |Y_{1}^{\varepsilon }=y\right],
$$
in terms of $c_0 (\phi \left( Z_{1}\right)) \, / \, c_0 (1)$.
There are, however, some (serious) obstacles in proceeding this way, a detailed
discussion of which may also help to put this paper's contribution into context. 
Before going into details, recall validity of \cite[Thm. 5.1]{TakaWat}
hings on the following conditions: (A) strong H\"{o}rmander, (B)
non-degeracy of the deterministic Malliavin matrix (essentially a
control-theoretic condition, trivially satisfied in elliptic situations),
(C)\ finite dimension, say $m$, of the space of minimizing controls, where $m=0$ 
means finitely many minimizers,
(D) a certain non-degeneracy of the action functional $I(h)=\frac12|h|_H^2$.\footnote{For $m=0$: each minimizer must be a non-degenerate minimizer of the action; $m>1$, the null-space of the Hessian is assumed to be compatible with the tangent space of the space of minimizers.} 

\begin{itemize}
\item \textbf{Takanobu--Watanabe conditions are difficult to check.} 
Condition (D) in \cite[Thm. 5.1]{TakaWat} 
was left in a ``raw" infinite-dimensional form which the
authors can just about manage in explicit and simple situations \cite[Sec.7]{TakaWat}
related to L\'evy's area.\footnote{On a related note, and with focus on $m=$0,
providing a {\it finite-dimension} criterion (checkable in terms of Hamiltonian ODEs, 
we called this condition ``non-focality") was a key contribution of \cite{DFJV:I,DFJV:II}.}

\item \textbf{Takanobu--Watanabe conditions may not be satisfied. }An
example related to L\'{e}vy's area, with single minimizer ($m=0$) but where
condition (D) fails, is given in \cite[Sec. 7, (III)$_{4}$]{TakaWat}; see
also \cite{DFJV:I} (where this example is shown to be \textit{focal}).
However, an analysis ``by hand" reveals \cite[(7.7)]{TakaWat} a density expansion which implies the correction large deviation behaviour%
\[
\log f_{1}^{\varepsilon }\sim -\Lambda /\varepsilon ^{2}.
\]%
This provides an examples where \cite[Thm. 5.1]{TakaWat} fails to apply, whereas
our Theorem \ref{t:smallNoiseIntro} works.

\item \textbf{The Takanobu--Watanabe setup does not cover our applications.} The
small noise dynamics (\ref{equ:TWdX}) used in Takanobu--Watanabe do not
allow for general $\varepsilon $-dependence in the drift vector field $b$
and initial data $x_{0}$ (with regard to our notation, only the case $b_{\eps} = \eps^2 b$ is covered by \cite{TakaWat}). This generality, however, is crucial in our discussion of local volatility wings, part (ii) of Theorem \ref{t:localVolIntro}, and introduces some non-trivial complications even at the large-deviation level, as seen in part (i) of Theorem \ref{t:BAL} below.

\item \textbf{Possible gap in Takanobu--Watanabe.} According to the recent preprint \cite{Ina14x}, there was no proof available for 
Theorem 2.1. in \cite{TakaWat}, a large deviation result for pinned diffusions measures, on which \cite[Thm 5.1.]{TakaWat} relies.
(A complete proof, based on rough paths, has then been offered by the author of \cite{Ina14x}.)

\end{itemize}

\section{Theoretical main estimates} \label{s:VaradhanSmallNoise}

Ben Arous and L\'eandre \cite[Section 3]{BAL}, showed that the asymptotics of the logarithm of the density for the
small-noise problem \eqref{e:baseSDE} as $\eps \to 0$ might be governed by a different action function (what they call the ``regular'' action)
defined by
\be \label{e:regularAct}
\Lambda_{R,t}(x) = \inf \Bigl\{ \frac12 |h|_H^2 : h \in \mathcal{K}_t^x,
C_{x_0}(h) \mbox{ is invertible}\Bigr\}
\ee
with the convention $\inf \emptyset = \infty$.

\begin{theorem}[\textbf{Ben Arous and L\'eandre \cite{BAL} revisited}] \label{t:BAL}
Consider 
\[
dX^{\varepsilon}=b_{\varepsilon}\left( X^{\varepsilon}_t\right)
dt+\varepsilon \sigma \left( X^{\varepsilon}_t \right) dW, \quad X_{0}^{\varepsilon}=x_{0}^{\varepsilon}
\]
with $x_{0}^{\varepsilon} \rightarrow x_{0}$ as $\eps \to 0$, and $b_{\varepsilon }\rightarrow b$ according to \eqref{e:convergDrift}.
Assume strong H\"{o}rmander condition (sH) at all points, and write $p_{t}^{\varepsilon }\left( x\right) $ for the density of $X^{\eps}_t$.
Then
\begin{itemize}
\item[(i)] the following estimates hold:
\be \label{e:BALupPointwise}
\limsup_{\varepsilon \rightarrow 0}\varepsilon ^{2}\log p_{t}^{\epsilon
}(x)\leq -\Lambda _{t}\left( x\right)  
\ee
and
\begin{equation} \label{e:BALdown}
\liminf_{\varepsilon \rightarrow 0}\varepsilon ^{2}\log p_{t}^{\epsilon
}(x)\geq -\Lambda _{R,t}\left( x\right)  
\end{equation}
for every $x \in \R^n$.
In particular, if $\mathcal{K}_{t}^{x}$ is non-empty for some $x$ and there exists a minimizing control $h_0 \in \mathcal{K}_t^{x}$ such that $C_{x_0}(h_0)$ is invertible, then $\Lambda _{t}\left(
x\right) =\Lambda _{R,t}\left( x\right) <\infty $, so that%
\be \label{e:BALcoincide}
\lim_{\varepsilon \rightarrow 0}\varepsilon ^{2}\log p_{t}^{\epsilon
}(x)=-\Lambda _{t}\left( x\right) .
\ee

\item[(ii)] Assume there exists a minimizing control $h_0 \in \mathcal{K}_t^{\overline x}$ with invertible Malliavin matrix $C_{x_0}(h_0)$.
Then, there exists an open neighborhood $V$ of $\overline x$ such that
\begin{equation} \label{e:BALup}
\limsup_{\varepsilon \rightarrow 0}\varepsilon ^{2}\log p_{t}^{\epsilon
}(x)\leq -\Lambda _{t}\left( x\right)  
\end{equation}
holds uniformly over $x$ in compact sets contained in $V$.
\end{itemize}
\end{theorem}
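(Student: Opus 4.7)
The plan is to adapt the proof strategy of Ben Arous--L\'eandre \cite{BAL} to the present setting; the two new features are the $\eps$-dependent drift $b_\eps \to b_0$ and the $\eps$-dependent initial condition $x_0^\eps \to x_0$. The hypothesis \eqref{e:convergDrift} (with uniform-in-$\eps$ boundedness of $b_\eps$ and all its derivatives) is precisely designed so that, under (sH), the Malliavin--Sobolev norms of $X_t^\eps$, the Lie-bracket-type estimates behind H\"ormander density bounds, and the remainders in any stochastic Taylor expansion around a deterministic flow $\varphi^{h,\eps}$, all stay controlled uniformly in $\eps$. Once this uniformity is secured, the classical arguments apply almost verbatim.

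\textbf{Part (i), upper bound \eqref{e:BALupPointwise}.} I would combine the Freidlin--Wentzell LDP \eqref{e:LDPlawX} (whose extension to the $\eps$-dependent drift and initial condition is by now routine under \eqref{e:initCond}--\eqref{e:convergDriftSimple}) with a Malliavin integration-by-parts representation
\[
p_t^\eps(x) = \esp\bigl[ \indicator_{\{X_t^\eps \in Q(x)\}}\, H^\eps(x) \bigr],
\]
where $Q(x) \subset \R^n$ is a product of half-lines based at $x$ and $H^\eps(x)$ is a Malliavin weight built from the inverse Malliavin covariance matrix of $X_t^\eps$. Kusuoka--Stroock type estimates applied to the rescaled process $\eps^{-1}(X_t^\eps - x_0^\eps)$ give $\|H^\eps(x)\|_{L^p} \le C_p \eps^{-N_p}$, uniformly in $\eps$ and locally uniformly in $x$. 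A H\"older split,
\[
p_t^\eps(x) \le C_p\, \eps^{-N_p}\, \Prob\bigl(X_t^\eps \in B_\delta(x)\bigr)^{1/q},
\]
followed by $\eps \to 0$ (using the LDP), then $\delta \to 0$ and $q \to \infty$, produces \eqref{e:BALupPointwise} via lower semicontinuity of $\Lambda_t$.

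\textbf{Part (i), lower bound \eqref{e:BALdown} and coincidence.} For any $h_0 \in \K_t^x$ with $C_{x_0}(h_0)$ invertible and $\tfrac12 |h_0|_H^2$ close to $\Lambda_{R,t}(x)$, perform the Girsanov shift $W_\cdot \mapsto W_\cdot - h_0(\cdot)/\eps$. Under the new measure $\widetilde W$ is a Brownian motion and $X^\eps$ evolves around $\varphi^{h_0}$, with $\varphi^{h_0}_t = x$. A stochastic Taylor expansion gives
\[
X_t^\eps = x + \eps\, D\varphi^{h_0}_t(x_0)[\widetilde W] + \eps^2 R_t^\eps,
\]
where $R_t^\eps$ has moments of all orders uniformly in $\eps$ thanks to \eqref{e:convergDrift}. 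Invertibility of $C_{x_0}(h_0)$ makes $D\varphi^{h_0}_t(x_0)[\widetilde W]$ a non-degenerate Gaussian, and a Malliavin-based local limit theorem then shows that the density of $\eps^{-1}(X_t^\eps - x)$ under the shifted measure converges, at $0$, to $(2\pi)^{-n/2}(\det C_{x_0}(h_0))^{-1/2}$. Undoing the Girsanov density yields
\[
p_t^\eps(x) \ge \frac{1+o(1)}{(2\pi \eps^2)^{n/2}\sqrt{\det C_{x_0}(h_0)}}\, \exp\!\left(-\frac{|h_0|_H^2}{2\eps^2}\right),
\]
and taking $\eps^2 \log$, then the infimum over admissible $h_0$, gives \eqref{e:BALdown}. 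The coincidence \eqref{e:BALcoincide} is then immediate from the two bounds.

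\textbf{Part (ii) and main obstacle.} Invertibility of $C_{x_0}(h_0)$ at the minimizer means (Bismut) that $D\varphi^{h_0}_t(x_0): H \to \R^n$ is surjective; the implicit function theorem then provides a smooth selection $x \mapsto h(x) \in \K_t^x$ on some neighborhood $V$ of $\bar x$, with $h(\bar x) = h_0$ and $C_{x_0}(h(x))$ invertible throughout. Hence $\Lambda_t$ is upper semicontinuous at $\bar x$ (by construction) and lower semicontinuous (from its variational definition), so continuous on $V$. This continuity is the only extra ingredient needed to upgrade the pointwise estimate \eqref{e:BALupPointwise} to the uniform estimate \eqref{e:BALup} on compact subsets of $V$, via a standard Azencott-type localization. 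The main technical difficulty throughout is the uniformity in $\eps$ of the Malliavin and Girsanov remainder estimates despite the drift varying with $\eps$; this is where \eqref{e:convergDrift} enters in full strength.
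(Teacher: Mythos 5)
Your proposal is correct and takes essentially the same route as the paper: a Malliavin integration-by-parts estimate controlling $p_t^\eps(x)$ by a ball probability (the paper's Lemma \ref{l:keyEstim}, proved via Fourier inversion against a bump function $\varphi_R$; you instead sketch a half-line IBP), combined with the uniform LDP, delivers the upper bounds, while the lower bound and part (ii) follow as you describe via the Girsanov/local-limit argument of Ben Arous--L\'eandre and the implicit-function-theorem continuity of $\Lambda_t$ from Lemma \ref{l:propertiesLambda}(v). One small caveat in your sketch: with $\indicator_{\{X_t^\eps \in Q(x)\}}$ for a product of half-lines, H\"older alone yields $\Prob(X_t^\eps \in Q(x))^{1/q}$, \emph{not} a ball probability; you must build a compactly supported cutoff $\chi_\delta(X_t^\eps)$ into the weight $H^\eps$ (so that the resulting weight vanishes off $B_{2\delta}(x)$) before splitting --- exactly the role played by the localizing function in the paper's Fourier-inversion proof of Lemma \ref{l:keyEstim}.
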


\begin{proof} The proof is given in Appendix \ref{s:app2}. The statement with $b_{\eps}\equiv b_0$ and $x^{\eps}=x_0$, and
without the uniform convergence in \eqref{e:BALup}, is given in Theorem III.1 in \cite{BAL}.
\end{proof}

\begin{remark} \label{r:invertibC}
\emph{If one assumes existence of a minimizing sequence $h_n \in \mathcal{K}_t^x$ 
for \eqref{e:rateFctX} such that $C_{x_0}(h_n)$ is invertible for every $n$, then $\Lambda_{t}(x)=\Lambda _{R,t}\left( x\right)$ immediately follows from the definition of the two actions. 
Under this assumption, \eqref{e:BALcoincide} holds.
In the end, the condition of invertibility of $C_{x_0}(h)$ for all $h$ in $\mathcal{K}_t^x$ (for some point $x$) will be satisfied in our applications.
}
\end{remark}

\noindent
Some additional comments are in order.

\begin{comment} \label{c:BALresult}
\begin{itemize}
\item[(i)]
\emph{In light of Lemma \ref{l:invertibilityLocalEllipt}, if $\sigma_1, \dots, \sigma_d$ span the whole $\R^n$ at either $x_0$ or $x$,
$C_{x_0}(h)$ is invertible at every $h \in \mathcal{K}_t^x$. 
}

\item[(ii)]
\emph{Even when strong \Horm condition is satisfied at all points, $C_{x_0}(h)$ might fail to be invertible on some $h$.
But if $b_0 \equiv 0$ on a neighborhood of $x_0$, the two actions $\Lambda_t$ and $\Lambda_{R,t}$ coincide (see the discussion in \cite[Section 3]{BAL}, using results from \cite{LeMin,LeFibre}).
In this case, \eqref{e:BALcoincide} holds.
}

\item[(iii)]
\emph{In general, the two actions can be different.
In \cite[Section 1]{BAL} an example on $\R^2$ is given, where strong \Horm condition is satisfied at all points, but the two actions $\Lambda_t$ and $\Lambda_{R,t}$ do not coincide.
As a consequence, the classical Varadhan formula \eqref{e:BALcoincide} does not hold at all points.}
\end{itemize}
\end{comment}

\noindent
The following tail bound will be useful in the proof of our main result in the next section.

\begin{proposition} \label{p:tailInt}
Let $t>0$ and $y \in \R^l$ be fixed.
Under the assumption of Theorem \ref{t:BAL}$(i)$, we have, for every $A>0$, every $\overline{z} \in \R^{n-l}$ and every $k \in \mathbb N$
\[
\limsup_{\epsilon \rightarrow 0}\varepsilon ^{2}
\log \int_{
\{ z\in \mathbb{R}^{n-l}:\left\vert z-\overline{z}\right\vert \geq A\}
}
|z|^k p_{t}^{\varepsilon }\left( y,z\right)
dz\leq -\Lambda _{t}(\left\{ \left( y,z\right)
: \left\vert z-\overline{z}\right\vert \geq A\right\}
),
\]%
where with the usual convention, $\Lambda _{t}\left( E\right) =\inf_{x\in E}\Lambda _{t}\left( x\right) $. 
\end{proposition}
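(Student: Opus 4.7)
My strategy is to combine the pointwise upper bound of Theorem \ref{t:BAL}(i) with a two-scale decomposition of the integration domain and the exponential tightness of $X^{\eps}_t$ coming from uniform boundedness of $b_{\eps}$ and the $\sigma_j$. I would fix $R>A$ and split
\[
\int_{|z-\bar z|\ge A}|z|^k p^{\eps}_t(y,z)\,dz
\;=\; \int_{A\le |z-\bar z|\le R}|z|^k p^{\eps}_t(y,z)\,dz
\;+\;\int_{|z-\bar z|>R}|z|^k p^{\eps}_t(y,z)\,dz,
\]
bounding each piece separately and eventually letting $R\uparrow\infty$.

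For the compact annulus $K_R := \{z: A \le |z-\bar z| \le R\}$, the plan is to cover $K_R$ by finitely many balls $B_\delta(z_i)$ with $|z_i - \bar z|\ge A$. The key step is to compare the fixed-$y$ integral $\int_{B_\delta(z_i)} p^{\eps}_t(y,z)\,dz$ to the joint probability $\Prob\bigl(X^{\eps}_t \in B_\delta(y)\times B_\delta(z_i)\bigr)$ divided by $(2\delta)^l$, up to a smoothing error in the $y$-variable. The joint LDP~\eqref{e:LDPlawX} yields
\[
\limsup_{\eps\to0}\eps^2\log\Prob\bigl(X^{\eps}_t \in B_\delta(y) \times B_\delta(z_i)\bigr)
\;\le\; -\Lambda_t\bigl(B_\delta(y) \times B_\delta(z_i)\bigr).
\]
Taking the $\max$ over the finite cover, sending $\delta\downarrow 0$, and using lower semi-continuity of $\Lambda_t$ (with compact sublevel sets under the standing boundedness assumptions) delivers the upper bound $-\Lambda_t(\{(y,z):z\in K_R\})$. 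The polynomial weight $|z|^k$ is bounded on $K_R$ and drops out on the $\eps^2\log$-scale.

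For the far tail $\{|z-\bar z|>R\}$, uniform boundedness of $b_{\eps}$ and the $\sigma_j$ delivers a Gaussian-type tail for $X^{\eps}_t$ via the exponential martingale inequality,
\[
\Prob\bigl(|X^{\eps}_t - x_0^{\eps}|\ge M\bigr) \le C\exp(-cM^2/\eps^2),\qquad M\text{ large},
\]
uniformly in $\eps$. Combining this with the factor $|z|^k$ and the same $y$-smoothing as above gives $\limsup_{\eps\to0}\eps^2\log\int_{|z-\bar z|>R}|z|^k p^{\eps}_t(y,z)\,dz \le -c'R^2$ for some $c'=c'(t)>0$. Choosing $R$ large enough that $c'R^2$ exceeds $\Lambda_t(\{(y,z):|z-\bar z|\ge A\})$ renders the far tail negligible; the conclusion then follows by sending $R\uparrow\infty$ in the annulus bound, since $\Lambda_t(\{(y,z):z\in K_R\})$ increases to $\Lambda_t(\{(y,z):|z-\bar z|\ge A\})$.

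The main technical obstacle is the \emph{smoothing-in-$y$} step invoked in both pieces: because $p^{\eps}_t$ concentrates as $\eps\to 0$, the pointwise value $p^{\eps}_t(y,\cdot)$ is not in general comparable to its $y$-average $(2\delta)^{-l}\int_{B_\delta(y)} p^{\eps}_t(y',\cdot)\,dy'$ in absolute terms. On the $\eps^2\log$-scale the volume factor $(2\delta)^{-l}$ is harmless, but rigorously closing the comparison appears to require a uniform Malliavin-type regularity estimate for $y\mapsto p^{\eps}_t(y,z)$, in the spirit of the integration-by-parts bounds that underlie Theorem~\ref{t:BAL} itself.
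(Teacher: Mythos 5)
You have correctly diagnosed the structure of the problem, and you have also correctly identified the gap in your own argument: the ``smoothing-in-$y$'' step. The paper fills exactly that gap with Lemma~\ref{l:keyEstim} (Appendix~\ref{s:app2}), which gives the quantitative hypoelliptic estimate
\[
p^{\eps}_t(x) \le C_q \bigl(1 + R^{-N_1(q)}\bigr)\, \eps^{-N(q)}\, \Prob\bigl(|X^{\eps}_t - x| \le R\bigr)^q
\]
for any $q\in(0,1)$, proved via Fourier inversion of $\widehat{p^\eps_t\varphi_R}$ and Malliavin integration-by-parts weights $H_\alpha(X^\eps_t,\varphi_R(X^\eps_t))$, whose $L^2$ norms are controlled uniformly in $\eps$ thanks to a Kusuoka--Stroock type bound on $\esp[\det(\gamma^\eps_t)^{-r}]$. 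The polynomial prefactors $R^{-N_1(q)}\eps^{-N(q)}$ vanish on the $\eps^2\log$-scale, which is why this estimate is exactly adapted to the Varadhan regime. Without such a lemma your comparison between the marginal slice $p^\eps_t(y,\cdot)$ and a ball probability does not close, and you correctly say so; but that means what you have is a plan, not a proof.

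A secondary difference: the paper does not decompose the domain into a compact annulus plus a far tail. Instead it applies Lemma~\ref{l:keyEstim} to the whole region $\{|z-\bar z|\ge A\}$, inserts $\Prob(|X^\eps_t-(y,z)|\le R) \le \Prob(|Y^\eps_t-y|\le R,\ |Z^\eps_t-z|\le R)$, splits the exponent $q$ into $qq'+q(1-q')$, pulls out the uniform factor $\Prob(|Y^\eps_t-y|\le R,\ |Z^\eps_t-\bar z|\ge A-R)^{qq'}$, and shows the remaining integral $\int |z|^k \Prob(|Z^\eps_t - z|\le R)^{q(1-q')}\,dz$ is bounded uniformly in $\eps$ via Markov's inequality and uniform $L^r$-moment bounds on $Z^\eps_t$ --- no Gaussian tail estimate and no finite cover of an annulus is needed. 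Your Gaussian-tail argument for the far piece would also work under the standing boundedness assumptions, but it is stronger than necessary; the paper only needs polynomial decay at a fractional power, for which plain moment bounds suffice. The conclusion then follows by sending $R\downarrow 0$ and $q,q'\uparrow 1$, using lower semi-continuity of $\Lambda_t$ --- note the limit is $R\downarrow 0$, not $R\uparrow\infty$ as in your plan, because $R$ plays a different role (ball radius in the key lemma, not truncation radius).

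In short: the decomposition and covering scheme you propose is not the one the paper uses, and the step you flag as ``the main technical obstacle'' is in fact unresolved in your proposal. Supplying Lemma~\ref{l:keyEstim} (or an equivalent uniform-in-$\eps$ density-to-probability estimate) is the substantive content of the proof; once you have it, the rest simplifies considerably and no annulus/tail split is necessary.
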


\begin{proof}
Given in Appendix \ref{s:app2}.
\end{proof}
\medskip

\noindent
Crucial for the applications, the optimal control problem \eqref{e:rateFctX} defining the action function $\Lambda_t$ can be rephrased in terms of the Hamiltonian formalism.
The following proposition provides necessary optimality conditions for the controls in $\mathcal{K}_t^{(y, \cdot)}$ when $y$ is fixed, in the spirit of Pontryagin's maximum principle: as such, it appears as a generalization of the corresponding result in Bismut \cite{Bism}, from a point-to-point setting ($x_0 \in \R^n$ to $x \in \R^n$) to a point-to-subspace ($x_0 \in \R^n$ to $N_y := (y,\cdot)$, $y \in \R^l$) setting.
Let us introduce the Hamiltonian
\[
\mathcal{H}(x,p) = \langle b_0(x), p \rangle_{\R^n} + \frac12 \sum_{j=1}^d \langle \sigma_j(x), p \rangle^2_{\R^n}.
\]

\begin{proposition}[see Proposition 2 in \cite{DFJV:I}] \label{p:pmp}
Fix $y \in \R^l$, and assume $h_0 \in \mathcal{K}_t^{(y, \cdot)}$ is an optimal control for the problem
\[
\Lambda_t(N_y) = \inf \Bigl\{ \frac12 |h|_H^2 : h \in \mathcal{K}_t^{(y,\cdot)} \Bigr\} \qquad N_y=(y, \cdot).
\]
Moreover, assume the deterministic Malliavin matrix $C_{x_0}(h_0)$ is invertible.
Then, there exists a unique $\overline{p}_0$ such that $\varphi^{h_0}_s(x_0) = x_s$ for all $s \in [0,t]$, where $(x_s,p_s)_{s \le t}$ solves the Hamiltonian ODEs
\be \label{e:hamiltonODE}
\left(\begin{array}{c} \dot{x}_s \\ \dot{p}_s \end{array} \right)
= 
\left(\begin{array}{c} \partial_p\mathcal{H}(x_s,p_s)\\ -\partial_x \mathcal{H}(x_s,p_s) \end{array} \right)
\ee
subject to the (initial-, terminal- and transversality-) boundary conditions
\begin{eqnarray}  \notag
x_0=x_0 \in \R^n, & x_t=(y,\cdot) \in \R^l \times \R^{n-l}
\\ \label{e:transvers}
p_0 = \overline{p}_0 \in \R^n, & p_t=(\cdot,0) \in \R^l \times \R^{n-l}.
\end{eqnarray}
Furthermore, the control $h_0$ is restored as
\[
\dot{h}^j_0(s) = \langle \sigma_j(x_s), p_s \rangle, \quad j = 1, \dots, d
\]
and $\Lambda_t(N_y)=\frac12|h_0|_H^2$.
\end{proposition}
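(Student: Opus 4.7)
The approach is to view $\Lambda_t(N_y)$ as a smooth constrained optimization on the Hilbert space $H$, apply the Lagrange multiplier theorem, and then translate the resulting Euler--Lagrange equation into Hamiltonian form via the adjoint flow. Set $J(h) = \frac12 |h|_H^2$ and $G(h) := \Pi_l \varphi^h_t(x_0) - y$, where $\Pi_l : \R^n \to \R^l$ is the projection onto the first $l$ coordinates, so that $\Lambda_t(N_y) = \inf \{ J(h) : G(h) = 0\}$. Both $J$ and $G$ are $C^\infty$ on $H$ (differentiability of $G$ is the Bismut result recalled above), and the key preliminary is to check that $DG(h_0) = \Pi_l \circ D\varphi^{h_0}_t(x_0)$ is surjective onto $\R^l$. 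By Bismut's criterion, invertibility of $C_{x_0}(h_0)$ is equivalent to $D\varphi^{h_0}_t(x_0) : H \to \R^n$ having full rank $n$; composing with the surjection $\Pi_l$ preserves surjectivity onto $\R^l$. The Lagrange multiplier theorem then yields a unique $\lambda \in \R^l$ with
\[
\langle \dot h_0, \dot k \rangle_{L^2} = \langle \lambda, \Pi_l D\varphi^{h_0}_t(x_0)[k] \rangle_{\R^l}, \qquad \forall k \in H.
\]

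Next, I would translate this stationarity condition into the Hamiltonian system. Using the integral representation of $D\varphi^{h_0}_t(x_0)[k]$ recalled just before \eqref{e:mallMatrix}, set
\[
p_t := (\lambda, 0) \in \R^l \times \R^{n-l},
\]
which immediately encodes the transversality condition \eqref{e:transvers}, and define $p_s$ as the back-propagation of $p_t$ through the adjoint of the linearized flow along $(x_s)_{s \le t}$. Substituting into the multiplier equation and using that $\dot k \in L^2$ is arbitrary gives the control-recovery formula $\dot h_0^j(s) = \langle \sigma_j(x_s), p_s\rangle$ for $j = 1, \dots, d$. Differentiating the definitions of $x_s$ and $p_s$ and using the ODE satisfied by the inverse Jacobian along the reference trajectory, one checks that $(x_s, p_s)$ solves \eqref{e:hamiltonODE}: the position equation is \eqref{e:detFlow} with $\dot h_0$ as above, which rearranges to $\dot x_s = \partial_p \mathcal{H}(x_s, p_s) = b_0(x_s) + \sum_j \langle \sigma_j(x_s), p_s\rangle\, \sigma_j(x_s)$, while the adjoint equation $\dot p_s = -\partial_x \mathcal{H}(x_s, p_s)$ follows from the standard differentiation of the back-propagation identity.

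Uniqueness of $\overline{p}_0$ is then immediate: the transversality condition forces $p_t$ to take the form $(\lambda, 0)$, and $\lambda$ is unique because $DG(h_0)$ is surjective (hence its adjoint is injective, pinning down $\lambda$ from $\dot h_0$); the value $\overline{p}_0 = p_0$ is obtained by integrating the Hamiltonian ODE backward, a well-posed terminal value problem. The identity $\Lambda_t(N_y) = \frac12 |h_0|_H^2$ is just the definition of $h_0$ as a minimizer. The main delicate point I expect is the clean bookkeeping for the transversality condition and verifying that the back-propagated covector $p_s$ satisfies exactly $\dot p_s = -\partial_x \mathcal{H}(x_s, p_s)$; everything else is a direct application of the Lagrange multiplier theorem together with the representation formulas already recorded in the excerpt, and reduces in the point-to-point case $l = n$ to the classical statement of Bismut \cite{Bism}.
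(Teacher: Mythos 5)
The paper itself gives no proof of Proposition \ref{p:pmp}: it is stated with the attribution ``see Proposition 2 in \cite{DFJV:I}'' and the argument is entirely deferred to that reference. Your proposal supplies the expected proof and is essentially correct: the constrained-optimization / Lagrange multiplier route you take is precisely the standard derivation of the Pontryagin maximum principle for the smooth endpoint map, and the invertibility of $C_{x_0}(h_0)$ enters exactly where you put it (Bismut's equivalence with full rank of $D\varphi^{h_0}_t(x_0)$, hence surjectivity of $DG(h_0)=\Pi_l\circ D\varphi^{h_0}_t(x_0)$). The reduction of the multiplier identity to $\dot h_0^j(s)=\langle\sigma_j(x_s),p_s\rangle$ via the integral representation of $D\varphi^{h_0}_t(x_0)[k]$ and the definition $p_s:=(\Phi^{h_0}_t(x_0)\Phi^{h_0}_s(x_s)^{-1})^*(\lambda,0)$ is right, and the computation verifying $\dot p_s=-\partial_x\mathcal H(x_s,p_s)$ from the ODE satisfied by the inverse Jacobian is correct. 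The one place where you are a bit terse is the \emph{uniqueness} of $\overline p_0$: your argument cleanly shows that the Lagrange multiplier $\lambda\in\R^l$ (hence $p_t=(\lambda,0)$, hence $p_0$) is uniquely determined by the normal equation $DJ(h_0)=DG(h_0)^*\lambda$ together with injectivity of $DG(h_0)^*$, but you do not explicitly close the loop on why \emph{any} Hamiltonian trajectory with $x_\cdot=\varphi^{h_0}_\cdot(x_0)$ and $p_t\in(\cdot,0)$ must produce a covector satisfying this normal equation — in the strictly hypoelliptic case, matching $\dot x_s$ with $\partial_p\mathcal H(x_s,p_s)$ only constrains the pairings $\langle\sigma_j(x_s),p_s\rangle$ up to the kernel of $\sum_j\sigma_j(x_s)\otimes\,\cdot$, so this step deserves a sentence tying the $p$-equation back to the adjoint transport $p_s=J_{t,s}^*p_t$. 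Apart from that small gap, your proof stands on its own and reduces, as you observe, to Bismut's point-to-point statement when $l=n$.
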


\begin{remark} \label{r:charMinimizer}
\emph{If $x_t=(y, z_t)$ is the terminal value of the $x$-component of a solution to \eqref{e:hamiltonODE}, then $z_t$ is a minimizer of the map $z \mapsto \Lambda_t(y, \cdot)$.}
\end{remark}
\medskip

\noindent
Finally, the following lemma summarizes some properties of the control system \eqref{e:detFlow} and of the action $\Lambda_t$ that will be extensively used throughout the paper.

\begin{lemma} \label{l:propertiesLambda}
Assume the vector fields $b_0$ and $(\sigma_j)_j$ are Lipschitz continuous, with Lipschitz constant $K$.
Denote
$\varphi^h_{\cdot}(x_0)$ the solution of the ODE \eqref{e:detFlow} on the interval $[0,t]$, and $\Lambda_t$ the action of the system as in \eqref{e:rateFctX}.
Then
\begin{enumerate}
\item[(i)] For every $t$, the map $h \to \varphi_t^h(x_0)$ is weakly continuous from $H$ into $\R^n$.
Moreover, there exists a positive constant $C(t,x_0,K)$, increasing in $K$, such that
\be \label{e:gronwallSolutionMap}
\sup_{s \le t} |\varphi^h_s(x_0)| \le C(t,x_0,K) \: e^{C(t,x_0,K) |h|_H}
\ee
for every $h\in H$.

\item[(ii)] $\Lambda_t$ is a good rate function of large deviations theory: that is, for every $l \ge0$ the level sets $\{x: \Lambda_t(x) \le l \}$ are compact.
In particular, $\Lambda_t$ is lower semi-continuous.

\item[(iii)] if $\K_{t}^x \neq \emptyset$, the infimum in \eqref{e:rateFctX} is attained: that is, there exists a minimizing control $h_0 \in \K_{t}^x$ such that $\Lambda_t(x)=\frac12 |h_0|_H^2$.
\end{enumerate}

Assume moreover that the vector fields are $C^{\infty}_b$ (bounded with bounded derivatives). Then

\begin{enumerate}
\item[(iv)] If the $(\sigma_j)_j$ satisfy the strong \Horm condition (sH) at all $x$, then $\K_{t}^x \neq \emptyset$ for every $x$, and $x \mapsto \Lambda_t(x)$ is finite on $\R^n$.

\item[(v)] If there exists a minimizing control $h_0 \in \K_{t}^{\overline{x}}$, $\overline{x} \in \R^n$, with invertible Malliavin matrix $C_{x_0}(h_0)$, then there exists a neighborhood $V$ of $\overline{x}$ such that $\Lambda_t$ is continuous on $V$.
\end{enumerate}
\end{lemma}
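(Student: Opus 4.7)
I would address the five statements in order, since each builds on the previous ones, and single out (v) as the main obstacle.

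For (i), write \eqref{e:detFlow} in integral form, bound $|b_0|,|\sigma_j|$ by the linear growth implied by Lipschitz continuity, estimate $\int_0^s \dot h_r^j\,dr \le s^{1/2}|h|_H$ by Cauchy--Schwarz, and apply Gronwall to obtain \eqref{e:gronwallSolutionMap}. For the weak continuity, take $h_n \rightharpoonup h$ in $H_t$; the norms $|h_n|_H$ are uniformly bounded, so \eqref{e:gronwallSolutionMap} gives a uniform bound on $\varphi^{h_n}$, and equicontinuity follows from the integral form of the ODE. Arzel\`a--Ascoli extracts a uniformly convergent subsequence $\varphi^{h_{n_k}} \to \psi$; passing to the limit in the integral equation---the drift term by dominated convergence, and the controlled term by pairing the strongly $L^2$-convergent $\sigma_j(\varphi^{h_{n_k}})$ with the weakly $L^2$-convergent $\dot h_{n_k}^j$---identifies $\psi = \varphi^h$, and a subsequence argument upgrades this to convergence of the full sequence.

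Items (ii) and (iii) are then immediate. For (ii), the level set $\{\Lambda_t \le l\}$ coincides with the image of the weakly compact ball $\{|h|_H^2 \le 2l\}$ under the weakly continuous endpoint map $h \mapsto \varphi^h_t(x_0)$, hence is compact; this gives the good-rate-function property and, equivalently, lower semi-continuity. For (iii), a minimizing sequence $h_n \in \K_t^x$ is norm-bounded, admits a weak subsequential limit $h_0$, belongs to $\K_t^x$ by weak continuity, and realizes the infimum by weak lower semi-continuity of $|\cdot|_H$.

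For (iv), the strong H\"ormander hypothesis (sH) is exactly the bracket-generating condition of the Chow--Rashevskii theorem, so the driftless system is completely controllable on $\R^n$ in any positive time. The smooth bounded drift $b_0$ does not destroy this: any desired displacement over $[0,t]$ can be achieved by correcting the driftless control for the effect of $b_0$, using that the endpoint map is a $C^\infty$ submersion at generic configurations and the $L^2$-norm of controls is unrestricted. Hence $\K_t^x \ne \emptyset$ and $\Lambda_t(x) < \infty$ for every $x$.

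The main obstacle is (v). By Bismut's identity invoked earlier in the excerpt, invertibility of $C_{x_0}(h_0)$ is equivalent to surjectivity of $D\varphi^{h_0}_t(x_0):H_t \to \R^n$, so the endpoint map is a submersion at $h_0$. The implicit function theorem in Hilbert spaces then provides a neighborhood $V$ of $\overline x$ and a $C^\infty$ right-inverse $\xi:V \to H_t$ with $\xi(\overline x) = h_0$ and $\varphi^{\xi(x)}_t(x_0) = x$, giving the smooth upper bound
\[
\Lambda_t(x) \le \tfrac12|\xi(x)|_H^2, \qquad x \in V,
\]
which matches $\Lambda_t(\overline x)$ at $\overline x$. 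Combined with the lower semi-continuity from (ii), this yields continuity of $\Lambda_t$ at $\overline x$. To propagate continuity to all of $V$, I would shrink $V$ so that $C_{x_0}(\xi(x))$ remains invertible (using continuity of $h \mapsto C_{x_0}(h)$ and openness of invertibility), then, for each $x^\ast \in V$, pick a minimizer $h^\ast$ at $x^\ast$ provided by (iii) whose norm is controlled by $\tfrac12|\xi(x^\ast)|_H^2$, and run a weak-to-strong convergence argument on minimizing sequences (using equality of norms to upgrade weak convergence) to transfer the IFT construction from $h_0$ to an invertible-Malliavin control close to $h^\ast$. Ensuring that invertibility of the Malliavin matrix persists along \emph{some} minimizer at every nearby point---so that the local upper bound can be re-centered around each $x^\ast \in V$---is the delicate step on which I would concentrate most of the technical work.
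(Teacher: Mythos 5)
Your proposal for items (i)--(iii) follows exactly the paper's route: Gronwall for \eqref{e:gronwallSolutionMap}, weak continuity of the endpoint map (the paper outsources this to Bismut; your Arzel\`a--Ascoli plus weak/strong pairing argument is the standard way to spell it out), and then (ii) and (iii) fall out as continuous images of weakly compact balls and weak lower semi-continuity of the $H$-norm. These match the paper.

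For item (iv) the paper simply cites Jurdjevic's controllability theorem for affine systems, while your justification is heuristic and not really a proof. Two specific soft spots: (a) ``correcting the driftless control for the effect of $b_0$'' is not an additive correction---the drift's contribution depends on the whole trajectory, which in turn depends on the control, so one cannot just subtract it away; (b) invoking that ``the endpoint map is a $C^\infty$ submersion at generic configurations'' is both nontrivial to prove (it is, essentially, the content of (v) at certain points) and circular for the purpose at hand: you are trying to establish that $\mathcal K^x_t\neq\emptyset$, which is strictly weaker than the endpoint map being a submersion. The correct route, as the paper does, is to appeal to the standard controllability theorem for bracket-generating driftless vector fields augmented by a drift: one uses arbitrarily large controls on arbitrarily short subintervals to dominate the drift, exploiting that the driftless reachable set over time $\varepsilon$ has diameter $\sim\sqrt\varepsilon$ while the drift contributes $O(\varepsilon)$. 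Your conclusion is right, but the stated reasons would not survive scrutiny.

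For item (v) your plan starts exactly where the paper (via Cannarsa--Rifford) starts: Bismut's equivalence between invertibility of $C_{x_0}(h_0)$ and surjectivity of $D\varphi^{h_0}_t(x_0)$, then the implicit function theorem, giving a smooth right-inverse $\xi$ and the upper bound $\Lambda_t \le \tfrac12|\xi(\cdot)|_H^2$ on $V$. This bound is tight only at $\overline x$, so it gives upper semi-continuity (hence, with (ii), continuity) at $\overline x$ alone. You correctly flag that propagating this to a full neighborhood---so that the Lemma's statement ``$\Lambda_t$ is continuous on $V$'' and not merely ``continuous at $\overline x$'' is obtained---is the delicate step. Your sketch (shrinking $V$ so $C_{x_0}(\xi(x))$ stays invertible, then a weak-to-strong convergence argument on minimizers at nearby points) does not close this gap: strong convergence of a minimizing sequence $h_n^*$ at $x_n\to\overline x$ only identifies the limit as \emph{some} minimizer at $\overline x$, which need not be $h_0$ and need not have invertible Malliavin matrix if minimizers at $\overline x$ are not unique. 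So the step you isolate is genuinely unresolved in your write-up; the paper's proof does not resolve it either but defers entirely to [Cannarsa--Rifford, Prop.\ 3.2], where the full (semiconcavity-type) argument lives. In short: you identify the right obstacle and the right tool, but the proposal as written does not complete (v), and in (iv) the reasoning would need to be replaced by the cited controllability theorem.
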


\begin{proof}
$(i)$. Weak continuity with respect to the control parameter is classical.
See Bismut \cite[Theorem 1.1]{Bism} for the case of smooth vector fields: the case of Lipschitz continuous coefficients is
handled analogously: in essence, the continuity property and estimate \eqref{e:gronwallSolutionMap} follow from an application of Gronwall's lemma.
See also \cite[proof of Lemma 2.5]{BaldiCar} for estimate \eqref{e:gronwallSolutionMap}.
$(ii)$ is a direct consequence of $(i)$: use $\{x:\Lambda_t(x) \le l \} = \{\varphi^h_t(x_0): |h|^2_H \le 2 l \}$, and the latter set is compact since (weakly) continuous image of a (weakly) compact set.
$(iii)$ is a direct consequence of $(i)$: indeed, assume $\hat{h} \in \mathcal{K}_{t}^x$.
Then the infimum in \eqref{e:rateFctX} is in fact taken over the set $\{h: \varphi_t^h(x_0)=x, |h|_H \le |\hat{h}|_H \}$, which is weakly compact; since the norm $|\cdot|_H$ is weakly lower semi continuous, $\frac12 |h|_H^2$ attains its minimum on this set.
$(iv)$. The non-emptiness of $K_{t}^x$ (therefore, finiteness of $\Lambda_t$) under strong \Horm condition is a classical result of controllability: see e.g. \cite[Theorem 2, p. 106]{Jurdjevic97} for the affine control system with drift that we consider here.
$(v)$. In light of $(ii)$, it is sufficient to prove that $\Lambda_t$ is upper semi-continuous.
Under the assumption of existence of a minimizing control $h_0$ with invertible Malliavin matrix, upper semi-continuity is proven as in the second part of \cite[Proposition 3.2]{CannRiff}: as it is typical in the geometrical control setting, the key point is the implementation of the Implicit Function Theorem locally around $\overline{x}$, which is made possible by the fact that the linear map $D\varphi^{h_0}_t(x_0): H \to \R^n$ has full rank.
\end{proof}

\begin{remark}[\emph{On point $(v)$ of Lemma \ref{l:propertiesLambda}}] \label{r:noDriftLambdaContinuous}
\emph{
When $b_0\equiv 0$ in \eqref{e:detFlow} and strong \Horm condition (sH) holds, it is classical that $x \mapsto \Lambda_t(x)$ is finite and continuous on $\R^n$, without any further assumption about the existence of minimizers with invertible Malliavin matrix.
This statement is equivalent to well-known continuity of the Carnot-Carath\'eodory distance on a sub-Riemannian manifold (here: $\R^n$ equipped with the control distance induced by \eqref{e:detFlow} with $b_0 \equiv 0$).
A standard proof, based on the small-time local controllability of driftless control systems, is provided for example in Bismut \cite[Theorem 1.14]{Bism}.
For affine control systems with non-zero drift as \eqref{e:detFlow}, the continuity of $\Lambda_t$ is not, in general, a consequence of \Horm condition.
In \cite[Section 2]{AgrachevLeeContinuity} an example is provided, where strong \Horm condition holds at all points, and the function $\Lambda_t$ fails to be continuous.
}
\end{remark}

\begin{remark}
\emph{
If the function $\Lambda_t$ is known to be continuous on $\R^n$, estimate \eqref{e:BALup} in Theorem \ref{t:BAL} holds uniformly over $x$ in compact sets of $\R^n$, without any further assumption.
}
\end{remark}

\subsection{The conditioned diffusion} \label{s:conditional}

Denote $Z^{\eps}_t:= (X^{\eps,l+1}_t, \dots, X^{\eps,n}_t)$ the projection of $X^{\eps}_t$ over the last $n-l$ components, 
so that
\[ 
X^{\eps}_t = (Y^{\eps}_t, Z^{\eps}_t).
\]
We write
\[
\mathcal{L} \left( Z_t^{\varepsilon} | Y_t^{\varepsilon}=y \right)
\]
for the law of $Z^{\eps}_t$ conditional on $Y_t^{\varepsilon}$ being at level $y \in \R^l$ at time $t$.
If $f^{\eps}_t(y) > 0$, this is well-defined via
\[
\esp[\varphi(Z^{\eps}_t) | Y_t^{\eps}=y ] = \int_{R^{n-l}} \varphi(z) g^{\eps}(z) dz
\]
for all $\varphi \in C_b(\R^{n-l})$, where 
\begin{equation} \label{e:conditDens}
g^{\eps}(z) = g^{\eps}_{t,y}(z) := \frac{ p^{\eps}_t(y,z) }{ f^{\eps}_t(y) }
\end{equation}
is the density of $Z_t^{\varepsilon}$ conditional on $Y_t^{\varepsilon}=y $.

\begin{theorem} \label{t:conditionalLaw}
Consider $X^{\varepsilon }=\left( Y^{\varepsilon},Z^{\varepsilon }\right) \in \R^l \times \R^{n-l} \cong \R^n$
given by 
\be \label{e:SDEmainTheorem}
dX^{\varepsilon}_t=b_{\varepsilon }\left( X^{\varepsilon}_t\right)
dt+\varepsilon \sum_{j=1}^d \sigma_j \left( X^{\varepsilon}_t\right) dW^j_t, \ X_{0}^{\varepsilon
}=x_{0}^{\varepsilon},
\ee
with $x_0^{\eps} \to x_0$ and $b_{\varepsilon }\rightarrow b_0$ according to \eqref{e:convergDrift}, and assume strong H\"{o}rmander condition (sH) at all points.
Fix $y\in \mathbb{R}^{l}$ and $t>0$, and set $N_y=(y,\cdot)$.
Assume that there exists a unique minimizer $z^*=z_{t}^{\ast}(y)$ for the problem\footnote{The
existence of a minimizer follows from the lower semi-continuity and compactness of the level sets of the map $z \mapsto \Lambda_t (y,z)$.
}
\be \label{e:argMin}
(y,z^*) :=
\mbox{argmin}_{x\in N_y}\Lambda
_{t}(x),
\ee
and assume that for every $z$ in a neighbourhood of $z_{t}^{\ast}(y)$
there exists a minimizing control $h_0 \in \mathcal{K}_{t}^{(y,z)}$ with invertible deterministic Malliavin matrix $C_{x_{0}}(h_0)$, as defined in \eqref{e:mallMatrix}.
Then, $f^{\eps}_t(y) > 0$ and
\[
\mathcal{L}\left( Z_{t}^{\varepsilon }|Y_{t}^{\varepsilon }=y\right)
\Longrightarrow \delta _{z_{t}^{\ast }(y)} \qquad \text{as }\varepsilon
\downarrow 0
\]
in the sense of weak convergence of probability measures on $\mathbb{R}^{n-l}
$, i.e. for all $\phi \in C_{b}( \mathbb{R}^{n-l}) $,%
\be \label{e:phiConv}
E\left[ \phi \left( Z_{t}^{\varepsilon }\right) |Y_{t}^{\varepsilon }=y%
\right] \rightarrow \phi \left( z_{t}^{\ast }(y)\right) \qquad \text{ as }%
\varepsilon \downarrow 0.
\ee
\end{theorem}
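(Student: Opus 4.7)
The strategy is to reduce the weak convergence to the quantitative statement that, for any fixed $\phi\in C_b(\mathbb{R}^{n-l})$ and any $\eta>0$,
\[
\bigl| E[\phi(Z_t^\varepsilon)\,|\,Y_t^\varepsilon=y] - \phi(z^*)\bigr| \le \eta + o(1) \quad \text{as } \varepsilon \to 0.
\]
The natural decomposition is
\[
E[\phi\,|\,Y_t^\varepsilon=y] - \phi(z^*) = \frac{1}{f_t^\varepsilon(y)}\Bigl(\int_{B_\delta(z^*)} + \int_{B_\delta(z^*)^c}\Bigr)(\phi(z) - \phi(z^*))\, p_t^\varepsilon(y,z)\,dz.
\]
I pick $\delta>0$ small enough so that: (a) $B_\delta(z^*)$ lies inside the neighbourhood provided by the hypothesis, where each $(y,z)$ admits a minimizing control with invertible Malliavin matrix and $\Lambda_t(y,\cdot)$ is continuous (Lemma \ref{l:propertiesLambda}(v)) with $\Lambda_t(y,z) \le \Lambda_t(N_y) + \eta$ on $B_\delta(z^*)$; (b) $|\phi(z) - \phi(z^*)| < \eta$ on $B_\delta(z^*)$; (c) by uniqueness of the minimizer together with goodness and lower semi-continuity of $\Lambda_t$, there is $\kappa=\kappa(\delta)>0$ with $\Lambda_t(\{(y,z): |z-z^*|\ge\delta\}) \ge \Lambda_t(N_y) + 2\kappa$.

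The complement integral is handled directly by Proposition \ref{p:tailInt} (applied with $k=0$, $\bar z=z^*$, $A=\delta$), which gives
\[
\limsup_{\varepsilon\to 0}\varepsilon^2\log\int_{|z-z^*|\ge\delta} p_t^\varepsilon(y,z)\,dz \le -\Lambda_t(N_y)-2\kappa.
\]
To conclude I need to bound the denominator from below, i.e.\ establish
\[
\liminf_{\varepsilon\to 0}\varepsilon^2\log f_t^\varepsilon(y) \ge -\Lambda_t(N_y).
\]
This is the crux. At any $z \in B_\delta(z^*)$, the invertibility assumption forces $\Lambda_{R,t}(y,z) = \Lambda_t(y,z)$, so Theorem \ref{t:BAL}(i) yields the pointwise lower bound $\varepsilon^2\log p_t^\varepsilon(y,z) \ge -\Lambda_t(y,z) - o_z(1)$. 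Upgrading this to an integral lower bound on $\int_{B_\delta(z^*)} p_t^\varepsilon(y,z)\,dz$ is the technical heart of the argument: the pointwise errors $o_z(1)$ depend on $z$, so one cannot integrate naively. The plan is to combine the uniform upper bound of Theorem \ref{t:BAL}(ii) on $B_\delta(z^*)$ (which serves as an integrable envelope) with continuity of $\Lambda_t$, to pass to the liminf under the integral via a Fatou/Egorov-type argument; the small-volume factor of $B_\delta$ becomes negligible on the exponential scale, yielding $\int_{B_\delta(z^*)} p_t^\varepsilon\,dz \ge \exp(-(\Lambda_t(N_y)+2\eta)/\varepsilon^2)$ for $\varepsilon$ small. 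In particular this shows $f_t^\varepsilon(y)>0$ eventually.

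With these two estimates in hand, the first piece of the decomposition is bounded in absolute value by $\eta\, f_t^\varepsilon(y)$ by choice of $\delta$, contributing at most $\eta$ after dividing by $f_t^\varepsilon(y)$. For the second piece, bounded by $2\|\phi\|_\infty\int_{B_\delta^c} p_t^\varepsilon/f_t^\varepsilon$, the tail bound and the lower bound on $f_t^\varepsilon(y)$ together yield $\limsup_\varepsilon \varepsilon^2 \log(\text{ratio}) \le -2\kappa + 2\eta$, which is strictly negative for $\eta<\kappa$, so this piece is $o(1)$. Letting $\eta\downarrow 0$ at the end gives \eqref{e:phiConv}. The main obstacle is the uniform lower bound leading to $\liminf \varepsilon^2\log f_t^\varepsilon(y)\ge -\Lambda_t(N_y)$: the assumption of an invertible Malliavin matrix at minimizing controls for \emph{every} $z$ near $z^*$ (and not just at $z^*$ itself) is used precisely at this step, to make the pointwise Varadhan lower bound of Theorem \ref{t:BAL}(i) available at a whole open set of targets rather than an isolated point.
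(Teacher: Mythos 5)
Your overall decomposition — a small ball $B_\delta(z^*)$ plus its complement, with the tail handled by Proposition~\ref{p:tailInt} and the oscillation of $\phi$ controlling the near piece — follows the paper's proof (your one-shot application of Proposition~\ref{p:tailInt} to the full complement $\{|z-z^*|\ge\delta\}$ is in fact a mild simplification of the paper's Steps~2 and~3, which split the complement into an annulus and an outer tail, and it is correct). You also correctly identify the crux: the lower bound $\liminf_\varepsilon\varepsilon^2\log f_t^\varepsilon(y)\ge -\Lambda_t(N_y)$, which in the paper is isolated as Proposition~\ref{p:marginVaradh}. But your sketch of this step does not close, and it is the only genuinely nontrivial ingredient. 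You propose to ``combine the uniform upper bound of Theorem~\ref{t:BAL}(ii) on $B_\delta(z^*)$ (which serves as an integrable envelope) ... to pass to the liminf under the integral via a Fatou/Egorov-type argument.'' This is the wrong direction: for a Fatou inequality $\liminf\int\ge\int\liminf$ one needs a uniform \emph{lower} envelope on the integrands, and an upper bound on $p_t^\varepsilon$ (equivalently, an upper bound on $\varepsilon^2\log p_t^\varepsilon$) gives no such thing. So the integration of the pointwise Varadhan lower bound is not actually carried out, and you flag this yourself as ``the technical heart.''

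The paper's Proposition~\ref{p:marginVaradh} resolves this with a normalization trick you are not using. It writes
$\int_K p^\varepsilon(y,z)\,dz = p^\varepsilon(y,z^*)\,\lambda(K)\int_K \frac{p^\varepsilon(y,z)}{p^\varepsilon(y,z^*)}\frac{dz}{\lambda(K)}$,
applies Jensen (concavity of $\log$) to the normalized integral, and then uses Fatou on the integrand $\varepsilon^2\log\bigl(p^\varepsilon(y,z)/p^\varepsilon(y,z^*)\bigr)$. The point of dividing by $p^\varepsilon(y,z^*)$ is precisely that the pointwise BAL estimates give a clean eventual lower bound $\ge -3\delta$ on the ratio (the $z$-dependent error in the lower bound for $p^\varepsilon(y,z)$ is absorbed against the fixed upper bound for $p^\varepsilon(y,z^*)$), and the leftover factor $p^\varepsilon(y,z^*)$ is then handled by the single-point BAL lower bound at $z^*$. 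An Egorov-style argument (pointwise convergence of $\varepsilon^2\log p_t^\varepsilon(y,\cdot)$ along a sequence $\varepsilon_n\to 0$ upgraded to uniform convergence on a positive-measure subset of the ball, then restricting the integral there) would also work, and may be what you have in mind — but as written, your proposal conflates it with a dominated-convergence style envelope and never actually establishes the integral lower bound, so there is a gap at exactly this step.
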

\medskip

\begin{corollary}[\textbf{Test functions with polynomial growth}] \label{c:polGrowth}
Under the assumption of Theorem \ref{t:conditionalLaw}, assume $\phi$ is continuous and has polynomial growth, that is $\phi(z) \le C(1+|z|^k)$ for some $C>0$ and $k \in \mathbb{N}$, for all $z$.
Then
\[
E\left[ \phi \left( Z_{t}^{\varepsilon }\right) |Y_{t}^{\varepsilon }=y%
\right] \rightarrow \phi \left( z_{t}^{\ast }(y)\right) \text{ as }%
\varepsilon \downarrow 0.
\]
holds.
\end{corollary}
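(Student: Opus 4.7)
The plan is to reduce to the bounded continuous case already handled by Theorem \ref{t:conditionalLaw} via a truncation-plus-uniform-integrability argument. Fix a smooth cutoff $\chi_A \in C_b(\R^{n-l})$ with $\chi_A \equiv 1$ on $B_A(z_t^*(y))$ and $\chi_A \equiv 0$ off $B_{A+1}(z_t^*(y))$, and split
\[
E[\phi(Z^\eps_t)|Y^\eps_t=y] = E[(\phi\chi_A)(Z^\eps_t)|Y^\eps_t=y] + E[\phi(1-\chi_A)(Z^\eps_t)|Y^\eps_t=y].
\]
Since $\phi\chi_A \in C_b(\R^{n-l})$ and $\chi_A(z_t^*(y))=1$, Theorem \ref{t:conditionalLaw} gives, for every $A \ge 1$, $E[(\phi\chi_A)(Z^\eps_t)|Y^\eps_t=y] \to \phi(z_t^*(y))$ as $\eps \downarrow 0$. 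The whole task is therefore to show that the remainder term can be made arbitrarily small, uniformly in $\eps$, by taking $A$ large.

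For the tail term, the polynomial growth hypothesis gives
\[
\bigl| E[\phi(1-\chi_A)(Z^\eps_t)|Y^\eps_t=y] \bigr|
\le \frac{C}{f^\eps_t(y)} \int_{\{|z - z_t^*(y)| \ge A\}} (1 + |z|^k)\, p^\eps_t(y,z)\, dz.
\]
Proposition \ref{p:tailInt} (applied with $\overline z = z_t^*(y)$, once for $|z|^k$ and once trivially for $1$) yields
\[
\limsup_{\eps \to 0} \eps^2 \log \int_{\{|z - z_t^*(y)| \ge A\}} (1+|z|^k)\, p^\eps_t(y,z)\, dz \le -\Lambda_t\bigl(\{(y,z) : |z - z_t^*(y)|\ge A\}\bigr).
\]
For the denominator, the lower Varadhan bound $\liminf_\eps \eps^2 \log f^\eps_t(y) \ge -\Lambda_t(N_y)$ is available under the standing assumptions of Theorem \ref{t:conditionalLaw} (it is precisely the ingredient of its proof, obtained by combining the lower half of Theorem \ref{t:BAL} with the representation \eqref{e:reprMarginal} and the invertibility hypothesis near $z_t^*(y)$).

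Putting the pieces together,
\[
\limsup_{\eps \to 0} \eps^2 \log \bigl| E[\phi(1-\chi_A)(Z^\eps_t)|Y^\eps_t=y] \bigr|
\le \Lambda_t(N_y) - \Lambda_t\bigl(\{(y,z) : |z - z_t^*(y)| \ge A\}\bigr).
\]
The final ingredient is that the right-hand side becomes strictly negative once $A$ is large enough: since $\Lambda_t$ is a good rate function (Lemma \ref{l:propertiesLambda}(ii)), the sublevel set $\{x : \Lambda_t(x) \le \Lambda_t(N_y) + 1\}$ is compact, hence contained in some ball $B_R(x_0)$; for $A > R + |z_t^*(y)|$ the set $\{(y,z) : |z - z_t^*(y)| \ge A\}$ lies outside this sublevel set, so its infimum action exceeds $\Lambda_t(N_y) + 1$. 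Consequently the remainder term tends to $0$ exponentially fast, which, combined with the convergence of the truncated part, yields the claim. The only non-routine step is justifying the lower bound on $f^\eps_t(y)$; this is inherited from the proof of Theorem \ref{t:conditionalLaw} and relies crucially on the local invertibility assumption on $C_{x_0}(h)$.
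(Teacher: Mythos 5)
Your proof is correct and follows essentially the same route as the paper: truncate, apply Theorem \ref{t:conditionalLaw} to the bounded part, and kill the tail by combining Proposition \ref{p:tailInt} with the lower Varadhan bound on $f^\eps_t(y)$ from Proposition \ref{p:marginVaradh}, using the good-rate-function property of $\Lambda_t$ to make the exponent strictly negative for large $A$. If anything your variant is slightly cleaner, since the smooth cutoff $\chi_A$ keeps the truncated test function genuinely in $C_b$, whereas the paper's $\phi_R\,1_{\{|z|\le R\}}$ is discontinuous and requires an implicit extra step to discard the indicator.
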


Theorem \ref{t:conditionalLaw} and Corollary \ref{c:polGrowth} are proven at the end of this section.

\begin{remark}[\emph{Extension to finitely many argmin's}] \label{r:finitelyMany}
\emph{If there exist finitely many global minimizer $z^{*,i}=z^{*,i}_t(y)$, $i=1,\dots,N$, for the problem \eqref{e:argMin}, assuming that $C_{x_0}(h)$ is invertible for some minimizing control $h_0 \in \mathcal{K}_{t}^{(y,z)}$ for every $z$ in a neighborhood of each $z^{\ast,i}$, a modification of the arguments used in the proof of Theorem \ref{t:conditionalLaw} allows to show that $\mathcal{L}\left( Z_{t}^{\varepsilon }|Y_{t}^{\varepsilon }=y\right)$ converges to a law supported by the $z^{*,i}$, i.e.
\[
\mathcal{L}\left( Z_{t}^{\varepsilon }|Y_{t}^{\varepsilon }=y\right)
\Longrightarrow \sum_{i=1}^N \alpha_i \delta_{z^{\ast,i}} \qquad \text{as }\varepsilon \downarrow 0,
\]
for some $(\alpha_i)_{i=1}^N$ with $\alpha_i \ge 0$ and $\sum_{i=1}^N \alpha_i=1$, which means
$
E\left[ \phi \left( Z_{t}^{\varepsilon }\right) |Y_{t}^{\varepsilon }=y%
\right] \rightarrow \sum_{i=1}^N \alpha_i \phi(z^{\ast,i})$ as $\varepsilon \downarrow 0$, for every $\phi \in C_b$.
}
\end{remark}

\begin{remark}[\emph{Extension to the finite dimensional law}] \label{r:finiteDim}
\emph{Under the hypotheses of Theorem 	\ref{t:conditionalLaw}, assume in addition that there exist a unique minimizing control $h_0$ in $\mathcal{K}^{(y,\cdot)}$, that is
\be \label{e:uniqueControl}
\mathcal{K}^{(y,\cdot)}_{min} := \Bigl\{ h \in \mathcal{K}^{(y,\cdot)} : \frac12|h^2|_H = \Lambda_t(N_y) \Bigr\}
= \{ h_0 \}
\ee
(in particular, $(y,z^*_t(y))=\varphi^{h_0}_t(x_0)$ is the unique minimizer of $\Lambda_t$ on the set $N_y$).
Then, for every $0\le t_1 < \dots < t_n \le t$,
\[
\mathcal{L}(Z^{\eps}_{t_1}, \dots, Z^{\eps}_{t_n}|Y^{\eps}_t = y)
\Rightarrow 
\delta_{(z_{t_1},\dots,z_{t_n})}
\]
where $(y_s,z_s):=\varphi^{h_0}_s(x_0)$, $s \le t$, is the trajectory associated to the control $h_0$.
The case of finitely many minimizing controls $h_0^i$ in $\mathcal{K}^{(y,\cdot)}_{min}$ gives rise to a limiting law supported by the $(z^i_{t_1},\dots,z^i_{t_n})$, with $(y^i_s, z^i_s)=\varphi^{h_0^i}_s(x_0)$.
Subject to a tightness estimate, the convergence of the finite dimensional law yields the convergence at the path level, namely $\mathcal{L}(Z^{\eps}_{\cdot}|Y^{\eps}_t = y) \Rightarrow \delta_{z_{\cdot}}$ in the case of a unique minimizing path $(y_{\cdot},z_{\cdot})=\varphi^{h_0}_{\cdot}(x_0)$.
In the point-to-point case $l=n$, this result is proved in Molchanov \cite{Mol75} for elliptic diffusions and in Bailleul \cite{BailleulBridges} in the hypoelliptic setting (see also Bailleul, Mesnager and Norris \cite{BMNbridges}).
}
\end{remark}

\begin{remark}[\emph{Relation with Takanobu--Watanabe \cite{TakaWat}}] \label{r:comparisonTakaWat}
\emph{As discussed in section \ref{sec:TW}, the asymptotic expansion $\esp \left[ \delta_y(Y_{1}^{\varepsilon}) G(\epsilon, \omega) \right] \sim e^{-\Lambda(N_y)/\eps^2} \eps^{-l+m} c_0$ is given in \cite{TakaWat} under suitable assumptions on $Y_1$ and $
G$, allowing to study
\be \label{eq:condExpRatio}
\esp \left[ \phi \left( Z_{1}^\eps\right) |Y_{1}^\eps=y\right] = \frac{\esp \left[ \delta
_{y}\left( Y_{1}^\eps \right) \phi\left(Z_1^\eps \right) \right] }{\esp \left[ \delta
_{y}\left( Y_{1}^\eps \right) \right]}.
\ee
The constant $c_0$ in the expansion depends on both $Y$ and $G$, see \cite[Theorem 5.1]{TakaWat}.
Let us apply Takanobu--Watanabe's expansion to both the numerator and the denominator at the RHS of \eqref{eq:condExpRatio}, under the assumption of Theorem \ref{t:conditionalLaw} that the minimizer $z^*_1(y)$ is unique.
An inspection of the expression of $c_0$ in \cite[Theorem 5.1]{TakaWat} reveals that the constant at the numerator is given by $c_0 = \tilde c_0 \: \phi(z^*_1(y))$ for some $\tilde c_0 > 0$, while the constant at the denominator is $c_0 = \tilde c_0$.
This means that overall
\[
\esp \left[ \phi \left( Z_{1}^\eps\right) |Y_{1}^\eps=y\right] \to \phi(z^*_1(y)),
\quad \mbox{as } \eps \downarrow 0,
\]
in agreement with Theorem \ref{t:conditionalLaw}.
As pointed out in section \ref{sec:TW}, the validity of Takanobu--Watanabe's expansion relies on some conditions that are cumbersome to check (notably their infinite-dimensional assumption (D)), or may fail be satisfied in situations where Theorem \ref{t:conditionalLaw} is instead applicable.
Moreover, we stress once again that only the case with zero limiting drift $b_0 = \lim_{\eps \to 0} \eps^2 b \equiv 0$ is considered in \cite{TakaWat} - while it is crucial for our applications to space asymptotics of local volatilities in Section \ref{s:asympSlopes} to overcome this limitation.
\\
Of course, using the full expansion $\esp \left[ \delta_y(Y_{1}^{\varepsilon}) G(\epsilon, \omega) \right] = e^{-\Lambda(N_y)/\eps^2} \eps^{-l+m} (c_0 + c_1 \eps + ...)$ in \cite{TakaWat}, higher order expansions in \eqref{eq:condExpRatio} (of the form $\esp \left[ \phi \left( Z_{1}^\eps\right) |Y_{1}^\eps=y\right] = \phi(z^*_1(y)) + c_1(y,\phi) \eps + ...$) are a priori accessible. However, in order to compute the correction $c_1(y,\phi)$, one needs to know the precise value of the constant $\tilde c_0$ for which there is no explicit expression, and significant additional analysis would be be necessary, e.g. in the spirit of \cite{Kusuoka20082545}.
}
\end{remark}

In order to prove Theorem \ref{t:conditionalLaw}, we need a preliminary estimate on the marginal density $f^{\eps}_t$.

\begin{proposition} \label{p:marginVaradh}
Under the hypotheses of Theorem \ref{t:conditionalLaw},
\be \label{e:marginVaradh}
\liminf_{\epsilon \to 0} \eps^2 \log f^{\epsilon}_t (y) \ge -\Lambda_t(N_y),
\ee
where with the usual convention, $\Lambda_t(E) = \inf_{x \in E} \Lambda_t(x)$.
In particular, $f^{\epsilon}_t(y) > 0$ for $\eps$ small enough.\footnote{The strict positivity of $f^{\epsilon}_t$ is not, in general, a consequence of \Horm condition.
According to Ben Arous and L\'eandre's support theorem \cite[Theorem II.1]{BAL}, the density $p_t^{\eps}(x)$ of the full process $X^{\eps}_t$ is strictly positive at $x$ if and only if there exists some $h \in \mathcal{K}_t^x$ such that $C_{x_0}(h)$ is invertible.
While \Horm condition ensures that $\mathcal{K}_t^x$ is non-empty for every $x$, the lack of controls with invertible Malliavin matrix can not be excluded a priori.}
\end{proposition}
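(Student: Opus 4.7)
The strategy is to bound $f^\eps_t(y) = \int_{\R^{n-l}} p^\eps_t(y,z)\,dz$ from below by the contribution of a small ball around the minimizer $z^\ast = z^\ast_t(y)$, using the pointwise Varadhan identity from Theorem \ref{t:BAL}(i), promoted to a lower bound that is uniform on a set of positive Lebesgue measure via Egorov's theorem.

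First I would observe that the hypothesis of the proposition implies that at every point $(y,z)$ with $z$ in the neighbourhood $V$ of $z^\ast$, the assumption of Theorem \ref{t:BAL}(i) is satisfied, so
\[
\lim_{\eps \to 0} \eps^2 \log p^\eps_t(y,z) = -\Lambda_t(y,z)
\]
holds for every $z \in V$, and $\Lambda_t(y,\cdot)$ is continuous on $V$ by Lemma \ref{l:propertiesLambda}(v). In particular $p^\eps_t(y,z) > 0$ on $V$ (cf.\ the support theorem recalled in the footnote), so $\log p^\eps_t(y,\cdot)$ is a well-defined, measurable function on $V$. By continuity of $\Lambda_t(y,\cdot)$ at $z^\ast$ and the minimality of $z^\ast$, for any $\eta > 0$ I can pick $\delta > 0$ such that $\overline{B_\delta(z^\ast)} \subset V$ and $\Lambda_t(y,z) \le \Lambda_t(N_y) + \eta/2$ for all $z \in B_\delta(z^\ast)$.

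Next I would apply Egorov's theorem along an arbitrary sequence $\eps_n \downarrow 0$: the sequence of measurable functions $z \mapsto \eps_n^2 \log p^{\eps_n}_t(y,z)$ converges pointwise on the finite-measure set $B_\delta(z^\ast)$ to $z \mapsto -\Lambda_t(y,z)$, hence converges uniformly on a measurable subset $E \subset B_\delta(z^\ast)$ with $|E| > 0$. Therefore, for $n$ large enough and every $z \in E$,
\[
p^{\eps_n}_t(y,z) \ge \exp\Bigl(-\tfrac{\Lambda_t(N_y) + \eta}{\eps_n^2}\Bigr).
\]
Integrating over $E$ gives $f^{\eps_n}_t(y) \ge |E| \exp(-(\Lambda_t(N_y) + \eta)/\eps_n^2)$, which yields $\liminf_n \eps_n^2 \log f^{\eps_n}_t(y) \ge -\Lambda_t(N_y) - \eta$. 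Since $\eta > 0$ and the sequence $\eps_n$ were arbitrary, the desired lower bound for the full family $\eps \to 0$ follows. The strict positivity $f^\eps_t(y) > 0$ for small $\eps$ is then immediate, because $\Lambda_t(N_y) < \infty$ by Lemma \ref{l:propertiesLambda}(iv).

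The main obstacle is that Theorem \ref{t:BAL}(i) provides only the pointwise Varadhan formula, with no uniform-in-$z$ lower bound on $p^\eps_t(y,z)$; Theorem \ref{t:BAL}(ii) supplies a uniform bound but only in the upper direction. Egorov's theorem bridges this gap cleanly, converting pointwise convergence into uniform convergence on a set of positive measure, which is precisely what is needed to integrate. A heavier, more quantitative alternative would be a Girsanov shift of the Brownian motion by $\dot h_0/\eps$ for a minimizer $h_0 \in \K_t^{(y,z^\ast)}$, producing a direct sub-exponential lower bound on $f^\eps_t(y) \exp(\Lambda_t(N_y)/\eps^2)$, but this route would require finer control on the Malliavin weights and on the concentration of $Y^\eps_t$ around $y$ under the shifted measure.
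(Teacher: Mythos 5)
Your proof is correct and takes a genuinely different route from the paper's. Both arguments start from the same ingredients supplied by the hypothesis: the pointwise Varadhan formula $\lim_{\eps\to 0}\eps^2\log p^\eps_t(y,z) = -\Lambda_t(y,z)$ from Theorem~\ref{t:BAL}(i), the local continuity of $z\mapsto\Lambda_t(y,z)$ near $z^\ast$ from Lemma~\ref{l:propertiesLambda}(v), and the strict positivity of $p^\eps_t$ there via Ben Arous--L\'eandre's support theorem. The obstruction in both cases is that the pointwise lower bound on $p^\eps_t(y,z)$ from \eqref{e:BALdown} carries an $\eps$-threshold $\eps(z)$ depending on $z$, so it cannot be integrated directly. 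The paper removes this obstruction by normalizing: it factors $p^\eps_t(y,z^\ast)$ out of $\int_K p^\eps_t(y,z)\,dz$ over a compact $K$, bounds the remaining average of the ratio $p^\eps_t(y,z)/p^\eps_t(y,z^\ast)$ from below via Jensen's inequality followed by Fatou's lemma (to pass the $\liminf$ under the integral), and then re-attaches the lower bound at $z^\ast$. You remove the obstruction at its source: Egorov's theorem converts the pointwise convergence of $z\mapsto\eps_n^2\log p^{\eps_n}_t(y,z)$ along an arbitrary subsequence into uniform convergence on a subset $E\subset B_\delta(z^\ast)$ of positive Lebesgue measure, after which integrating over $E$ is immediate and the $\eta$- and subsequence-quantifiers close the argument. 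Your route buys a clean separation between the limit-taking and integration steps and sidesteps the question of whether the Fatou integrand is uniformly bounded below by an integrable function, a hypothesis the paper does not explicitly verify; the paper's Jensen/Fatou argument, conversely, stays closer to explicit exponential bounds throughout. Both are legitimate proofs of the same estimate.
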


\begin{proof}
For simplicity, let us drop the explicit dependence on the (fixed) $t>0$, and write $\Lambda$ for $\Lambda_t$,
$p^{\eps}$ for $p^{\eps}_t$, etc.
Also write $z^*$ for $z^*_t(y)$.
We note that by definition, $\Lambda(N_y) = \inf_z \Lambda_t(y,z) = \Lambda_t(y,z^*)$.
Using Ben Arous and L\'eandre's support theorem \cite[Theorem II.1]{BAL}, the invertibility of $C_{x_{0}}(h)$ for some $h\in \mathcal{K}_{t}^{(y,z^*)}$ implies $p^{\eps}(y,z^*)>0$ for $\eps$ small enough: it follows that $f^{\eps}(y) > 0$, therefore the conditional density $g^{\eps}$ in \eqref{e:conditDens} is well-defined.

Let $K$ be a neighborhood of $z^*$ in $\R^{n-l}$ such that, for every $z \in K$, $C_{x_{0}}(h_0)$ is invertible for some minimizing control $h_0 \in \mathcal{K}_{t}^{(y,z)}$ (with no loss of generality, we may assume $K$ to be compact).
It follows from	Theorem \ref{t:BAL}$(i)$ that
\be \label{e:actionsCoincide} 
\Lambda(y,z)=\Lambda_R(y,z) < \infty, \qquad \forall z \in K.
\ee
From point $(v)$ in Lemma \ref{l:propertiesLambda}, $\Lambda$ is continuous on a neighborhood of $(y,z^*)$: possibly making $K$ smaller, we can assume
\[
\Lambda(y,z) - \Lambda(y,z^*) \le \delta, \qquad \forall z \in K,
\]
for some fixed $\delta > 0$.
It follows from estimate \eqref{e:BALupPointwise} in Theorem \ref{t:BAL} that there exists $\eps_0= \eps_0(\delta)$ such that
\[
p^{\eps}(y,z^*) \le
\exp \Bigl( \frac{-\Lambda(y,z^*)+\delta}{\eps^2} \Bigr)
\]
for every $\eps<\eps_0$.
Analogously, for every $z \in K$ there exists $\eps(z) = \eps(z,\delta)$ such that
\[
p^{\eps}(y,z) \ge
\exp \Bigl( \frac{-\Lambda(y,z)-\delta}{\eps^2} \Bigr)
\]
for all $\eps < \eps(z)$.
It follows from these last two estimates that for every $z \in K$,
\be \label{e:ratio}
\frac{p^{\eps}(y,z)}{p^{\eps}(y,z^*)} \ge
\exp \Bigl( \frac{-(\Lambda(y,z)-\Lambda(y,z^*))-2\delta}{\eps^2} \Bigr)
\ge \exp \Bigl( -\frac{3\delta}{\eps^2} \Bigr)
\ee
for all $\eps < \eps_0 \wedge \eps(z)$.
Now write
\be \label{e:factorOut}
\begin{aligned}
f^{\eps}(y) = \int_{R^{n-l}} p^{\eps}(y,z) dz &\ge \int_K p^{\eps}(y,z) dz
\\
&=p^{\eps}(y,z^*) \: \lambda^{n-l}(K) \int_{K} \frac{p^{\eps}(y,z)}{p^{\eps}(y,z^*)} \frac{dz}{\lambda^{n-l}(K)}.
\end{aligned}
\ee
where $\lambda^{n-l}$ is the Lebesgue measure on $\R^{n-l}$.
First applying Jensen's inequality, then Fatou's lemma, one has
\be \label{e:lowerBoundIntegral}
\begin{aligned}
\liminf_{\eps \to 0} \eps^2 \log
\int_{K} \frac{p^{\eps}(y,z)}{p^{\eps}(y,z^*)} \frac{dz}{\lambda^{n-l}(K)}
&\ge
\liminf_{\eps \to 0}
\int_{K} \eps^2 \log \frac{p^{\eps}(y,z)}{p^{\eps}(y,z^*)} \frac{dz}{\lambda^{n-l}(K)}
\\
&\ge 
\int_{K} \liminf_{\eps \to 0}
\left[ \eps^2 \log \frac{p^{\eps}(y,z)}{p^{\eps}(y,z^*)} \right] \frac{dz}{\lambda^{n-l}(K)}
\\
&\ge
-3\delta.
\end{aligned}
\ee
where we have used \eqref{e:ratio} in the last inequality.
Finally using the lower bound \eqref{e:BALdown} for $p^{\eps}(y,z^*)$ and \eqref{e:lowerBoundIntegral}, it follows from \eqref{e:factorOut} that
\[
\begin{aligned}
\liminf_{\eps \to 0} \eps^2 \log f^{\eps}(y) &\ge
\liminf_{\eps \to 0} \eps^2 \log p^{\eps}(y,z^*)
+
\liminf_{\eps \to 0} \eps^2 \log \left( \lambda^{n-l}(K) \int_{K} \frac{p^{\eps}(y,z)}{p_t^{\eps}(y,z^*)}
\frac{dz}{\lambda^{n-l}(K)} \right)
\\
&\ge -\Lambda(y,z^*) - 3\delta.
\end{aligned}
\]
Since $\delta$ was arbitrary, \eqref{e:marginVaradh} is proved.
\end{proof}

\begin{remark} \label{r:remarkVaradMarginal} \label{r:lowerBoundMinimizerNotUnique}
\emph{Note that we have nowhere used, in the proof of Proposition \ref{p:marginVaradh}, the fact that the minimizer $z^*_t(y)$ is unique, neither that there exist finitely many.
Proposition \ref{p:marginVaradh} then holds under the weaker assumption that $C_{x_0}(h_0)$ is invertible on at least one minimizing control $h_0 \in \mathcal{K}_t^{(y,z)}$, for every $z$ in a neighborhood of \emph{some} global minimizer $z^*$ of the map
$z \mapsto \Lambda_t(y,z)$.
}
\end{remark}

\begin{proof}[Proof of Theorem \ref{t:conditionalLaw}] Let us drop the fixed $t$ from the notation, and write $\Lambda$ for $\Lambda_t$, etc.

\emph{Step 1}.
We want to show that for all $\phi \in C_b$, one has $\bigl| \int_{R^{n-l}} \phi(z) g^{\eps}(z) dz - \phi(z^*) \bigr| \to 0$ as $\eps \to 0$, with $z^* := z^*_t(y)$.
Consider $\eta^{(2)}>0$ such that the map $z \mapsto \Lambda(y,z)$ is continuous on $B_{\eta^{(2)}}(z^*)$ (by Lemma \ref{l:propertiesLambda}, such a $\eta^{(2)}$ exists) and such that estimate \eqref{e:BALup} of Theorem \ref{t:BAL} holds on $B_{\eta^{(2)}}(z^*)$.
Let $\delta > 0$, and consider $\eta^{(1)}=\eta^{(1)}_{\delta}$ such that $0<\eta^{(1)}<\eta^{(2)}$ and $Osc(\phi,\eta^{(1)}) \le \delta$, where $Osc(\phi,\eta) =\sup\{|\phi(z_1)-\phi(z_1)|:|z_1-z_2|\le \eta\}$.
We have
\[
\begin{aligned}
\Bigl| \int_{R^{n-l}} \phi(z) g^{\eps}(z) dz - \varphi(z^*) \Bigr|
&\le
\int_{R^{n-l}} | \phi(z) - \phi(z^*)| g^{\eps}(z) dz
\\
&=
\int_{|z-z^*|\le \eta^{(1)}} | \phi(z) - \phi(z^*)| g^{\eps}(z) dz
+
\int_{|z-z^*|> \eta^{(1)}} | \phi(z) - \phi(z^*)| g^{\eps}(z) dz
\\
&\le \delta
+ 2 |\phi|_{\infty} \int_{|z-z^*|> \eta^{(1)}} g^{\eps}(z) dz,
\end{aligned}
\]
and our aim is to show that the last integral converges to $0$ as $\eps \to 0$.

We have
\[
\int_{|z-z^*|> \eta^{(1)}} g^{\eps}(z) dz
\le
\int_{ \eta^{(1)} < |z-z^*| < \eta^{(2)} } g^{\eps}(z) dz
+
\int_{ |z - z^*| \ge \eta^{(2)} } g^{\eps}(z) dz
:= I^{\eps}_1 + I^{\eps}_2.
\]

\emph{Step 2 ($I^{\eps}_1 \to 0$)}.
Set
\[
a_{\delta} := \inf  \{ \Lambda(y,z) - \Lambda(y,z^*) : |z-z^*| \ge \eta^{(1)} \}.
\]
By the lower semi-continuity of $z \mapsto \Lambda(y,z)$ and the uniqueness of the minimizer $z^*$,
one has $a_{\delta} > 0$.
Let now $\delta_1$ be such that $0 < \delta_1 < a_{\delta}/4$: on the one hand, since estimate \eqref{e:BALup} in Theorem \ref{t:BAL} is uniform over a compact neighborhood of $z^*$, we know there exists $\eps_0 = \eps_0(\delta_1,\eta^{(2)})$ such that
\[
p^{\eps}(y,z) \le \exp \Bigl( \frac{-\Lambda(y,z) + \delta_1}{\eps^2} \Bigr)
\]
for all $z \in B_{\eta^{(2)}}(z^*)$ and $\eps < \eps_0$.
On the other hand, it follows from estimate \eqref{e:marginVaradh} in Proposition \ref{p:marginVaradh} that
\[
f^{\eps}(y) \ge \exp \Bigl( \frac{-\Lambda(y,z^*) - \delta_1}{\eps^2} \Bigr)
\]
for all $\eps < \eps_0$.
Putting these two estimates together and using the definition of $a_{\delta}$, it follows
\begin{equation} \label{e:condComp}
g^{\eps}(z) \le
\exp \Bigl( \frac{2\delta_1-(\Lambda(y,z)-\Lambda(y,z^*)) }{\eps^2} \Bigr)
\le 
\exp \Bigl( -\frac{a_{\delta}}{2\eps^2} \Bigr)
\end{equation}
for all $z$ such that $\eta^{(1)} < |z-z^*| < \eta^{(2)}$ and $\eps < \eps_0$.
Therefore, we have
\[
I^{\eps}_1 \le \exp \Bigl( -\frac{a_{\delta}}{2\eps^2} \Bigr) \: \lambda^{n-l}( B_{\eta^{(2)}}(z^*) ),
\]
for all $\eps < \eps_0$, where $\lambda^{n-l}$ is the Lebesgue measure on $\R^{n-l}$.
For every choice of $\delta$ and $\eta^{(2)}$, the right hand side can be made arbitrarily small taking $\eps$
small enough.

\emph{Step 3 ($I^{\eps}_2 \to 0$)}. 
As done in Step 2, notice that $a^{(2)}:=\inf  \{\Lambda(y,z)-\Lambda(y,z^*):|z-z^*| \ge \eta^{(2)}\} > 0$.
Since 
\[
I^{\eps}_2 = \int_{ |z-z^*|\ge \eta^{(2)} } g^{\eps}(z) dz =
\frac 1{f^{\eps}(y)} \int_{|z-z^*| \ge \eta^{(2)}} p^{\eps}(y,z) dz,
\]
it follows from \ref{p:marginVaradh} and Proposition \ref{p:tailInt} that
\[
\begin{aligned}
\limsup_{\eps \to 0} \eps^2 \log I^{\eps}_2 &\le
- \liminf_{\eps \to 0} \eps^2 \log f^{\eps}(y)
+
\limsup_{\eps \to 0} \eps^2 \log \int_{ |z-z^*| \ge \eta^{(2)}} p^{\eps}(y,z) dz
\\
&\le
\Lambda(y,z^*) - \inf \{ \Lambda(y,z): |z-z^*| \ge \eta^{(2)} \}
\\
&\le \Lambda(y,z^*)-\Lambda(y,z^*)-a^{(2)} = -a^{(2)}<0.
\end{aligned}
\]
The last inequality clearly implies that $I^{\eps}_2$ vanishes as $\eps \to 0$.
\end{proof}
\medskip

\begin{proof}[Proof of Corollary \ref{c:polGrowth}]
Let $\phi_R$, $R>0$, be a bounded continuous function that coincides with $\phi$ on ${B_R(0)}$.
Assume $R$ is fixed, but large enough so that $B_R(0)$ contains $z^*_t(y) =: z^*$ and the compact set
$\{ z : \Lambda(y,z) \le \Lambda(y,z^*) +1 \}$.
We have (dropping the fixed index $t$ from the notation),
\[
\begin{aligned}
| \esp[\phi(Z^{\eps})|Y^{\eps}=y] - \phi(z^*) | &=
\left| \esp[\phi(Z^{\eps}) 1_{ \{ |Z^{\eps}| \le R \} } |Y^{\eps}=y] - \phi(z^*)
+ \esp[\phi(Z^{\eps}) 1_{ \{ |Z^{\eps}| > R \} } |Y^{\eps}=y] \right|
\\
&\le 
| \esp[\phi_R(Z^{\eps}) 1_{ \{ |Z^{\eps}| \le R \} } |Y^{\eps}=y] - \phi(z^*)|
+
C \int_{|z| \ge R} (1 + |z|^k) \frac{p^{\eps}(y,z)}{f^{\eps}(y)} dz
\end{aligned}
\]
By Theorem \ref{t:conditionalLaw}, the first term tends to $|\phi_R(z^*)-\phi(z^*)|=|\phi(z^*)-\phi(z^*)|=0$ as $\eps \to 0$.
The second term can be bounded as in Step 3 of the proof of Theorem \ref{t:conditionalLaw}, that is
\[
\begin{aligned}
\limsup_{\eps \to 0} \eps^2 \log \int_{|z| \ge R} (1 + |z|^k) \frac{p^{\eps}(y,z)}{f^{\eps}(y)} dz
&\le
- \liminf_{\eps \to 0} \eps^2 \log f^{\eps}(y)
+
\limsup_{\eps \to 0} \eps^2 \log \int_{ |z| \ge R} (1 + |z|^k) p^{\eps}(y,z) dz
\\
&\le
\Lambda(y,z^*) - \inf \{ \Lambda(y,z): |z| \ge R \}
\\
&\le \Lambda(y,z^*)-\Lambda(y,z^*)-1 = -1.
\end{aligned}
\]
where we have used Proposition \ref{p:tailInt} and estimate \eqref{e:marginVaradh} in the second step, and the choice of $R$
to conclude.
The last inequality implies that the left hand side vanishes as $\eps \to 0$.
\end{proof}
\medskip

Note that Corollary \ref{c:polGrowth} can be straightforwardly extended to the case of finitely many minimizers (as described in Remark \ref{r:finitelyMany}).

\section{Varadhan's formula for marginal densities} \label{s:varadh}

As a by-product of the estimates presented so far (Theorem \ref{t:BAL}, Proposition \ref{p:tailInt} and
Proposition \ref{p:marginVaradh}), it is possible to show that a Varadhan-type formula holds for the density of the projected diffusion $Y^{\eps}_t$.
Proposition \ref{p:marginVaradh} already provides a lower bound; we are left with the proof of the corresponding upper bound.
In the following theorem, the case $l=n$ recovers the classical Varadhan's formula \cite{VarB}, or rather L\'eandre's extension \cite{LeMaj,LeMin} to the hypoelliptic setting.

\begin{theorem} \label{t:VaradhMargin}
Consider $X^{\varepsilon} = \left( Y^{\varepsilon },Z^{\varepsilon }\right)$ the strong solution to
\[
dX^{\varepsilon}_t=b_{\varepsilon }\left( X^{\varepsilon}_t\right)
dt+\varepsilon \sum_{j=1}^d \sigma_j \left( X^{\varepsilon}_t\right) dW^j_t, \quad X_{0}^{\varepsilon
}=x_{0}^{\varepsilon},
\]
with $x_0^{\eps} \to x_0$, and assume strong H\"{o}rmander condition (sH) at all points.
Fix $y\in \mathbb{R}^{l}$ and $t>0$.
Then 
\begin{itemize}
\item[(i)] if $b_{\eps} \to 0$ as $\eps \downarrow 0$ in the sense of \eqref{e:convergDrift}, the density $f^{\eps}_t$ of $Y^{\eps}_t$ satisfies
\be \label{e:VaradhMarginSmallNoise}
\lim_{\epsilon \to 0} \eps^2 \log f^{\epsilon}_t (y) = -\Lambda_t(N_y).
\ee
where $\Lambda_t$ is defined by \eqref{e:rateFctX} with $b_0 \equiv 0$.

\item[(ii)] If $b_{\eps} \to b_0 \not \equiv 0$, assume that $C_{x_{0}}(h_0)$ is invertible for at least a minimizing control $h_0\in \mathcal{K}_{t}^{(y,z)}$, for every $z$ in a neighborhood of $z^{\ast}$, where $z^*$ is some (not necessarily strict, nor unique) global minimizer of the map $z \mapsto \Lambda_t(y,z)$.
Then, estimate \eqref{e:VaradhMarginSmallNoise} holds.
\end{itemize}
\end{theorem}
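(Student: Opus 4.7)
The plan is to establish the lower and upper bounds separately. The lower bound $\liminf_{\eps\to 0}\eps^2\log f^\eps_t(y)\ge -\Lambda_t(N_y)$ is already contained in Proposition \ref{p:marginVaradh}: in case (ii) its hypotheses coincide with those of the theorem, while in case (i), since $b_0\equiv 0$, Comment \ref{c:BALresult}(ii) gives $\Lambda_t\equiv\Lambda_{R,t}$ on $\R^n$ and Remark \ref{r:noDriftLambdaContinuous} gives continuity of $\Lambda_t$ on all of $\R^n$. The proof of Proposition \ref{p:marginVaradh} uses only the pointwise Varadhan lower bound \eqref{e:BALdown}, the pointwise upper bound \eqref{e:BALupPointwise}, and local continuity of $\Lambda_t$ near $z^*$, so it carries over verbatim in case (i) as already pointed out in Remark \ref{r:remarkVaradMarginal}.

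The main work is the matching upper bound $\limsup_{\eps\to 0}\eps^2\log f^\eps_t(y)\le -\Lambda_t(N_y)$. Fix any global minimizer $z^*$ of $z\mapsto\Lambda_t(y,z)$, so that $\Lambda_t(y,z^*)=\Lambda_t(N_y)$, and for small $r>0$ split
\[
f^\eps_t(y) = \int_{|z-z^*|<r} p^\eps_t(y,z)\,dz + \int_{|z-z^*|\ge r} p^\eps_t(y,z)\,dz =: I^\eps_1(r) + I^\eps_2(r).
\]
The tail is handled by Proposition \ref{p:tailInt} applied with $k=0$ and $\bar z=z^*$, which gives
\[
\limsup_{\eps\to 0}\eps^2\log I^\eps_2(r) \le -\Lambda_t(\{(y,z):|z-z^*|\ge r\}) \le -\Lambda_t(N_y),
\]
the second inequality because the indicated set is a subset of $N_y$, and $z^*$ is a global minimizer on $N_y$.

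For the central piece $I^\eps_1(r)$, I want a uniform version of the Ben Arous--L\'eandre upper bound \eqref{e:BALup} on $B_r(z^*)$. In case (ii), the assumed invertibility of $C_{x_0}(h_0)$ at some minimizing control $h_0\in\mathcal{K}_t^{(y,z)}$ for every $z$ in a neighborhood of $z^*$ lets me apply Theorem \ref{t:BAL}(ii) at $\bar x=(y,z^*)$: this produces an open $V\subset\R^n$ containing $(y,z^*)$ on which \eqref{e:BALup} is uniform on compacts, and I shrink $r$ so that $\{y\}\times B_r(z^*)\subset V$. In case (i), the global continuity of $\Lambda_t$ from Remark \ref{r:noDriftLambdaContinuous} triggers the remark following Lemma \ref{l:propertiesLambda}, yielding the same uniform estimate on any compact. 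Since $z^*\in B_r(z^*)$ is a global minimizer on $N_y$, one has $\inf_{|z-z^*|\le r}\Lambda_t(y,z)=\Lambda_t(N_y)$, so for any prescribed $\delta>0$ and $\eps$ small enough,
\[
I^\eps_1(r) \le \lambda^{n-l}(B_r(z^*))\exp\!\left(\frac{-\Lambda_t(N_y)+\delta}{\eps^2}\right).
\]
Combining with the tail bound via $\log(a+b)\le \log 2 + \max(\log a,\log b)$ and sending $\delta\downarrow 0$ closes the upper bound.

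The main subtlety I anticipate is arranging that the uniform form of \eqref{e:BALup} is genuinely available on a \emph{product} neighborhood $\{y\}\times B_r(z^*)$ rather than on a generic $\R^n$-neighborhood of $(y,z^*)$: in case (ii) this reduces to matching the neighborhood produced by Theorem \ref{t:BAL}(ii) against the one in which the invertibility hypothesis is assumed, and can be arranged by shrinking $r$. Once this is set up, everything else is standard manipulation of $\eps^2\log$ inequalities, with the infimum $\inf_{|z-z^*|\le r}\Lambda_t(y,z)$ collapsing to $\Lambda_t(N_y)$ precisely because $z^*$ lies in the ball.
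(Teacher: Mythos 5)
Your proof is correct and follows essentially the same strategy as the paper's: split $f^\eps_t(y)$ into a ball around a global minimizer $z^*$ plus a tail, bound the central piece via the locally uniform Ben~Arous--L\'eandre upper bound (Theorem~\ref{t:BAL}(ii) in case (ii), global continuity of $\Lambda_t$ via Remark~\ref{r:noDriftLambdaContinuous} in case (i)), bound the tail by Proposition~\ref{p:tailInt}, take $\max$ in $\eps^2\log$, and recover the lower bound from Proposition~\ref{p:marginVaradh} (plus Comment~\ref{c:BALresult}(ii) for $\Lambda_t=\Lambda_{R,t}$ when $b_0\equiv 0$). The subtlety you flag about obtaining uniformity on $\{y\}\times B_r(z^*)$ is resolved exactly as you propose, by intersecting a compact product set with the open $V$ produced by Theorem~\ref{t:BAL}(ii).
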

\medskip
\noindent
Consider a financial model $X=(Y,Z) \in \R \times \R^{n-1}$, where $Y$ models the log-price of an asset, and $Z$ its (possibly multi-dimensional) stochastic volatility.
The small-noise behavior of the logarithm of the density of $Y$ translates into the leading order asymptotic term of the implied volatility of European options (see for example Gatheral et al. \cite{GathLaurHsu} for an implementation of this approach in the small-maturity limit for 1D local volatility models).

\begin{remark} \label{r:marginal}
\emph{
Point (i) of Theorem \ref{t:VaradhMargin} covers the small-time case: setting $b_{\eps} = \eps^2 b$ and $x_0^{\eps} = x_0$, Brownian scaling yields $X^{\eps}_1 \sim X_{\eps^2}$ where $X_t =(Y_t, Z_t)$ is the solution to \eqref{e:SDEintro}, hence \eqref{e:VaradhMarginSmallNoise} is equivalent to the small-time Varadhan formula 
\[
\lim_{t \to 0} t \log f_t (y) = -\Lambda_1(N_y)
\]
with $f_t$ the pdf of $Y_t$.
}
\end{remark}

\textbf{Comparison with the marginal density expansions of Deuschel et al. \cite{DFJV:I}.}
A sufficient condition for \eqref{e:VaradhMarginSmallNoise} to hold is given as condition (ND) in Deuschel et al. \cite{DFJV:I}, see their Definition 2.7.
This condition - the finite dimensional analogue of the infinite dimensional condition (D) in Takanobu--Watanabe - appears as a generalized ``not in cut-locus'' condition from Riemannian geometry, and actually allows to derive a full expansion for the marginal density $f_t^{\eps}(y)$ as $\eps \to 0$, of the form
\begin{equation} \label{e:margExp}
f_t^{\eps}(y) = \eps^{-l} e^{-\Lambda(N_y)/\eps^2} e^{-\tilde{\Lambda}(y)/\eps} \bigl( c_0 + o(\eps) \bigr),
\end{equation}
expansion of which formula \eqref{e:VaradhMarginSmallNoise} captures only the leading-order exponential term $e^{-\Lambda(N_y)/\eps^2}$ (we refer to \cite[Theorem 2.8]{DFJV:I} for an account of the additional term $\tilde{\Lambda}(y)/\eps$).
In addition to the first order condition of invertibility of the deterministic Malliavin matrix along the minimizing controls in $\mathcal{K}_t^{(y,\cdot)}$, the condition (ND) in \cite{DFJV:I} requires to check a second-order condition corresponding to the non-degeneracy of the action along the minimizers.
The geometric interpretation is the non-focality of $x_0$ for the arrival subspace $N_y$.
Theorem \ref{t:VaradhMargin} precisely tells that this non degeneracy condition is not necessary in order to establish the asymptotic behaviour of the density on the log-scale as in \eqref{e:VaradhMarginSmallNoise}.
\bigskip

\begin{proof}[Proof of Theorem \ref{t:VaradhMargin}]
$(ii)$ Let us establish the upper bound
\[
\limsup_{\epsilon \to 0} \eps^2 \log f^{\epsilon}_t (y) \le -\Lambda_t(N_y).
\]
From the hypotheses in $(ii)$ and Theorem \ref{t:BAL}, we know that there exists $\eta>0$ such that estimate \eqref{e:BALup} holds uniformly on $B_{\eta}(z^*)$.
Write (omitting the fixed index $t$)
\begin{equation} \label{e:upDecomp}
\int_{\R^{n-l}} p^{\eps}(y,z) dz = \int_{B_{\eta}(z^*)} p^{\eps}(y,z) dz + \int_{B_{\eta}^c(z^*)} p^{\eps}(y,z) dz.
\end{equation}
It follows from estimate \eqref{e:BALup} that for every $\delta>0$ we can find
$\eps_0 = \eps_0(\delta,\eta)$ such that
\begin{equation} \label{e:encadrementUp}
p^{\eps}(y,z)
\le \exp \Bigl( \frac{-\Lambda(y,z) + \delta}{\eps^2} \Bigr)
\end{equation}
for all $z \in B_\eta(z^*)$ and $\eps < \eps_0$.
For such values of $\eps$, one has
\[
\begin{aligned}
\int_{B_\eta(z^*)} p^{\eps}(y,z) dz \le
e^{ \frac{\delta}{\eps^2} } \int_{ B_\eta(z^*) } e^{ -\frac{ \Lambda(y,z) } {\eps^2} } dz
\le e^{ \frac{\delta}{\eps^2} } e^{-\frac{ \Lambda(y,z^{\ast}) } {\eps^2} } \lambda^{n-l}( B_\eta(z^*) )
\end{aligned}
\]
where the last inequality trivially follows from $\Lambda(y,z) \ge \Lambda(y,z^{\ast})$.
Therefore,
\be \label{e:upCentral}
\limsup_{\eps \to 0} \eps^2 \log \int_{B_\eta(z^*)} p^{\eps}(y,z) dz \le - \Lambda(y,z^{\ast}) + \delta.
\ee
Now, using Proposition \ref{p:tailInt}, one has
\be \label{e:upTail}
\limsup_{\eps \to 0} \eps^2 \log \int_{B^c_\eta(z^*)} p^{\eps}(y,z) dz \le
-\inf \{ \Lambda(y,z'): z' \in B^c_\eta(z^*) \}
\le -\Lambda(y,z^{\ast}).
\ee
Therefore, taking $\limsup_{\eps \to 0} \eps^2 \log$ in (\ref{e:upDecomp}) and using estimates \eqref{e:upCentral} and \eqref{e:upTail}, one gets
\[
\limsup_{\eps \to 0} \eps^2 \log f^{\eps}(y) =
\limsup_{\eps \to 0} \eps^2 \log \int_{\R^{n-l}} p^{\eps}(y,z) dz
\le
\max(-\Lambda(y,z^{\ast})+\delta, -\Lambda(y,z^{\ast}))
= -\Lambda(y,z^{\ast})+\delta.
\]
Since $\delta$ was arbitrary, the right hand side can be improved to $-\Lambda(y,z^{\ast}) = -\Lambda(N_y)$, as claimed.
\\
The lower bound $\liminf_{\eps \to 0} \eps^2 \log f_t^{\eps}(y) \ge -\Lambda_t(N_y)$ was already obtained in Proposition \ref{p:marginVaradh} (see Remark \ref{r:lowerBoundMinimizerNotUnique} for the case where the minimizer $z^*$ is not
unique).

$(i)$ Under strong \Horm condition, together with $b_0 = \lim_{\eps \to 0}b_{\eps} \equiv 0$, the function $\Lambda_t$ is continuous on $\R^n$ (see Remark \ref{r:noDriftLambdaContinuous}), and the two functions $\Lambda_t$ and $\Lambda_{R,t}$ coincide everywhere (see Comment \ref{c:BALresult}$(ii)$).
On the one hand, thanks to the continuity of $\Lambda_t$, estimate \eqref{e:BALup} in Theorem \ref{t:BAL} holds uniformly over $x$ in compact sets of $R^n$ (and not only locally around $(y,z^*)$).
In the proof above, the upper bounds $\limsup_{\epsilon \to 0} \eps^2 \log f^{\epsilon}_t (y) \le -\Lambda_t(N_y)$ only relies on estimate \eqref{e:BALup} and Proposition \ref{p:tailInt}, and can be proven exactly as done above.
On the other hand, identity \eqref{e:actionsCoincide} in Proposition \ref{p:marginVaradh} holds, and the proof of Proposition \ref{p:marginVaradh} can be rerun with no modifications, leading to the lower bound $\liminf_{\eps \to 0} \eps^2 \log f_t^{\eps}(y) \ge -\Lambda_t(N_y)$, which proves \eqref{e:VaradhMarginSmallNoise}.
\end{proof}

\section{Applications: asymptotics of local volatilities} \label{s:localVol}

In this section we focus on the stochastic volatility model
\begin{equation} \label{e:stochVol}
\begin{aligned}
dY_{t} &= -\frac12 Z_{t}^2 dt + Z_t dB^{1}_{t}, &Y_{0}=0,
\\ \
dZ_{t} &= \beta(Z_{t})dt + \alpha(Z_t) dB^{2}_{t}, &Z_0 \neq 0
\end{aligned}
\end{equation}
where the process $Y$ models the log-value of a financial asset, and $Z$ its stochastic volatility.
Setting $S_t=S_0 e^{Y_t}$ leads to the familiar ``Black-Scholes with stochastic volatility'' dynamics $dS_t = S_t Z_t dB^1_t$.
Here $B_t=(B^1_t, B^2_t)$ is a two-dimensional Brownian motion with correlated components, in short $d \langle B^1, B^2 \rangle_t = \rho dt$ for some $\rho \in (-1,1)$.
$B$ can be obtained from a two-dimensional standard Brownian motion $W$ by setting $B = \sqrt{\Gamma} W$, where $\sqrt{\Gamma}$ is a (any) choice of the square root\footnote{A typical choice in this setting is provided by the Cholesky decomposition $\sqrt{\Gamma} \sqrt{\Gamma}^* =\Gamma$, with $\sqrt{\Gamma}=\left(\begin{array}{c c} \rho & \sqrt{1-\rho^2} \\ 1 & 0 \end{array}\right)$.} of the correlation matrix $\Gamma=\left(\begin{array}{c c} 1 & \rho \\ \rho & 1 \end{array}\right)$.
\\
Let us note straight away that the diffusion vector fields in \eqref{e:stochVol} read
\be \label{e:sigmaStochVol}
\sigma_1(z) = \left(\begin{array}{c} z \sqrt{\Gamma}_{11} \\ \alpha(z) \sqrt{\Gamma}_{21} \end{array} \right); \qquad
\sigma_2(z) = \left(\begin{array}{c} z \sqrt{\Gamma}_{12} \\ \alpha(z) \sqrt{\Gamma}_{22} \end{array} \right).
\ee
While it is clear that the couple $\sigma_1(z),\sigma_2(z)$ spans $\R^2$ at every $z \neq 0$ such that $\alpha(z) \neq 0$ (recall $\sqrt{\Gamma}$ is invertible under the assumption $\rho \neq \pm 1$), this condition fails on the set $\{z=0\}$, whatever value the function $\alpha$ takes there.
The model \eqref{e:stochVol} then naturally fits into the non-elliptic framework.

\begin{example} \label{e:stochVolModels}
\emph{
A relevant parametric choice of the drift term in \eqref{e:stochVol} is given by the affine function $\beta(z)=a+bz$, which has the typical mean-reverting form $\beta(z)= |b|(a/|b| - z)$ when $b<0$.
}
\end{example}

\subsection{Extension of main results to Stochastic Volatility models} \label{s:stocVol}

Postponing for a moment the precise assumptions on the coefficients $\alpha,\beta$, let us first describe the different types of asymptotic problems that can arise in the applications.
The following class of small-noise equations embeds both small-time and (in some cases) space-asymptotic problems:
\begin{equation} \label{e:smallNoiseStochVol}
\begin{aligned}
dY^{\eps}_t &= -\frac12 \eps^{\theta} \left(Z^{\eps}_t \right)^2 dt + \eps Z^{\eps}_t dB^1_t, &Y^{\eps}_{0}=0,
\\ 
dZ^{\eps}_t &= \beta_{\eps} \left(Z^{\eps}_t\right)dt + \eps \alpha\left(Z^{\eps}_t\right) dB^2_t, &Z^{\eps}_0 = z^{\eps}_0.
\end{aligned}
\end{equation}
Here, $\theta \ge0$ is a parameter that depends on the asymptotic regime under consideration (we will have $\theta \in \{0,2\}$ in our applications),
\[
z^{\eps}_0 \to z_0 \qquad \mbox{as } \eps \to 0
\]
and, analogously to \eqref{e:convergDrift},
\be \label{e:driftConvergenceStochVol}
\beta_{\eps} \to \beta_0 \qquad \mbox{as } \eps \to 0
\ee
for some limiting function $\beta_0$, in the sense of uniform convergence on compact sets of $\R$ together with the derivatives of any order.
We also assume that the sequence of norms $|\partial^k_{z} \beta_{\eps}|_{\infty}$ is uniformly bounded in $\eps$, for every $k \ge 0$.
The associated limiting controlled system reads
\be \label{e:ODEstochVol}
\begin{aligned}
d\varphi^{(1)}_t &= -\frac12 1_{\theta=0} \bigl(\varphi^{(2)}_t\bigr)^2 dt + \varphi^{(2)}_t d(\sqrt{\Gamma} h)^{(1)}_t, \quad \varphi^{(1)}_0 = 0,
\\
d\varphi^{(2)}_t &= \beta_0 \bigl(\varphi^{(2)}_t \bigr)dt + \alpha\bigl(\varphi^{(2)}_t\bigr) d(\sqrt{\Gamma} h)^{(2)}_t, \quad \varphi^{(2)}_0 = z_0;
\end{aligned}
\ee
where $h = (h^1,h^2)$ is a two-dimensional control and $\Gamma$ the correlation matrix in \eqref{e:stochVol}.
Let us denote
\be \label{e:rateFctXstochVol}
\Lambda_t^{SV}(y,z)= \inf\left\{\frac12 |h|_H^2: h \in \mathcal{K}_t^{(y,z)} \right\}, \qquad (y,z) \in \R^2
\ee
the action of the system \eqref{e:ODEstochVol}, where $\mathcal{K}_t^{(y,z)} =\{ h : (\varphi^{(1)}_0,\varphi^{(2)}_0)=(0,z_0), (\varphi^{(1)}_t,\varphi^{(2)}_t)=(y,z) \}$.
\medskip

We assume that the coefficients $\beta_{\eps},\beta_0$ and $\alpha$ satisfy:
\begin{itemize}
\item[(SV)] $\beta_{\eps},\beta_0, \alpha: \R \to \R$ are smooth and Lipschitz functions, with $\alpha(0) \neq 0$.
\end{itemize}
The application of Theorem \ref{t:conditionalLaw} to the system \eqref{e:smallNoiseStochVol} is a priori not justified, because of the lack of global boundedness for the coefficients of the SDE (and their derivatives).
In this respect, let us note that, even if a boundedness assumption were in force for $\alpha$ and $\beta_{\eps}$, the two-dimensional system \eqref{e:smallNoiseStochVol} would still \emph{not} have bounded coefficients (because of the terms $z$ and $-\frac12 z^2$ in the equation for the $Y$ component ).
Nevertheless, one can exploit the Lipschitz condition in (SV) (which is rather mild in this setting) in order to extend our main result on the asymptotics of conditional expectations.

\begin{theorem}[\textbf{Small noise asymptotics of local volatility: the general case}] \label{t:conditionalLawStochVol}
Assume condition (SV) on the coefficients $\beta_{\eps}, \beta_0$ and $\alpha$, and denote $(Y^{\eps}, Z^{\eps})$ the unique strong solution to \eqref{e:smallNoiseStochVol} with $\theta \ge 0$.
Fix $t>0$ and $y \in \R$, and assume there exist a unique $z^*=z^*_t(y)$ minimizing the action $\Lambda_t^{SV}$ in \eqref{e:rateFctXstochVol} on the set $N_y=(y,\cdot)$, and that there are finitely many minimizing controls in the set $\mathcal{K}_t^{(y,z^*)}$.
If one of the following conditions is satisfied:
\begin{itemize}
\item[(i)] $z_0>0$ and $\alpha(z_0)\neq 0$;
\item[(ii)] $z_0=0$ and $y \neq 0$;
\end{itemize}
then, for all functions $\phi \in C( \mathbb{R}^{n-l})$ with polynomial growth,
\be \label{e:stochVolsmallNoisePhi}
\esp[\phi(Z^{\eps}_t) | Y^{\eps}_t = y] \to \phi\left(z^*_t(y)\right) \qquad \mbox{as } \eps \downarrow 0.
\ee
In particular,
\be \label{e:stochVolsmallNoiseLocVol}
\sigma_{loc}^{\eps}(t,y)^2 = 
\esp[\left(Z^{\eps}_t\right)^2 | Y^{\eps}_t = y] \to \left(z^*_t(y)\right)^2 \qquad \mbox{as } \eps \downarrow 0.
\ee
Finally, if there exist finitely many minimizers $z^{*}_i$ for $\Lambda^{SV}_t$ on $N_y$ (each one associated with finitely many minimizing controls $h_0 \in \mathcal{K}_t^{(y,z^{*}_i)}$), and one of conditions $(i)$ or $(ii)$ is satisfied, then the limits in \eqref{e:stochVolsmallNoisePhi} and \eqref{e:stochVolsmallNoiseLocVol} are replaced with $\sum_{i=1}^N a_i \phi(z^{*}_i)$, respectively $\sum_{i=1}^N a_i (z^{*}_i)^2$, for some weights $a_i \ge 0$ such that $\sum_{i=1}^N a_i = 1$.
\end{theorem}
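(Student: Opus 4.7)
My plan is to reduce the unbounded stochastic volatility setting to the bounded-coefficient framework of Theorem \ref{t:conditionalLaw} and Corollary \ref{c:polGrowth} via a truncation argument, followed by a comparison of the original and truncated conditional laws.

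\emph{Truncation.} Fix $R > 0$ large and introduce smooth globally bounded coefficients $\tilde\alpha, \tilde\beta_\eps, \tilde\beta_0$ (with bounded derivatives of all orders) that agree with $\alpha, \beta_\eps, \beta_0$ on $[-R, R]$, and likewise truncate the $z$ and $z^2$ factors appearing in the $Y$-equation. This produces a system $(\tilde Y^\eps, \tilde Z^\eps)$ satisfying all the boundedness hypotheses of Theorem \ref{t:conditionalLaw}. By the Gronwall-type bound \eqref{e:gronwallSolutionMap} in Lemma \ref{l:propertiesLambda}, every minimizing control for $\Lambda^{SV}_t(y, z^*)$ has a trajectory confined to a compact set depending only on $t$, $z_0$, the Lipschitz constants, and the energy $|h_0|_H^2 = 2 \Lambda^{SV}_t(y, z^*)$. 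For $R$ larger than this bound, $\tilde\Lambda^{SV}_t$ coincides with $\Lambda^{SV}_t$ on a neighbourhood of $(y, z^*)$, the sets of minimizing controls coincide (and remain finite), and $z^*$ is also the unique minimizer of $\tilde\Lambda^{SV}_t$ on $N_y$.

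\emph{Structural conditions.} For the truncated system I check (sH) and the invertibility of $C_{x_0}(h_0)$ along every relevant minimizing control. Since $\alpha(0)\neq 0$, the bracket $[\sigma_1, \sigma_2]$ complements $\sigma_1, \sigma_2$ precisely at $z=0$, so (sH) holds everywhere. Condition $(i)$ makes $\sigma_1(z_0), \sigma_2(z_0)$ span $\R^2$, and Lemma \ref{l:invertibilityLocalEllipt} applies at $s=0$. Condition $(ii)$ forces $\varphi^{(2)}_\cdot \not\equiv 0$ along any control, since $\varphi^{(2)} \equiv 0$ would yield $\varphi^{(1)}_t = 0 \neq y$; continuity of $\alpha$ near zero then ensures that the trajectory visits a point where both $z$ and $\alpha(z)$ are nonzero, again triggering Lemma \ref{l:invertibilityLocalEllipt}. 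Applying Corollary \ref{c:polGrowth} (or its extension to finitely many minimizers in Remark \ref{r:finitelyMany}) to the truncated system then gives
\[
\tilde\esp\bigl[\phi(\tilde Z^\eps_t)\,\big|\,\tilde Y^\eps_t = y\bigr] \to \phi(z^*)
\]
(respectively the weighted sum $\sum a_i \phi(z^*_i)$ in the multi-minimizer case) as $\eps \downarrow 0$.

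\emph{Comparison and main obstacle.} The remaining step, which is the principal obstacle, is to show that truncation does not alter the asymptotic conditional expectation. The pathwise solutions $(Y^\eps, Z^\eps)$ and $(\tilde Y^\eps, \tilde Z^\eps)$ coincide on the event $\{\sup_{s \le t}|Z^\eps_s| \le R\}$, and because $Z^\eps$ alone obeys a one-dimensional SDE with Lipschitz coefficients, the classical Freidlin--Wentzell LDP produces a super-exponential tail bound
\[
\Prob\Bigl(\sup_{s \le t}|Z^\eps_s| > R\Bigr) \le \exp\bigl(-I(R)/\eps^{2}\bigr), \qquad I(R)\to\infty \text{ as } R\to\infty.
\]
Fixing any $\eta > 0$ and choosing $R$ so large that $I(R) > \Lambda^{SV}_t(N_y) + 2\eta$, and combining this with the lower bound $f^\eps_t(y) \ge \exp\bigl(-(\Lambda^{SV}_t(N_y)+\eta)/\eps^2\bigr)$ transferred from the truncated system via Proposition \ref{p:marginVaradh}, one shows that the tail contributions to both $f^\eps_t(y)$ and $\int \phi(z) p^\eps_t(y,z)\,dz$ are negligible compared with the bulk, even with polynomially growing $\phi$, because the super-exponential decay coming from $I(R)$ dominates the polynomial factor $|z|^k$ once $R$ is chosen large enough as a function of $\eta$, $k$, and $\Lambda^{SV}_t(N_y)$. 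The bulk contributions are asymptotic to those of the truncated system on this high-probability event, so $\esp[\phi(Z^\eps_t) | Y^\eps_t = y]$ shares the same limit as its truncated counterpart, establishing \eqref{e:stochVolsmallNoisePhi} and, specializing to $\phi(z)=z^2$, also \eqref{e:stochVolsmallNoiseLocVol}.
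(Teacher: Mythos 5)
Your truncation setup and the verification of the structural conditions (strong H\"ormander everywhere via the bracket $[\sigma_1,\sigma_2]$, invertibility of $C_{x_0}(h)$ under Condition $(i)$ directly, or under Condition $(ii)$ by forcing the $z$-trajectory to visit the elliptic region) exactly mirror the paper's proof and are correct. The genuine gap is in the comparison step. You propose to control the discrepancy between the original and truncated conditional expectations by the unconditional super-exponential bound $\Prob(\sup_{s \le t}|Z^\eps_s| > R) \le \exp(-I(R)/\eps^2)$, and to conclude by dividing by the marginal density $f^\eps_t(y) \ge \exp(-(\Lambda^{SV}_t(N_y)+\eta)/\eps^2)$. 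But the quantity you actually need to control is pointwise in $y$: the contribution of the excursion event $A^c = \{\sup_{s \le t}|Z^\eps_s| > R\}$ to the \emph{joint density} $p^\eps_t(y,z)$, not to the total probability. Writing $q^\eps(y,z)$ for the density of $(Y^\eps_t,Z^\eps_t)\mathbf{1}_{A^c}$, you know $\int q^\eps(y,z)\,dy\,dz \le \Prob(A^c)$, but there is no reason for $\int_{|z| \le R} q^\eps(y,z)\,dz$ at the fixed $y$ of interest to be bounded by $\Prob(A^c)$; it could in principle concentrate exactly near that $y$. So the ratio $\int_{|z|\le R} q^\eps(y,z)\,dz / f^\eps_t(y)$ is not controlled by $\Prob(A^c)/f^\eps_t(y)$, and the comparison does not close.

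The paper circumvents this by never reducing the excursion term to an unconditional bound. In the decomposition over $\{\tau^\eps_R > t\}$, $\{\tau^\eps_R \le t,\, |Z^\eps_t| \le R\}$ and $\{|Z^\eps_t| > R\}$, the middle piece is bounded by $2|\phi|_{\infty,R}\,\Prob(\sup_{s\le t}|Z^\eps_s| \ge R \,|\, Y^\eps_t = y)$, and this \emph{conditional} probability is shown to vanish by the path-level analogue of Theorem~\ref{t:conditionalLaw} (Remark~\ref{r:finiteDim}): the conditional law of the whole trajectory $(Z^\eps_s)_{s\le t}$ given $Y^\eps_t = y$ concentrates on the finitely many minimizing paths, whose sup-norms are bounded by an explicit $\hat R = C\,e^{C\sqrt{2\Lambda^{SV}_t(N_y)}}$ coming from the Gronwall estimate~\eqref{e:gronwallSolutionMap}; taking $R > \hat R + 1$ kills the middle term. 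The final piece $\{|Z^\eps_t| > R\}$ is handled, as in your plan, by Proposition~\ref{p:tailInt}. To repair your argument you would either need a pointwise-in-$y$ Kusuoka--Stroock-type density bound for the law restricted to $A^c$ (an extension of Lemma~\ref{l:keyEstim} with the non-smooth indicator $\mathbf{1}_{A^c}$ as weight), or you should switch to the conditional path-law argument that the paper uses.
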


\begin{remark}
\emph{It will be seen in the next two sections that conditions $(i)$ or $(ii)$ in the previous theorem are met in specific applications to small-time and space asymptotics of local volatilities, see Theorems \ref{t:smallTimeLocalVol} and \ref{r:steinSteinWings}.}
\end{remark}

\begin{proof}
Let $b_{\eps}^R$, $b_0^R$, and $\alpha^R$, $R>0$, be smooth and bounded extensions of $b_{\eps}|_{B_R(0)}$, $b_0|_{B_R(0)}$, and $\alpha|_{B_R(0)}$ respectively (see the Appendix \ref{s:app0} for a precise definition of such extensions), and denote $(Y^{\eps,R}, Z^{\eps,R})$ the unique strong solution to \eqref{e:smallNoiseStochVol} when $b_{\eps}$ and $\alpha$ are replaced with $b_{\eps}^R$ and $\alpha^R$.
Also denote $\Lambda_t^R$ the action function associated to the limiting deterministic system \eqref{e:ODEstochVol} with coefficients $b_0^R, \alpha^R$.
Set
\[
\tau_R^{\eps} = \inf\{s \ge 0 : |Z^{\eps}_s| > R \}.
\]
On the event $\{ \tau_R^{\eps} > t\}$, by the pathwise uniqueness for the second equation in \eqref{e:smallNoiseStochVol}, $(Z^{\eps}_s)_{s \le t}$ is indistiguishable from $(Z^{\eps,R}_s)_{s \le t}$.
In addition,
\[
Y^{\eps}_t = -\frac12 \eps^{\theta} \int_0^t \left(Z^{\eps}_s \right)^2 ds + \eps \int_0^t Z^{\eps}_s dB^1_s
= -\frac12 \eps^{\theta} \int_0^t \bigl(Z^{\eps,R}_s \bigr)^2 ds + \eps \int_0^t Z^{\eps,R}_s dB^1_s
=Y^{\eps,R}_t, \quad a.s.
\]
Note that the assumption $\alpha(0)\neq 0$ guarantees that strong \Horm condition is satisfied at all points for the SDE \eqref{e:smallNoiseStochVol}: indeed, simple calculations show that
\[
[\sigma_1, \sigma_2 ](z) = -\det(\sqrt{\Gamma}) \left(\begin{array}{c} \alpha(z) \\ 0 \end{array} \right),
\]
therefore the vectors
\[
\sigma_1(0) = \left(\begin{array}{c} 0 \\ \alpha(0) \sqrt{\Gamma}_{21} \end{array} \right); \qquad
\sigma_2(0) = \left(\begin{array}{c} 0 \\ \alpha(0) \sqrt{\Gamma}_{22} \end{array} \right); \qquad
[\sigma_1, \sigma_2 ](0) = -\det(\sqrt{\Gamma}) \left(\begin{array}{c} \alpha(0) \\ 0 \end{array} \right)
\]
span the full $\R^2$ (when the assumption $\det(\Gamma) = 1-\rho^2 \neq 0$ is in force).
\\
Denoting $|\phi|_{\infty,R} := \sup_{|z| \le R} |\phi(z)|$, one has
\be \label{e:decomposition}
\begin{aligned}
\left| \esp[ \phi(Z^{\eps}_t)|Y^{\eps}_t=y] - \phi(z^*) \right| &\le
\esp[ |\phi(Z^{\eps}_t)- \phi(z^*)| 1_{ \tau^{\eps}_R > t} | Y^{\eps}_t=y]
\\
&\quad+ 
\esp[ |\phi(Z^{\eps}_t)- \phi(z^*)| 1_{ \tau^{\eps}_R \le t} 1_{|Z^{\eps}_t|\le R} | Y^{\eps}_t=y]
\\
&\quad+ 
\esp[ |\phi(Z^{\eps}_t)- \phi(z^*)| 1_{ \tau^{\eps}_R \le t } 1_{|Z^{\eps}_t|>R} | Y^{\eps}_t=y]
\\
&\le \esp[ |\phi(Z^{\eps,R}_t)- \phi(z^*)| 1_{ \tau^{\eps}_R > t } | Y^{\eps,R}_t=y]
+
2 |\phi|_{\infty,R} \: \Prob\left( \tau^{\eps}_R \le t | Y^{\eps}_t=y \right)
\\
&\quad+ C \int_{|z|>R} (1+|z|^k) \frac{p^{\eps}(y,z)}{f^{\eps}(y)} dz
\\
&\le \esp[ (\phi(Z^{\eps,R}_t)- \phi(z^*))^2 | Y^{\eps,R}_t=y]^{1/2}
+
2 |\phi|_{\infty,R} \: \Prob\left( \sup_{s \le t} |Z^{\eps}_s| \ge R | Y^{\eps}_t=y \right)
\\
&\quad+ C \int_{|z|>R} (1+|z|^k) \frac{p^{\eps}(y,z)}{f^{\eps}(y)} dz
\\
&=: \epsilon_1(\eps,R) + \epsilon_2(\eps,R) + \epsilon_3(\eps,R).
\end{aligned}
\ee
where we have used \Hold's inequality in the last step.
We will now show that taking $R$ large enough, but fixed, the right hand side tends to zero as $\eps \to 0$.

$\epsilon_1$: Lemma \ref{l:rateFctLoc} in Appendix \ref{s:app0} establishes that $R$ can be chosen such that $z^*$ is also the unique minimum point of the function $z \mapsto \Lambda^R(y,z)$.
Assume that Condition $(i)$ is satisfied. The vector fields $\sigma_1(z), \sigma_2(z)$ defined in \eqref{e:sigmaStochVol} span the whole $\R^2$ at the starting point $z_0$: this is enough to establish, see Lemma \ref{l:invertibilityLocalEllipt}, that the covariance matrix $C_{(0,z_0)}(h)$ is invertible for \emph{all} $h$.
Otherwise, assume that Condition $(ii)$ is satisfied.
From the continuity of $\alpha$ and the condition $\alpha(0) \neq 0$ in (SV), $\alpha$ is bounded away from zero in a neighbourhood $V$ of $z_0=0$.
A simple inspection of the limiting (truncated) controlled system
\[
\begin{aligned}
d\varphi^{(1,R)}_t &= -\frac12 \bigr(\varphi^{(2,R)}_t\bigr)^2 dt + \varphi^{(2,R)}_t d(\sqrt{\Gamma} h)^1_t, \quad \varphi^{(1,R)}_0 = 0
\\
d\varphi^{(2,R)}_t &= \beta_0^R\bigr(\varphi^{(2,R)}_t\bigr) dt + \alpha^R\bigr(\varphi^{(2,R)}_t\bigr) d(\sqrt{\Gamma} h)^2_t, \quad \varphi^{(2,R)}_0 = z_0
\end{aligned}
\]
shows that $\varphi^{(2,R)}_s=0$ for all $s \in [0,t]$ implies $y = \varphi^{(1,R)}_t = 0$.
Therefore, the second coordinate of the controlled path $z_s=\varphi^{(2,R)}_s$ must cross the set $V \setminus \{0\}$, where the diffusion vector fields are elliptic, in order to have $y_t = y \neq 0$.
Lemma \ref{l:invertibilityLocalEllipt} then allows to conclude that $C_{(0,z_0)}(h)$ is invertible for every $h \in \mathcal{K}_t^{(y,\cdot)}$.
In both cases, the hypotheses of Theorem \ref{t:conditionalLaw} are satisfied, yielding $\epsilon_1(\eps,R) \to |\phi(z^*)-\phi(z^*)|= 0$ as $\eps \to 0$.

$\epsilon_2$: Any optimal control $h_0 \in \mathcal{K}_t^{(y,z^*)}$ satisfies $\frac12 |h_0|_H^2 = \Lambda_t(y,z^*) = \Lambda_t(N_y)$.
Estimate \eqref{e:gronwallSolutionMap} in Lemma \ref{l:propertiesLambda} then implies
\[
\sup_{s \le t} |\varphi^{(2)}_s(h_0)| \le C e^{C |h_0|_H} = C e^{C \sqrt{2 \Lambda_t(N_y)}} := \hat{R}
\]
where $C=C(t,z_0,K)$ is the constant defined in Lemma \ref{l:propertiesLambda}, and $K$ is a common Lipschitz constant for $\beta_0$ and $\alpha$.
It follows from Theorem \ref{t:conditionalLaw} and the subsequent Remark \ref{r:finiteDim} that the law of $(Z^{\eps}_s, s \le t)$ conditional on $Y^{\eps}_t=y$ converges weakly to a law supported by the finitely many paths $\{ \varphi^{(2)}_{\cdot}(h_0): h_0 \in \mathcal{K}^{(y,z^*)}_{min} \}$. 
Taking $R > \hat{R}+1$, it is clear that
\[
\Prob\left( \sup_{s \le t} |Z^{\eps}_s| \ge R \big| Y^{\eps}_t=y \right) \to 0, \quad \mbox{as } \eps \to 0,
\]
therefore $\epsilon_2(\eps,R) \to 0$ as well.

$\epsilon_3$: The integral term in $\epsilon_3(\eps,R)$ also appears in the proof of Corollary \ref{c:polGrowth}, and can be bounded exactly as done there.
\\
The proof in the case of finitely many minimizers $z^{*}_i$ goes through the same steps, using in \eqref{e:decomposition} the fact that $\{z^*_i\}_i$ is also the set of global minimizers of $z \mapsto \Lambda^R_t(y,z)$ when $R$ is large enough (see Lemma \ref{l:rateFctLoc} in the Appendix).
\end{proof}

\subsection{Small-time behavior and Berestycki, Busca and Florent \cite{BBFstochVol2004} asymptotics of efficient volatility revisited}
\label{s:shorTimeLocalVol}

The short time behavior of the local volatility function obtained as a projection of stochastic volatility was addressed by Berestycki, Busca and Florent \cite[section 5]{BBFstochVol2004}, who use local volatility as an intermediate step in the computation of the implied volatility of European options.
Using Theorem \ref{t:conditionalLawStochVol}, one can give a generalization of their result, stated for stochastic volatility models with bounded and uniformly elliptic coefficients, to hypoelliptic models with unbounded coefficients.

\begin{theorem} \label{t:smallTimeLocalVol}
Assume that the coefficients $\beta, \alpha$ in \eqref{e:stochVol} are smooth and Lipschitz functions with $\alpha(0)\neq 0$ and $\alpha(Z_0)\neq 0$, and consider the unique strong solution $(Y,Z)$ to \eqref{e:stochVol}.
Let $\Lambda_t^{SV}(y,z)$ be the action function of the system \eqref{e:ODEstochVol} when $b_0 \equiv 0$, $\theta = 2$ and $z_0=Z_0$, that is
\[
\Lambda_t^{SV}(y,z)= \inf \Bigl\{\frac12 |h|_H^2: (\varphi^{(1)}_0,\varphi^{(2)}_0)=(0,Z_0); \: (\varphi^{(1)}_t,\varphi^{(2)}_t)=(y,z) \Bigr\}
\]
\[
\begin{aligned}
d\varphi^{(1)}_t &= \varphi^{(2)}_t d\Bigl(\sqrt{\Gamma} h\Bigr)^1_t, \quad \varphi^{(1)}_0 = 0,
\\
d\varphi^{(2)}_t &= \alpha(\varphi^{(2)}_t) d\Bigl(\sqrt{\Gamma} h\Bigr)^2_t, \quad \varphi^{(2)}_0 = Z_0.
\end{aligned}
\]
Fix $y \in \R$ and assume there exists a unique $z^*=z^*(y)$ minimizing the function $\Lambda_1^{SV}$ on the line $N_y=(y,\cdot)$, and that there are finitely many minimizing controls in $\mathcal{K}_t^{(y,z^*)}$.
Then
\be \label{e:shortTimeLocalVol}
\sigma^2_{loc}(t,y) = 
\esp[\left(Z_t\right)^2 |Y_t=y] \to z^*(y)^2 \qquad \mbox{as } t \to 0.
\ee
In the presence of finitely many minimizers $z^*_i(y)$, $i=1,\dots,N$ (each one associated with finitely many minimizing controls in $\mathcal{K}_t^{(y,z^*_i)}$), convergence holds towards a convex combination of the $z^*_i(y)^2$:
\be \label{e:shortTimeLocalVolFinitelyMany}
\sigma^2_{loc}(t,y) 
\to \sum_{i=1}^N a_i z^*_i(y)^2 \qquad \mbox{as } t \to 0.
\ee
\end{theorem}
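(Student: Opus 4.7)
The plan is to reduce the short-time limit to the small-noise setting of Theorem \ref{t:conditionalLawStochVol} via Brownian scaling. Setting $t = \eps^2$ and defining $(Y^{\eps}_s, Z^{\eps}_s) := (Y_{\eps^2 s}, Z_{\eps^2 s})$, the scaling identity $B_{\eps^2 s} \sim \eps \tilde{B}_s$ shows that $(Y^{\eps}, Z^{\eps})$ solves, in law, the small-noise SDE \eqref{e:smallNoiseStochVol} with $\theta = 2$, drift $\beta_{\eps}(z) := \eps^2 \beta(z)$ (so that $\beta_0 \equiv 0$), and constant initial condition $z_0^{\eps} \equiv Z_0$. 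Evaluating at $s = 1$ gives the distributional equality $(Y^{\eps}_1, Z^{\eps}_1) \sim (Y_t, Z_t)$; moreover the limiting controlled ODE \eqref{e:ODEstochVol} collapses to the driftless system displayed in the statement, so the action $\Lambda_1^{SV}$ defined in the theorem coincides with the action governing the conditional asymptotics of $(Y^{\eps}_1, Z^{\eps}_1)$ as $\eps \downarrow 0$.

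Next I would check the hypotheses of Theorem \ref{t:conditionalLawStochVol}. Condition (SV) follows from the assumed smoothness and Lipschitz regularity of $\alpha, \beta$; in particular $\beta_{\eps} = \eps^2 \beta$ converges to $\beta_0 \equiv 0$ in $C^{\infty}$ on compact sets, with uniform-in-$\eps$ bounds on all derivatives, matching \eqref{e:driftConvergenceStochVol}. Uniqueness of the minimizer $z^*(y)$ of $\Lambda_1^{SV}(y, \cdot)$ and finiteness of the set of minimizing controls in $\mathcal{K}_1^{(y,z^*)}$ are postulated explicitly. Finally Condition $(i)$ of Theorem \ref{t:conditionalLawStochVol} is in force since $Z_0 \neq 0$ and $\alpha(Z_0) \neq 0$ by assumption. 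Applying the theorem with the polynomially-growing test function $\phi(z) = z^2$ yields
\[
\esp\bigl[(Z^{\eps}_1)^2 \,\big|\, Y^{\eps}_1 = y\bigr] \longrightarrow z^*(y)^2 \qquad \text{as } \eps \downarrow 0,
\]
which by the scaling identification above is precisely \eqref{e:shortTimeLocalVol}.

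The extension to finitely many minimizers $\{z^*_i(y)\}_{i=1}^N$ is immediate from the concluding sentence of Theorem \ref{t:conditionalLawStochVol}: the identical argument produces $\esp[Z_t^2 \mid Y_t = y] \to \sum_i a_i z^*_i(y)^2$ for some convex weights $a_i \geq 0$, which is \eqref{e:shortTimeLocalVolFinitelyMany}. I do not anticipate a serious technical obstacle: the main care lies in the scaling bookkeeping, verifying that the $\eps^2$ rescaling of the drift is consistent with $\theta = 2$ and that the limit action with $\beta_0 \equiv 0$ matches the $\Lambda_1^{SV}$ appearing in the hypothesis. Once this correspondence is pinned down, the conclusion is a direct specialization of the general conditional-law theorem to a quadratic test function.
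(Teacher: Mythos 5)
Your proposal is correct and follows the paper's own argument essentially verbatim: the same Brownian rescaling $t=\eps^2$, $(Y^{\eps}_s,Z^{\eps}_s)=(Y_{\eps^2 s},Z_{\eps^2 s})$ identifying the short-time problem with a small-noise SDE of the form \eqref{e:smallNoiseStochVol} with $\theta=2$, $\beta_{\eps}=\eps^2\beta$, $\beta_0\equiv0$, $z_0^{\eps}\equiv Z_0$, followed by verification of (SV) and Condition $(i)$ and an appeal to Theorem \ref{t:conditionalLawStochVol} with $\phi(z)=z^2$. The treatment of finitely many minimizers is also handled identically.
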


\begin{proof}
For every $\eps > 0$, the process $(Y^{\eps}_t, Z^{\eps}_t)_{t \ge 0}:= (Y_{t \eps^2}, Z_{t \eps^2})_{t \ge 0}$ has the same law as the solution of the SDE \eqref{e:smallNoiseStochVol} with 
\[
\beta_{\eps}(z)=\eps^2 \beta(x); \qquad \theta=2; \qquad z^{\eps}_0=Z_0, \quad \forall \eps>0.
\]
The functions $\alpha$, $\beta_{\eps}$ and $\beta_0=\lim_{\eps \to 0} \beta_{\eps}\equiv 0$ clearly satisfy assumption (SV).
Therefore, the hypotheses of Theorem \ref{t:conditionalLawStochVol}, case $(i)$, are satisfied, and Theorem \ref{t:smallTimeLocalVol} follows.
\end{proof}
\medskip

Once \eqref{e:shortTimeLocalVol} is at hand, proceeding along the lines of \cite{BBFstochVol2004}, one could retrieve the well-known short-time limit for the implied volatility, namely (see \cite[Theorem 1.2]{BBFstochVol2004})
\be \label{e:BBFimpliedVol} 
\sigma_{BS}(t,y) \to \frac{|y|}{\sqrt{2\Lambda^{SV}_1(N_y)}}
\qquad \mbox{as } t \downarrow 0,
\ee
where $\sigma_{BS}(t,y)$ is the Black-Scholes implied volatility of a Call option with maturity $t$ and log-moneyness $y=\log(K/S_0)$.
In this sense, Theorem \ref{t:smallTimeLocalVol} does not yield a refinement of the asymptotics of implied volatility in \cite{BBFstochVol2004}, but rather allows to weaken the hypotheses on the model coefficients.
As addressed in Section \ref{s:varadh}, a more direct way to prove \eqref{e:BBFimpliedVol} would be to use the short-time version of Varadhan's formula \ref{e:VaradhMarginSmallNoise} for the marginal density of the projection $Y_t$, see Remark \ref{r:marginal}.
\bigskip

\textbf{Comments on heat kernel expansions and the Laplace method.}
We compare our approach to Theorem \ref{t:smallTimeLocalVol} to the route followed in Henry-Labord\`ere \cite[Chap. 6]{PH2}.
Starting from a small-time heat kernel expansion
\be \label{e:smallTimeHKE}
p_t(x) = \frac1{2\pi t} e^{-\frac{\Lambda(x)}{t}}(c_0(x) + O(t;x)) \qquad \mbox{as } t \downarrow 0,
\ee
and assuming that for some $y$, the map $z \mapsto \Lambda(y,z)$ has a unique minimizer $z_y^*$ such that $\partial_{zz} \Lambda(y,z^*_y) > 0$, an heuristic application of the Laplace method yields:
\be \label{e:Laplace}
\begin{aligned}
\sigma_{\mathrm{loc}}(t,y)^2 = \esp\left[Z_t^2 | Y_t = y \right]
&= \frac{\int z^2 \ p_t(y,z) dz}{\int p_t(y,z) dz} 
\\ 
&= \frac{\int z^2 e^{-\frac{\Lambda(y,z)}{t}}(c_0(y,z) + O(t;x)) dz}
{\int e^{-\frac{\Lambda(y,z)}{t}}(c_0(y,z) + O(t;x)) dz}
\\
&\sim \frac{ (z_y^*)^2 \ e^{-\frac{\Lambda(y,z^*)}t} c_0(y,z^*_y) \: \sqrt{2 \pi t} \bigl( \partial_{zz} \Lambda(y,z^*) \bigr)^{-1/2} }
{e^{-\frac{\Lambda(y,z^*)}{t}} c_0(y,z^*_y) \: \sqrt{2 \pi t} \bigl( \partial_{zz} \Lambda(y,z^*) \bigr)^{-1/2} },
\qquad \mbox{as } t \downarrow 0
\\
&= (z_y^*)^2
\end{aligned}
\ee
in agreement with \eqref{e:shortTimeLocalVol}.
Of course, here we have plugged the expansion \eqref{e:smallTimeHKE}, which is know to hold uniformly on compact sets (that do not intersect the cut-locus), and then integrated on the whole line, neglecting the tail contributions to the integrals in \eqref{e:Laplace}.
The condition $\partial_{zz} \Lambda(y,z^*) > 0$, typical from Laplace asymptotics, plays the role of the second-order `non-focality' condition in Deuschel et al. \cite[Definition 2.7]{DFJV:I}.
As pointed out in our discussion after Remark \ref{r:marginal}, we do not rely here on such a non-degeneracy assumption. 
Moreover, the main message of Theorems \ref{t:conditionalLawStochVol} and \ref{t:smallTimeLocalVol} is that \emph{the asymptotic behaviour of the logarithm of the density} is enough to establish the leading order term of the local volatility function.
On the other hand, when a full heat kernel expansion is available as in \eqref{e:smallTimeHKE}, the Laplace method allows to provide higher-order terms in \eqref{e:Laplace}; see \cite[Chap. 6]{PH2} (where an ellipticity assumption is considered).

\subsection{Asymptotic slopes of local volatility in the Stein-Stein model} \label{s:asympSlopes}

In the Stein--Stein model \cite{SteinStein} (see Sch{\"o}bel and Zhu \cite{SchZhu} for the correlated case $\rho\neq 0$), the volatility process follows an Ornstein-Uhlenbeck process:
\be \label{e:steinStein}
\begin{aligned} 
dY_{t} &= -\frac12 Z_{t}^2 dt + Z_t dB^{1}_{t}, \qquad Y_{0}=0,
\\
dZ_{t} &= (a + bZ_{t})dt + c dB^{2}_{t}, \qquad Z_0=z_0>0,
\end{aligned}
\ee
with $a, b \in \R$, $c>0$ and $d \langle B^{1},B^{2} \rangle_t=\rho dt$ with $\rho \in (-1,1)$.
The typical mean-reverting form of the drift coefficient is obtained when $a\geq 0$ and $b<0$.
In the following, we consider $b<0$ and $\rho \le 0$ (the typical configuration in Equity markets) in order to streamline the computations, but this restriction is not essential.
\\
Setting $Y^{\eps}_t := \eps^2 Y_t, Z^{\eps}_t := \eps Z_t$, the rescaled variables are seen to satisfy the small-noise problem
\be \label{e:steinSteinSmallNoise}
\begin{aligned} 
dY^{\eps}_{t} &= -\frac12 (Z^{\eps}_{t})^2 dt + \eps Z^{\eps}_{t} dW^{1}_{t}, \qquad Y^{\eps}_{0}=0,
\\
dZ^{\eps}_{t} &= (a\eps + b Z^{\eps}_{t}) dt + \eps \, c \, dW^{2}_{t}, \qquad Z^{\eps}_0=\eps z_0.
\end{aligned}
\ee
which belongs to the general class \eqref{e:smallNoiseStochVol} with $\theta=0$, $\beta_{\eps}(z)=a \eps + b z$, $\alpha(z)=c$.
Note that $z^{\eps}_0:=\eps z_0 \to 0$ as $\eps \to 0$, that is, we are in a situation where the limiting starting point $x_0=(y_0,z_0)=(0,0)$ belongs to the sub-elliptic set $\{z=0\}$.
\medskip

The Hamiltonian system associated to the Stein--Stein model was solved in \cite{DFJV:II}.
For every $y \neq 0$, the solution of the ODEs \eqref{e:hamiltonODE} subject to the boundary conditions
\[
\begin{aligned}
x_0=(0,0); \qquad &x_t=(y,\cdot), \ \ y \neq 0
\\
 & p_t=(\cdot,0)
\end{aligned}
\]
reveal the existence of two minimizing controls $h_0^{\pm} \in \mathcal{K}_t^{(y,\cdot)}$, yielding the two (symmetric) arrival points $\varphi^{h_0^{\pm}}_t = x^{\pm}_t = (y,z_t^{\pm}(y)) \in N_y$. 
Full details about explicit computations are found in \cite[Section 5.2]{DFJV:II}; the two minimizers $z_t^{\pm}(y)$ are given by
\be \label{e:optimalZ}
z_t^{\pm}(y) = \pm \frac{q(y) c^2 t}{r_1} \sin(r_1)
\ee
where
\be \label{e:q}
q(y) = \frac2c \left[ \frac{2r_1^3 y }{t^3 \left( \bigl(c^2(2p-1)-2\rho c \tilde{b}\bigr)(2r_1-\sin(2r_1))
+2\rho c r_1 (1-\cos(2r_1))/t \right) } \right]^{1/2}
\ee
with $\tilde{b}=b+\rho c p$ and $p = p(\overline{r},y)$, where
\be \label{e:p}
p(r,y) = \frac1{2(1-\rho^2)} \left[ \Bigl(1+2\rho \frac bc\Bigr) + \text{sign}(y) \sqrt{\Bigl(1+2\rho \frac bc\Bigr)^2
+ 4(1-\rho^2)\Bigl[\frac{b^2}{c^2} + \frac{r^2}{c^2 t^2}\Bigr] } \right],
\ee
and $\overline{r}=\overline{r}(y)$ is the smallest strictly positive solution to
\be \label{e:rRoot}
r \cos r - t (b + \rho c p(r,y) ) \sin r = 0.
\ee
The equation \eqref{e:rRoot} appears when imposing the transversality condition $p_t =(\cdot,0)$ from \eqref{e:transvers}.

\begin{remark}
\emph{
It is not difficult to show that $\overline{r}$ is the unique solution of equation \eqref{e:rRoot} in a bounded interval $I$, which is independent of the model parameters.
In practice, $\overline{r}$ can be found using some simple root-finding procedure (such as bisection or Newton method).
}
\end{remark}

\noindent
Applying Theorem \ref{t:conditionalLawStochVol} and the scaling leading to \eqref{e:steinSteinSmallNoise}, we are able to show that the local variance $\sigma^2_{\mathrm{loc}}(t,y)$ in the Stein-Stein model is asymptotically linear for large values of $|y|$, in a similar spirit to Lee's moment formula \cite{Lee} for the implied volatility (see also the subsequent refinements in \cite{GulForm}).

\begin{theorem}[\textbf{Local volatility `wings' in the Stein--Stein model}] \label{r:steinSteinWings}
Denote $z_t(y)$ the common absolute value of $z_t^{\pm}(y)$ in \eqref{e:optimalZ}.
The local volatility in the correlated Stein--Stein model \eqref{e:steinStein} satisfies, for any $t>0$,
\be \label{e:steinSteinWings}
\lim_{y \to \pm \infty} \frac{\sigma^2_{\mathrm{loc}}(t,y)}{|y|}
= 
\lim_{y \to \pm \infty} \frac1{|y|}
\esp[\left(Z_t\right)^2 |Y_t=y]
=
\left( z_t(\pm1) \right)^2
=
\Bigl( \frac{q c^2 t}{\overline{r}} \sin(\overline{r})\Bigr)^2
\ee
with $q=q(\pm 1)$ and $\overline{r}=\overline{r}(\pm1)$ according to the sign of $y$ in \eqref{e:steinSteinWings}, where $\overline{r}(y)$ is the smallest strictly positive solution of equation \eqref{e:rRoot} and the function $q$ is given in \eqref{e:q} above.
\end{theorem}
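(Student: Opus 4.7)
The plan is to reduce the spatial asymptotics $y\to\pm\infty$ to the small-noise regime $\eps\downarrow 0$ already handled by Theorem \ref{t:conditionalLawStochVol}, via a Brownian-type scaling, and then read off the limit from the explicit Hamiltonian analysis of the Stein--Stein model already recalled around \eqref{e:optimalZ}--\eqref{e:rRoot}.

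\textbf{Step 1 (scaling).} For $y\ne 0$, set $\eps := 1/\sqrt{|y|}$ and define $(Y^\eps_s,Z^\eps_s) := (\eps^2 Y_s,\eps Z_s)$. A direct application of It\^o's formula shows that $(Y^\eps,Z^\eps)$ solves exactly the small-noise Stein--Stein SDE \eqref{e:steinSteinSmallNoise}, which is of the form \eqref{e:smallNoiseStochVol} with $\theta=0$, $\beta_\eps(z)=a\eps+bz$, $\alpha\equiv c$, and initial value $z^\eps_0 = \eps z_0 \to 0$. Since the scaling is a deterministic bijection, conditioning is preserved and
\[
\frac{\sigma^2_{\mathrm{loc}}(t,y)}{|y|} \;=\; \frac{1}{|y|}\,\esp[Z_t^2\mid Y_t=y] \;=\; \esp\bigl[(Z^\eps_t)^2 \,\bigm|\, Y^\eps_t = \mathrm{sgn}(y)\bigr].
\]
The limit $y\to\pm\infty$ therefore coincides with $\eps\downarrow 0$ at a fixed target $y_0=\pm 1$.

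\textbf{Step 2 (limit via Theorem \ref{t:conditionalLawStochVol}).} Assumption (SV) is trivially satisfied ($\alpha\equiv c\ne 0$; $\beta_\eps,\beta_0(z)=bz$ smooth and Lipschitz). Since $y_0\ne 0$, Condition (ii) of Theorem \ref{t:conditionalLawStochVol} applies (Condition (i) fails as the limiting initial volatility $0$ sits on the sub-elliptic set $\{z=0\}$). The Hamiltonian analysis of \cite[Section 5.2]{DFJV:II} combined with Proposition \ref{p:pmp} and the transversality condition produces exactly two candidate minimizing arrival points $z_t^\pm(y_0)$ of \eqref{e:optimalZ}, obtained from the two signs in \eqref{e:p} and the smallest positive root $\overline{r}$ of \eqref{e:rRoot}; each is attained by a unique (hence finite set of) minimizing control. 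The ``finitely many minimizers'' version of Theorem \ref{t:conditionalLawStochVol} then gives
\[
\esp\bigl[(Z^\eps_t)^2\,\bigm|\,Y^\eps_t=y_0\bigr] \;\longrightarrow\; a_+\bigl(z_t^+(y_0)\bigr)^2 + a_-\bigl(z_t^-(y_0)\bigr)^2 \qquad \text{as } \eps\downarrow 0,
\]
for some weights $a_\pm\ge 0$ with $a_++a_-=1$. The key observation is that $z_t^-(y_0) = -z_t^+(y_0)$: the two squares coincide and the convex combination collapses to $z_t(y_0)^2$ regardless of the (unknown) weights. Inserting the closed form \eqref{e:optimalZ} at $y=\pm 1$ produces the announced $\bigl(\frac{qc^2 t}{\overline{r}}\sin\overline{r}\bigr)^2$.

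The main obstacle, beyond the routine bookkeeping of the scaling, is to certify that $\{z_t^+(y_0),z_t^-(y_0)\}$ exhausts the set of \emph{global} minimizers of $z\mapsto \Lambda^{SV}_t(y_0,z)$, and that only finitely many minimizing controls correspond to each. Pontryagin's principle only provides necessary optimality conditions; in principle, other positive roots of \eqref{e:rRoot} might yield competing critical points whose action could tie with (or beat) the one at $\overline{r}$. Thanks to the fully explicit form \eqref{e:optimalZ}--\eqref{e:rRoot}, the action $\tfrac12|h_0|_H^2$ can be computed as an explicit function of the chosen root, and a direct monotonicity check in that scalar variable confirms strict optimality of the smallest positive root (this is essentially contained in \cite[Section 5.2]{DFJV:II}). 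Once this selection is validated, the convergence in Step 2 is immediate and the explicit evaluation concludes the proof.
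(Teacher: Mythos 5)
Your proof is correct and follows the same route as the paper: the Brownian-type scaling $\eps^2=1/|y|$ reduces the wing limit to a small-noise limit at fixed target $\pm 1$, and Theorem \ref{t:conditionalLawStochVol}, invoked via Condition $(ii)$ since the limiting initial point $(0,0)$ lies on the degenerate set $\{z=0\}$, supplies the convergence, with the two symmetric minimizers $z_t^\pm$ collapsing the convex combination because $(z_t^+)^2=(z_t^-)^2$. One tiny inaccuracy: the $\pm$ arises from the symmetry of the Hamiltonian system visible directly in \eqref{e:optimalZ}, not from ``the two signs in \eqref{e:p}'' --- in \eqref{e:p}, $\mathrm{sign}(y)$ is a single fixed sign once $y$ is fixed --- but this does not affect the argument.
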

\medskip
\noindent
Note that the value of the limit in \eqref{e:steinSteinWings} does not depend on the initial volatility $z_0$, nor on the parameter $a$.

\begin{comment} \label{c:commentLocalVolWings}
\emph{The asymptotic formula \eqref{e:steinSteinWings} can be used to patch the numerical evaluation of the local volatility from Dupire's formula \cite{Dupire}, typically affected by numerical instabilities in the region $|y|\gg1$.
The use of \eqref{e:steinSteinWings}, together with the evaluation of $\sigma_{\mathrm{loc}}(t,y)$ in a region where numerics can be trusted (say a fixed, or adaptive, bounded domain in the $(t,y)$-plane) leads to a robust and globally defined local volatility surface, that can then serve as the basis for a Monte-Carlo evaluation of exotic option prices, with important consequences for model risk management.
An analogous result for the asymptotic slopes of the local variance in the Heston model \cite{Hest} was given in \cite{DmFG} (where the result \eqref{e:steinSteinWings} for the Stein-Stein model was announced), basing on previous work in \cite{FrGer}.
Note that the analysis in \cite{DmFG} is based on an implementation of the saddle-point method, and is hence tied to the manipulation of characteristic functions and to the affine structure enjoyed by the Heston model \cite{KellRes}.
Our main Theorem \ref{t:conditionalLawStochVol} does not rely on any particular structural assumption on the coefficients of the SDE, and is potentially applicable to families of models larger than the affine class.}
\end{comment}

\begin{proof}[Proof of Theorem \ref{r:steinSteinWings}]
Setting $\eps^2 = 1/|y|$ and using the change of variables $Y^{\eps}_t = \eps^2 Y_t, Z^{\eps}_t = \eps Z_t$ that leads to \eqref{e:steinSteinSmallNoise}, we have the identity
\[
\lim_{y\rightarrow \pm \infty} \frac{ \sigma_{\mathrm{loc}}^{2}(t,y) }{|y|}
=\lim_{y\rightarrow \pm \infty} \frac1{|y|} \mathbb{E} \left[ Z_{t}^{2}| Y_{t}=y\right]
=\lim_{\eps \rightarrow 0} \mathbb{E}\left[ \left(Z_{t}^{\eps}\right)^{2}| Y_{t}^{\eps}=\pm 1 \right].
\]
The last limit above exists, and is equal to the right hand side of \eqref{e:steinSteinWings}, if the application of Theorem \ref{t:conditionalLawStochVol} is justified.
The functions $\beta_{\eps}(z)=a \eps + b z \to bz =:\beta_0(z)$ and $\alpha(z)=c$ clearly satisfy assumption (SV).
Since the starting point is $(y_0,z_0)=(0,0)$ and the arrival subspaces are $N_{\pm1}=(\pm 1, \cdot)$, we are in case $(ii)$ of  Theorem \ref{t:conditionalLawStochVol}, and the claim follows.
\end{proof}

\subsection{Consistency with the Heston model and moment explosion} \label{s:consistencyHest}

Some basic It\^o calculus shows that when $a=0$, the Stein--Stein model \eqref{e:steinStein} can be obtained as an instance of the Heston model .
More precisely, consider a Heston model for the couple log-price/instantaneous variance $(Y^H_t, V_t)$:
\be \label{e:Heston}
\begin{aligned} 
dY^H_{t} &= -\frac12 V_{t} dt + \sqrt{V_t} d\tilde{B}^{1}_{t}, \qquad Y^H_{0}=0,
\\
dV_{t} &= (q + \kappa V_{t})dt + \xi \sqrt{V_t} d\tilde{B}^{2}_{t}, \qquad V_0=v_0,
\end{aligned}
\ee
where $\kappa<0$; $\xi,v_0>0$ and $\tilde{B}^{1}, \tilde{B}^{2}$ are two Brownian motions with correlation $\rho$.
When the parameters of the Heston model are given by 
\be \label{e:StStHestIntersect}
q=c^2, \quad \kappa=2b, \quad
\xi=2c, \quad v_0=z_0^2,
\ee
then the couple $(Y_t,Z_t^2)$ has same law as $(Y^H_t,V_t)$, for every $t>0$ (the identity in law actually holds for the entire processes; see \cite[Section 2.4]{ArchilBook} for details).
It follows that $\sigma_{\mathrm{loc}}(t,y)^2=\esp[Z_t^2|Y_t=y]=\esp[V_t|Y^H_t=y]=:\sigma_{\mathrm{loc}}^{Heston}(t,y)^2$ under the particular parameter configuration \eqref{e:StStHestIntersect}.

As pointed out in Comment \ref{c:commentLocalVolWings} above, the local variance is know to be asymptotically linear also in the Heston model (with general parameters); from \cite[Theorem 1]{DmFG}:
\be \label{e:hestonLocalVar}
\lim_{y \to \pm\infty} \frac{\sigma^{Heston}_{\mathrm{loc}}(t,y)^2}{|y|} = \frac{2 R_2(s_\pm)}{s_\pm(s_\pm-1)R_1(s_\pm)},
\ee
where $s_+ = s_+(t):=\sup\{s>0: \esp[e^{sY^{H}_t}]<\infty \}$ and $s_- = s_-(t):=\inf\{s<0: \esp[e^{sY^{H}_t}]<\infty \}$ are the upper resp. the lower critical exponents of $Y_t$, and
\begin{align}\label{eq:R1}
R_{1}(s)& = T c^{2}s(s-1)\left[ c^{2}(2s-1)-2\rho c(s\rho c+b)%
\right]  \\
& \quad -2(s\rho c+b)\left[ c^{2}(2s-1)-2\rho c(s\rho c+b)\right]
\notag \\
& \quad +4\rho c\left[ c^{2}s(s-1)-(s\rho c+b)^{2}\right];
\notag \\
R_{2}(s)& =2c^{2}s(s-1)\left[ c^{2}s(s-1)-(s\rho c+b)^{2}%
\right].  \label{eq:R2}
\end{align}
Under the parameter configuration \eqref{e:StStHestIntersect}, the asymptotic behavior \eqref{e:hestonLocalVar} should be consistent with Theorem \ref{r:steinSteinWings}; that is, the two limits should have the same value.
Since the two expressions on the right hand sides of \eqref{e:steinSteinWings} and \eqref{e:hestonLocalVar} are hardly assessed at a glance, we carefully check below - as a `sanity check' for our Theorem \ref{r:steinSteinWings} - that the two constants obtained by different methods are indeed equal.
Let us choose the $+$ sign in \eqref{e:hestonLocalVar} and \eqref{e:steinSteinWings} in what follows; the case with the $-$ sign is handled analogously.
\medskip

Under the condition $\kappa<0$ and $\rho \le 0$ in \eqref{e:Heston}, the critical exponents $s_+$ for the Heston model is the positive solution of the equation
\be \label{e:explTimeHeston}
T^*(s) := \frac2{{\sqrt{-\Delta(s)}}}\left(\arctan \frac{\sqrt{-\Delta(s)}}{\kappa+\rho \xi s} + \pi \right) = t
\ee
where $T^*(s)=\sup\{T>0: \esp[e^{sY^{H}_T}]<\infty\}$ is the explosion time of the (exponential) moment of order $s$, with
\[
-\Delta(s) := \xi^2 s(s-1) - (\kappa+\rho \xi s)^2.
\]
The explosion time $T^*$ can be computed explicitly exploiting the fact that the Heston couple $(Y^H,V)$ is an affine process; see \cite{KellRes}.
Under the condition $\rho^2 \neq 1$, $-\Delta(s)$ is positive for $s$ larger than its positive root $s_2$.
It is not difficult to see, then, that \eqref{e:explTimeHeston} admits a unique positive solution $s_+(t) \in (s_2,+\infty)$ for any value of $t>0$.\footnote{
In its turn, the negative exponent $s_-(t)$ is the unique solution of equation \eqref{e:explTimeHeston} on $(-\infty,s_1)$, where $s_1$ is the negative root of $-\Delta(s)$.
}

\begin{lemma} \label{l:mappingRoots}
Define the function $R(s) := \frac t2 \sqrt{-\Delta(s)}$ for $s>s_2$, where $s_2$ is the positive root of $-\Delta(s)$.
Denote
\be \label{e:sOfR}
s(R) :=
\frac1{2(1-\rho^2)}
\left[ 1+2\rho \frac{\kappa}{\xi} + \sqrt{(1+2\rho\frac{\kappa}{\xi})^2+4(1-\rho^2)\left(\frac{\kappa^2}{\xi^2}+\frac{4R^2}{\xi^2 t^2}\right)}
\right]
\ee
the inverse of the map $s \mapsto R(s)$.
Then, $s_+$ is the unique positive solution of equation \eqref{e:explTimeHeston} if and only if $R(s_+)=\frac t2\sqrt{-\Delta(s_{+})}$ is the smallest positive solution of the equation
\be \label{e:root2}
R \cos R - \frac{t}2 (\kappa+\rho \xi s(R)) \sin R = 0.
\ee
\end{lemma}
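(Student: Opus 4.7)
The plan is to perform the change of variables $R = \tfrac{t}{2}\sqrt{-\Delta(s)}$ and rewrite \eqref{e:explTimeHeston} in the new variable, then verify the direction of the bijection maps ``unique positive solution'' to ``smallest positive solution''.

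First, I would check that $s\mapsto R(s)$ is a smooth strictly increasing bijection from $(s_2,\infty)$ onto $(0,\infty)$. Expanding gives $-\Delta(s)=\xi^2(1-\rho^2)s^2-(\xi^2+2\kappa\rho\xi)s-\kappa^2$, a parabola opening upward whose larger root is $s_2$; hence $-\Delta$ is strictly increasing on $[s_2,\infty)$ from $0$ to $+\infty$, and so is $R$. The explicit inverse \eqref{e:sOfR} is obtained by solving the quadratic $4R^2/t^2=-\Delta(s)$ in $s$ and retaining the root larger than $s_2$.

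Next, writing $D(s):=\sqrt{-\Delta(s)}=2R(s)/t$, equation \eqref{e:explTimeHeston} is equivalent to
\[
\arctan\Bigl(\frac{D(s)}{\kappa+\rho\xi s}\Bigr)+\pi = R(s).
\]
Taking tangents of both sides and clearing denominators yields \eqref{e:root2} evaluated at $R=R(s_+)$. Going the other way, starting from a positive root $R$ of \eqref{e:root2} lying in $(\pi/2,\pi)$ and setting $s=s(R)$ from \eqref{e:sOfR}, one has $\tan R<0$ and $R-\pi\in(-\pi/2,0)$, so $\arctan(\tan R)=R-\pi$; the algebraic steps run in reverse and $s$ solves \eqref{e:explTimeHeston}.

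The delicate point is the branch of $\arctan$ together with the ``smallest positive'' claim. Under $\kappa<0$ and $\rho\le 0$, the inequality $-\Delta(0)=-\kappa^2<0$ gives $s_2>0$, and $\kappa+\rho\xi s<0$ for every $s\ge s_2$. Therefore $D(s)/(\kappa+\rho\xi s)<0$ for $s>s_2$, so the $\arctan$ lies in $(-\pi/2,0)$ and $R(s_+)\in(\pi/2,\pi)$. To conclude that $R(s_+)$ is the \emph{smallest} positive root of \eqref{e:root2}, I would inspect the left-hand side of \eqref{e:root2} directly: on $(0,\pi/2]$ one has $\cos R\ge 0$, $\sin R>0$, and $\kappa+\rho\xi s(R)<0$, so the left-hand side is strictly positive; at $R=\pi$ it equals $-\pi<0$; by continuity the first zero occurs inside $(\pi/2,\pi)$, and by the previous paragraph it must be $R(s_+)$. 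The main obstacle is precisely this bookkeeping of the $\arctan$ branch and the sign of $\kappa+\rho\xi s$; everything else reduces to a quadratic inversion and taking tangents.
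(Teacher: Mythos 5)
Your proof is correct and follows the same basic strategy as the paper: change variables to $R=\tfrac t2\sqrt{-\Delta(s)}$, take tangents of both sides of \eqref{e:explTimeHeston} to arrive at \eqref{e:root2}, and use $\kappa<0$, $\rho\le 0$ to locate the relevant root in $(\pi/2,\pi)$. Where you diverge is in how the ``smallest positive'' claim is established. The paper introduces the family of equations $\arctan\bigl(\tfrac{2R}{t(\kappa+\rho\xi s(R))}\bigr)+k\pi=R$, $k\ge 1$, notes that each has a unique positive root $R_k$ and that $\{R_k\}_{k\ge 1}$ exhausts the positive zeros of \eqref{e:root2}, and then observes that only $R_1$ lies in $(\pi/2,\pi)$. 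You instead argue by direct sign inspection: the left-hand side of \eqref{e:root2} is strictly positive on $(0,\pi/2]$ (since $\cos R\ge 0$, $\sin R>0$ and $\kappa+\rho\xi s(R)<0$) and equals $-\pi<0$ at $R=\pi$, so the first zero lies in $(\pi/2,\pi)$; combined with the reverse-direction step (any zero in $(\pi/2,\pi)$ produces a solution of \eqref{e:explTimeHeston}, which by hypothesis is unique) this pins the smallest zero down to $R(s_+)$. Your route is slightly more elementary and avoids having to justify that each $k$-indexed equation has a unique root and that the $R_k$ enumerate all zeros; the paper's route, in exchange, exhibits the full structure of the zero set of \eqref{e:root2}. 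One small stylistic point: it would help to make explicit, in the final sentence, that you are invoking \emph{both} of the two preceding paragraphs --- the branch computation showing $R(s_+)\in(\pi/2,\pi)$ and the reverse direction using uniqueness of $s_+$ --- since the first alone does not rule out a second zero in $(\pi/2,\pi)$.
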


\begin{proof}
For every $k \in \mathbb{N}^*$, the equation
\be \label{e:explosionGenericEq} 
\arctan\left(\frac{2R}{t(\kappa+\rho \xi s(R))}\right) + k \pi = R
\ee
has a unique root $R_k > 0$.
Applying the tangent function to both sides, it is seen that $\{R_k: k \in \mathbb{N}^*\}$ coincides with the set of (infinitely many) solutions to the equation $\frac{2R}{t (\kappa+\rho \xi s(R))}=\tan(R)$, which is equation \eqref{e:root2}.
Using $\kappa<0$, $\rho \le 0$ and $s(R) \ge 0$, it is easy to see that the smallest positive solution to \eqref{e:root2} is contained in the interval $(\pi/2,\pi)$.
On the other hand, using $\arctan \in (-\pi/2,\pi/2)$, it is clear that $R_1 \in (\pi/2,\pi)$ while $R_k \notin (\pi/2,\pi)$ for $k >1$; it then follows that the smallest positive root of \eqref{e:root2} coincides with the unique root $R_1$ of \eqref{e:explosionGenericEq} with $k=1$.
Setting $s_1=s(R_1)$, i.e. $R_1=\frac t2\sqrt{-\Delta(s_1)}$, it then holds that $s_1$ is the unique positive solution to
\[
\arctan\left(\frac{\sqrt{-\Delta(s)}}{t(\kappa+\rho \xi s)}\right) + \pi = \frac t2 \sqrt{-\Delta(s)}
\]
which is equation \eqref{e:explTimeHeston}, therefore $s_1 = s_+$.
Conversely, if $s_+$ denotes the unique root of \eqref{e:explTimeHeston}, then $R(s_+)=R_1$, and the claim is proved.
\end{proof}
\medskip

Now consider the particular Heston parameterization in \eqref{e:StStHestIntersect}. 
Plugging $\kappa=2b$ and $\xi=2c$ into \eqref{e:sOfR} shows that the function $r \mapsto s(r)$ in Lemma \ref{l:mappingRoots} coincides with the function $r \mapsto p(2r,1)$, with $p$ defined in \eqref{e:p}.
Then comparing equations \eqref{e:rRoot} and \eqref{e:root2}, it follows from Lemma \ref{l:mappingRoots} that
\be \label{e:sAndR}
2R(s_+) = t\sqrt{-\Delta(s_+)}
= t\sqrt{c^2 s_+(s_+-1)-(b+\rho c s_+)^2}
= \overline{r}(1),
\ee
or yet $s_+=s(\overline{r}(1)/2)=p(\overline{r}(1),1)$.
\bigskip

\textbf{The two constants obtained by different methods are the same}.
Denote
\[
A_{Hest}^2 := \frac{2 R_2(s_+)}{s_+(s_+-1)\overline{r}(s_+)};
\qquad
A_{StSt}^2 := \Bigl( \frac{q c^2 t}{\overline{r}} \sin(\overline{r})\Bigr)^2
\]
the asymptotic slopes of local variance as defined in \eqref{e:hestonLocalVar} and \eqref{e:steinSteinWings}, where $\overline{r}=\overline{r}(1)$ and $q=q(1)$.
Plugging $\kappa=2b$ and $\xi=2c$ into $A_H^2$, after some simplifications one obtains
\[
A_{Hest}^2 =
\frac{4 [c^2 s_+(s_+ -1)-\bar{b}^2]}
{T s_+(s_+ -1)[c^2 (2s_+ -1)-2\rho c \bar{b}] - (2s_+ -1)\bar{b} +2\rho c s_+(s_+ -1)}
\]
with $\bar{b}=b+\rho c s_+$.
Substituting $p=p(\overline{r},1)=s_+$ inside the expression for $q$ in \eqref{e:q}, $A^2_{StSt}$ reads
\be \label{e:aStStv2}
\begin{aligned}
A_{StSt}^2
&=
4 c^2 \overline{r}^2
\frac{\sin(\overline{r})^2}
{t(c^2(2s_+ -1)-2\rho c \bar b) (\overline{r}^2-\overline{r}\sin(\overline{r})\cos(\overline{r}))+ 2\rho c \overline{r}^2 \sin(\overline{r})^2}
\\
&=
4 c^2 \overline{r}^2
\left[
(t c^2(2s_+ -1)-2\rho c \bar b) \frac{\overline{r}^2-\overline{r}\sin(\overline{r})\cos(\overline{r})}{\sin(\overline{r})^2}+ 2\rho c \overline{r}^2 \right]^{-1}
\end{aligned}
\ee
Repeatedly applying \eqref{e:rRoot}, one has
\be \label{e:trigonTransform}
\begin{aligned}
\frac{\overline{r}^2-\overline{r} \sin(\overline{r})\cos(\overline{r})}
{\sin(\overline{r})^2}
&= \left(\frac{\overline{r}}{\sin(\overline{r})}\right)^2
- \frac{\overline{r} \cos(\overline{r})}{\sin(\overline{r})}
= \left(\frac{t \bar b}{\cos(\overline{r})}\right)^2
- t \bar b
\\
&= (t \bar b)^2 \left(1+\frac{\overline{r}^2}{(t \bar b)^2}\right)
- t \bar b
= (t \bar b)^2 + \overline{r}^2 -t\bar b
\\
&= t^2 c^2 s_+(s_+-1) -t\bar b
\end{aligned}
\ee
where we have used the identity \eqref{e:sAndR} in the last step.
Using \eqref{e:trigonTransform} and again $\overline{r}^2=t^2 c^2 s_+(s_+-1) - t^2 \bar b^2$ from \eqref{e:sAndR}, after some straightforward simplifications it follows from \eqref{e:aStStv2} that
\[
\begin{aligned}
A_{StSt}^2
&= \frac{4 c^2 \overline{r}^2}
{t^2 c^2 \Bigl[t s_+(s_+-1)(c^2(2 s_+-1)-2 \rho \bar b) - (2 s_+-1) \bar b + 2 \rho c s_+(s_+-1)\Bigr]}
\\
&= \frac{4 [c^2 s_+(s_+-1) - \bar b^2]}
{t s_+(s_+-1)(c^2(2 s_+-1)-2 \rho \bar b) - (2 s_+-1) \bar b + 2 \rho c s_+(s_+-1)},
\end{aligned}
\] 
and the proof that $A_{StSt}^2=A_{Hest}^2$ is complete.

We stress that our proof of consistency of the two local variance slopes $A_{Hest}^2$ and $A_{StSt}^2$ is valid for negative, non-zero correlation.
In the context of implied volatility expansions, a similar consistency result was obtained by \cite[p. 187-192]{ArchilBook}, in the zero-correlation case.

\subsection{Numerical tests}

In the Heston model, the local volatility $\sigma^{Heston}_{\mathrm{loc}}(t,y)$ can be evaluated using the classical inversion of characteristic functions within the computation of Call price derivatives in the Dupire's formula $\sigma_{\mathrm{loc}}(t,y)^2 = \left. \frac{\partial_t C(t,K)}{\frac12 K^2 \partial^2_{KK}C(t,K)} \right|_{K=S_0 e^y}$.
This gives a way of computing the local volatility in the Stein--Stein model with $a=0$, simply coinciding with a Heston local volatility, when the Heston parameters are given by \eqref{e:StStHestIntersect}.
In Figures \ref{f:fig1} and \ref{f:fig2}, we invert the Heston characteristic functions after a shift of the integration contour in the complex plane into an appropriate saddle-point, following the procedure described in \cite{DmFG}, in order to obtain a stable implementation of the local volatility function for large values of $|y|$.
The two figures illustrate the convergence result in Theorem \ref{r:steinSteinWings}, for the two regions $y<0$ and $y>0$: the blue line represents the ratio $y \mapsto \frac{\sigma_{\mathrm{loc}}^2(y,t)}{y \times (z_t(\pm 1))^2}$, which must tend to $1$ as $y \to \pm \infty$.
The empirical asymptotic behavior is in good agreement with formula \eqref{e:steinSteinWings}, for both the wings: as one expects for a space-asymptotic result, the speed of convergence worsens with increasing maturity (i.e. as the density of the process gradually spreads out). 
\\
Note that the adaptive shift of the integration contour into the saddle-point allows to efficiently evaluate $\sigma_{\mathrm{loc}}^2(y,t)$ for large values of $|y|$, but while this procedure is (i) relatively time consuming in comparison to any explicit formula, in particular if used on-the-fly inside a Monte-Carlo simulation and (ii) limited to models allowing for an explicit evaluation of Fourier transforms, the analysis behind Theorem \ref{r:steinSteinWings} can be extended to other models.

\begin{figure}[t]
     \begin{center}
        \subfigure[$t=0.25$]{
            \includegraphics[width=0.3\textwidth]{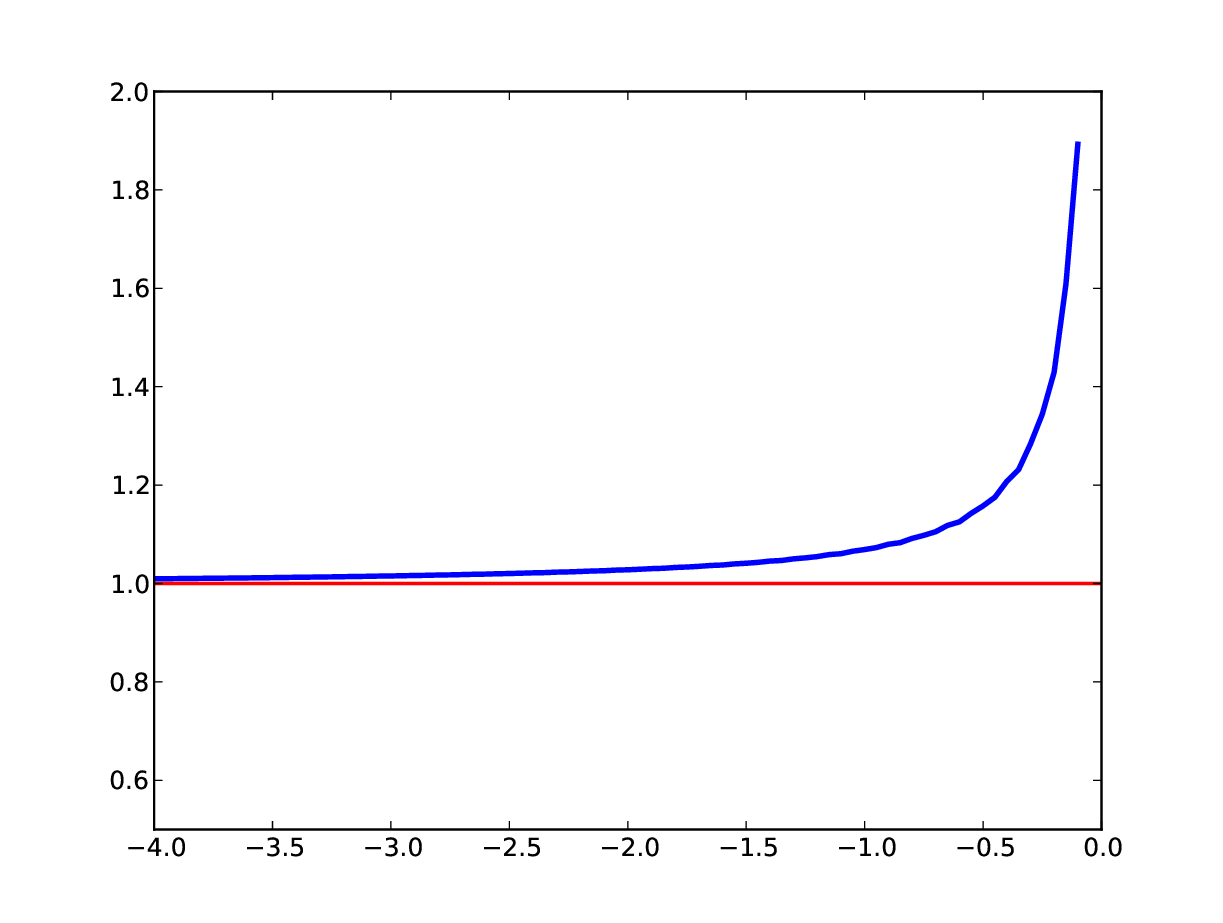}
        }
        \subfigure[$t=1$]{
           \includegraphics[width=0.3\textwidth]{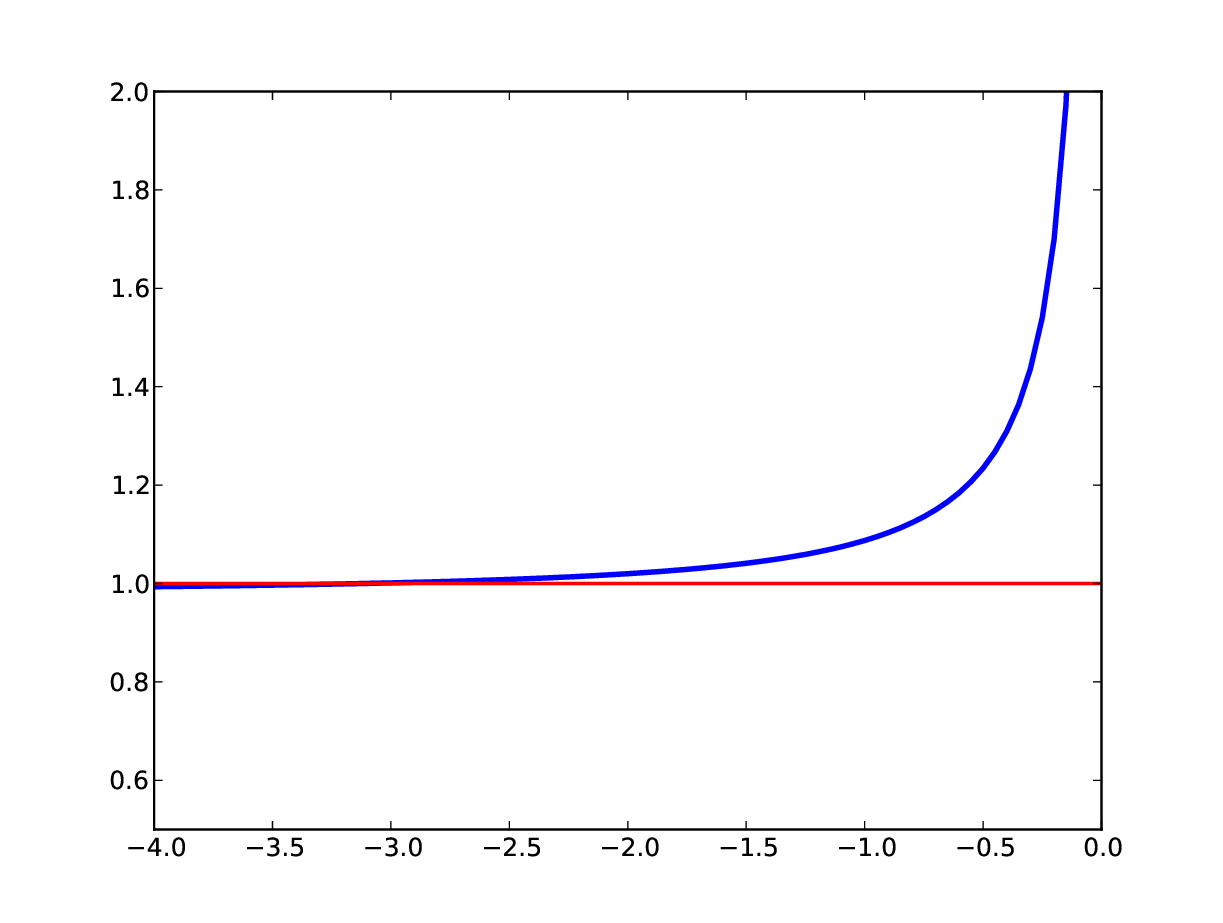}
        }\\ 
    \end{center}
     \begin{center}
        \subfigure[$t=3$]{
            \includegraphics[width=0.3\textwidth]{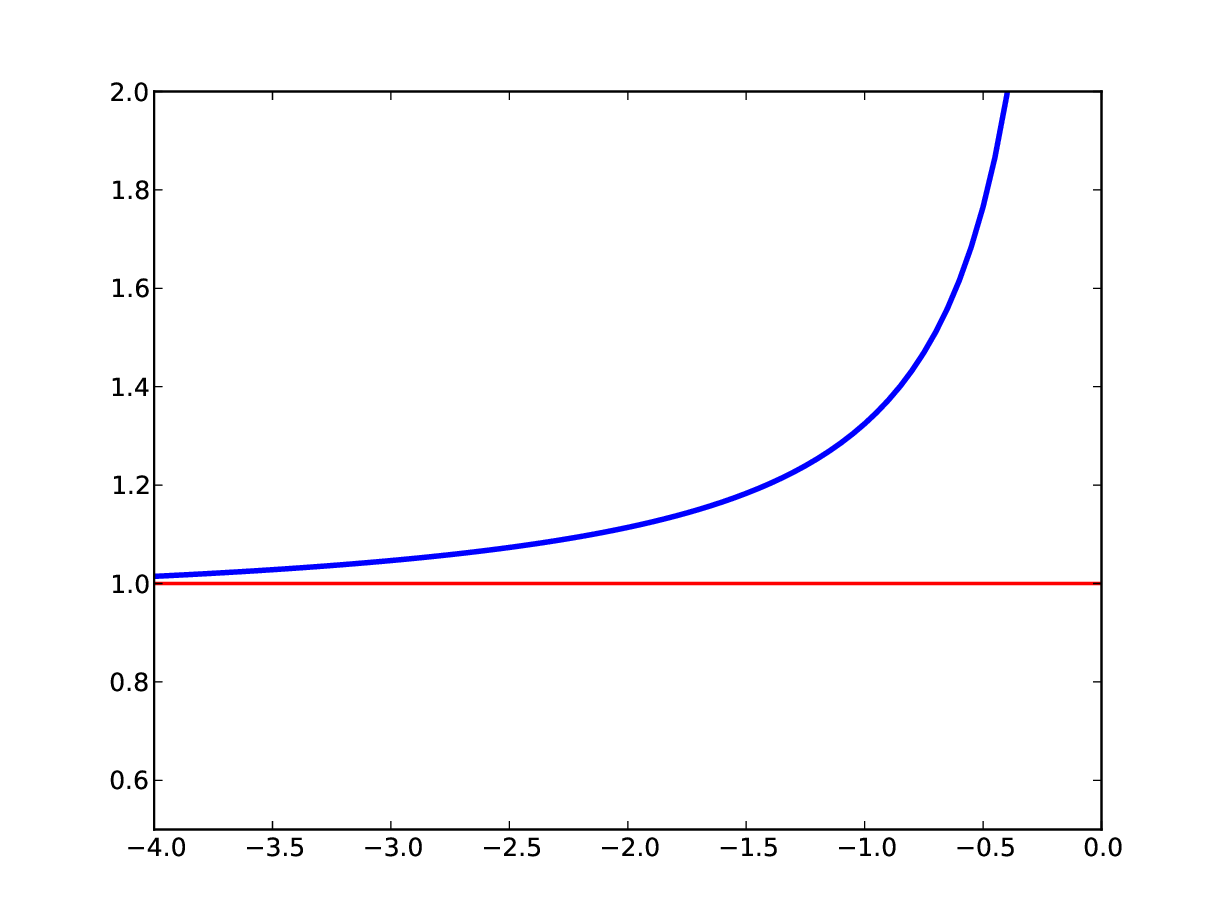}
        }
        \subfigure[$t=10$]{
           \includegraphics[width=0.3\textwidth]{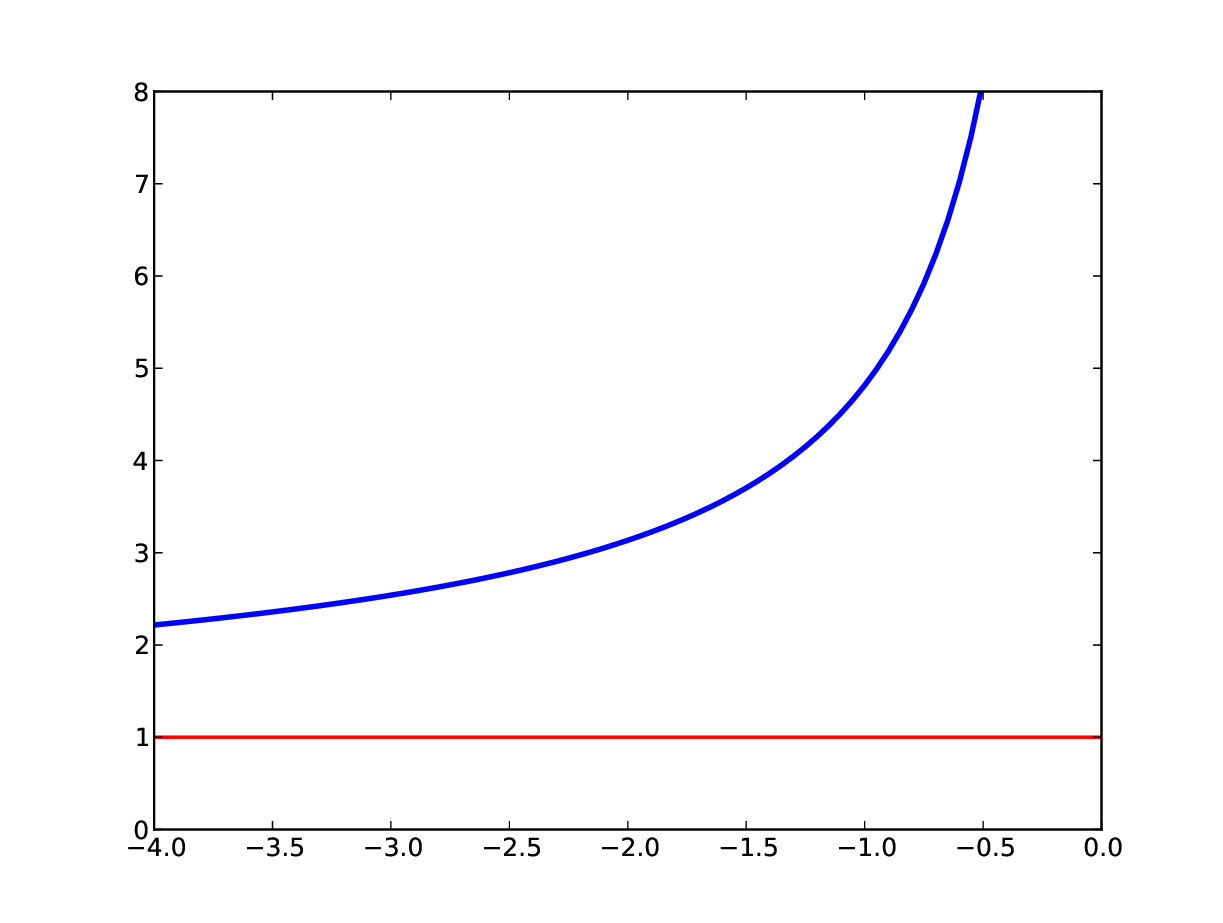}
        }\\ 
    \end{center}
		\caption{Illustration of the convergence result in Theorem
		\ref{r:steinSteinWings} for the Stein--Stein model in the case $y<0$ (left `wing' of the local volatility). The blue line shows the value of the function $y \mapsto \sigma_{\text{loc}}^2(y,t) / ( y \times z_t(-1)^2)$, with $z_t(-1)^2$ the theoretical
		asymptotic value in \eqref{e:steinSteinWings}, which must converge to $1$ as $y \to -\infty$.
		Model parameters: $a=0, b=-0.5, c=0.4, z_0=0.244, \rho=-0.75$.}
     \label{f:fig1}
\end{figure}

\begin{figure}[t]
     \begin{center}
        \subfigure[$t=0.25$]{
            \includegraphics[width=0.3\textwidth]{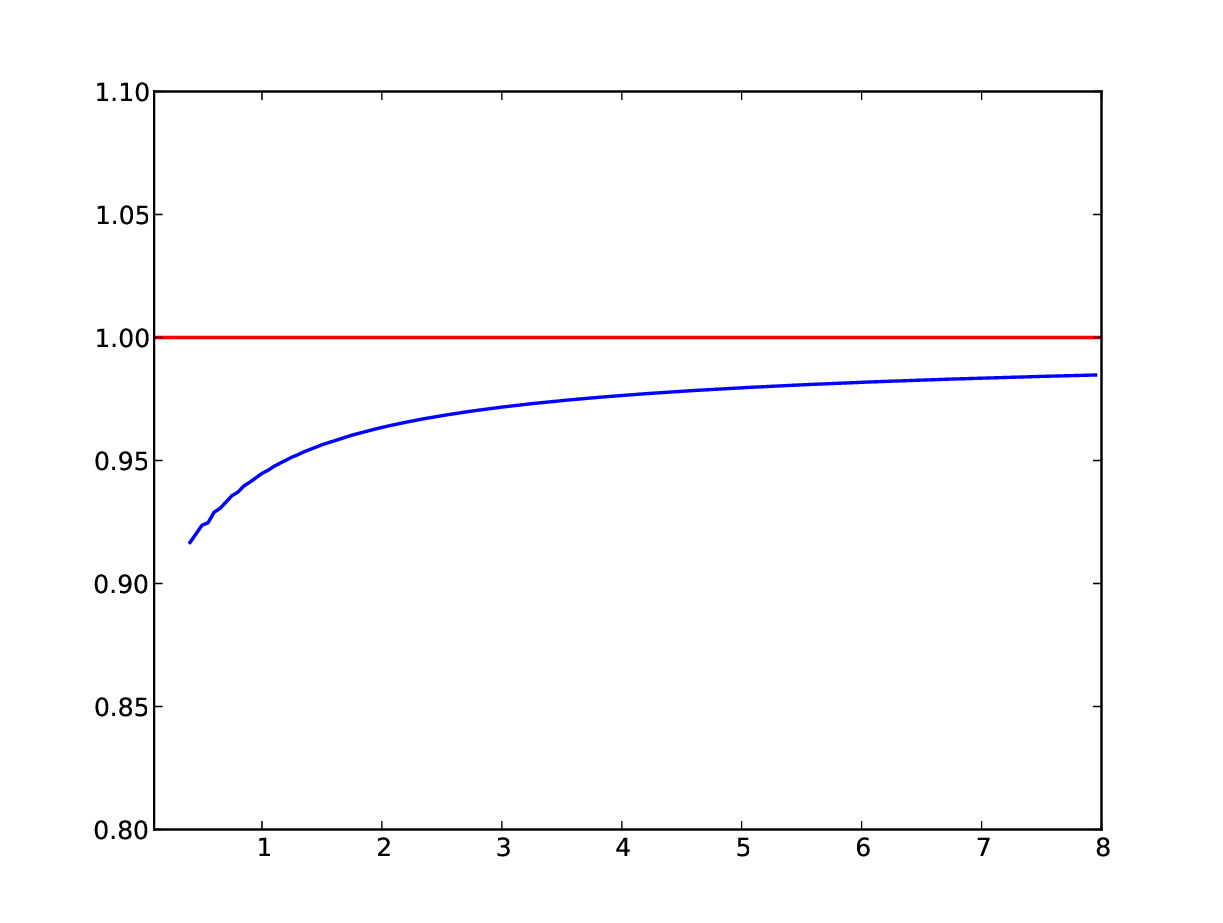}
        }
        \subfigure[$t=1$]{
           \includegraphics[width=0.3\textwidth]{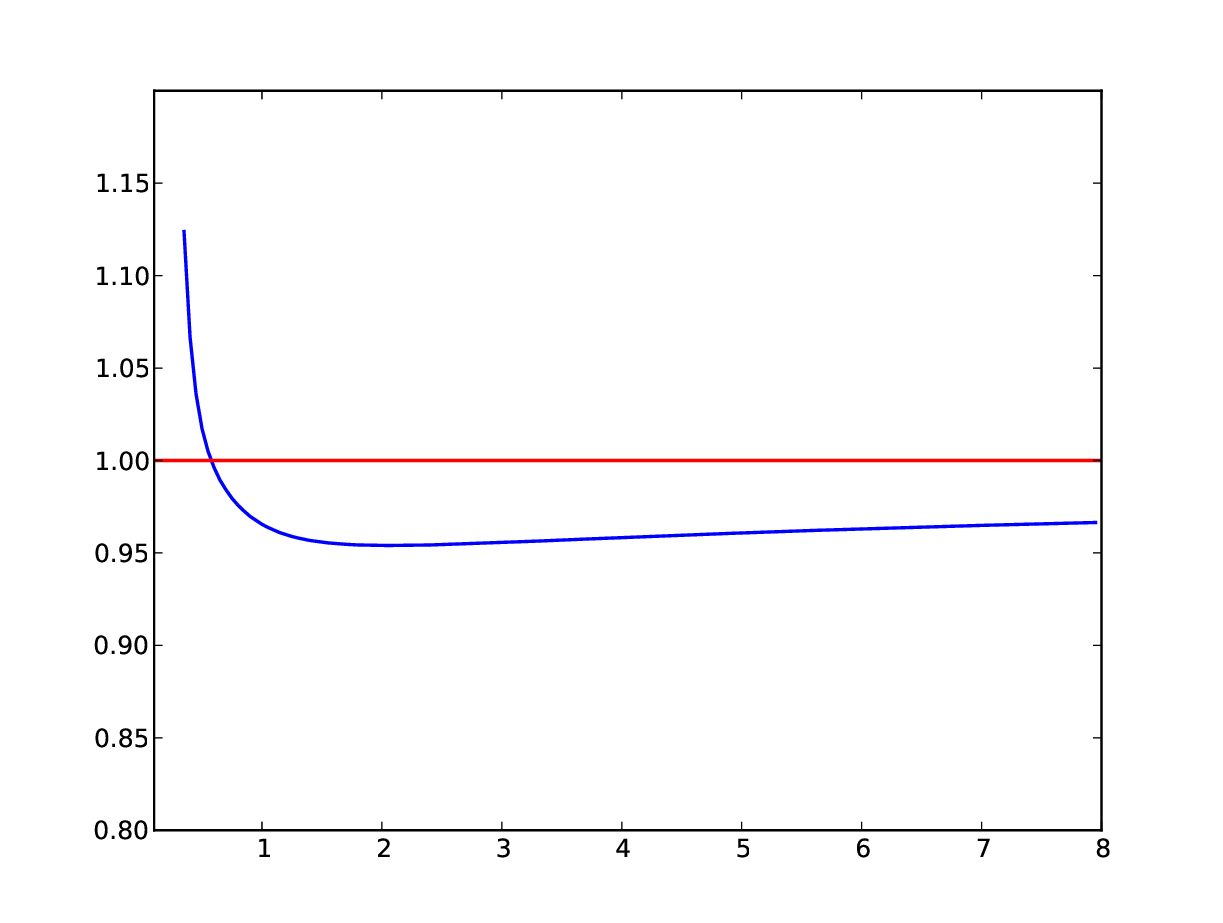}
        }\\ 
    \end{center}
     \begin{center}
        \subfigure[$t=3$]{
            \includegraphics[width=0.3\textwidth]{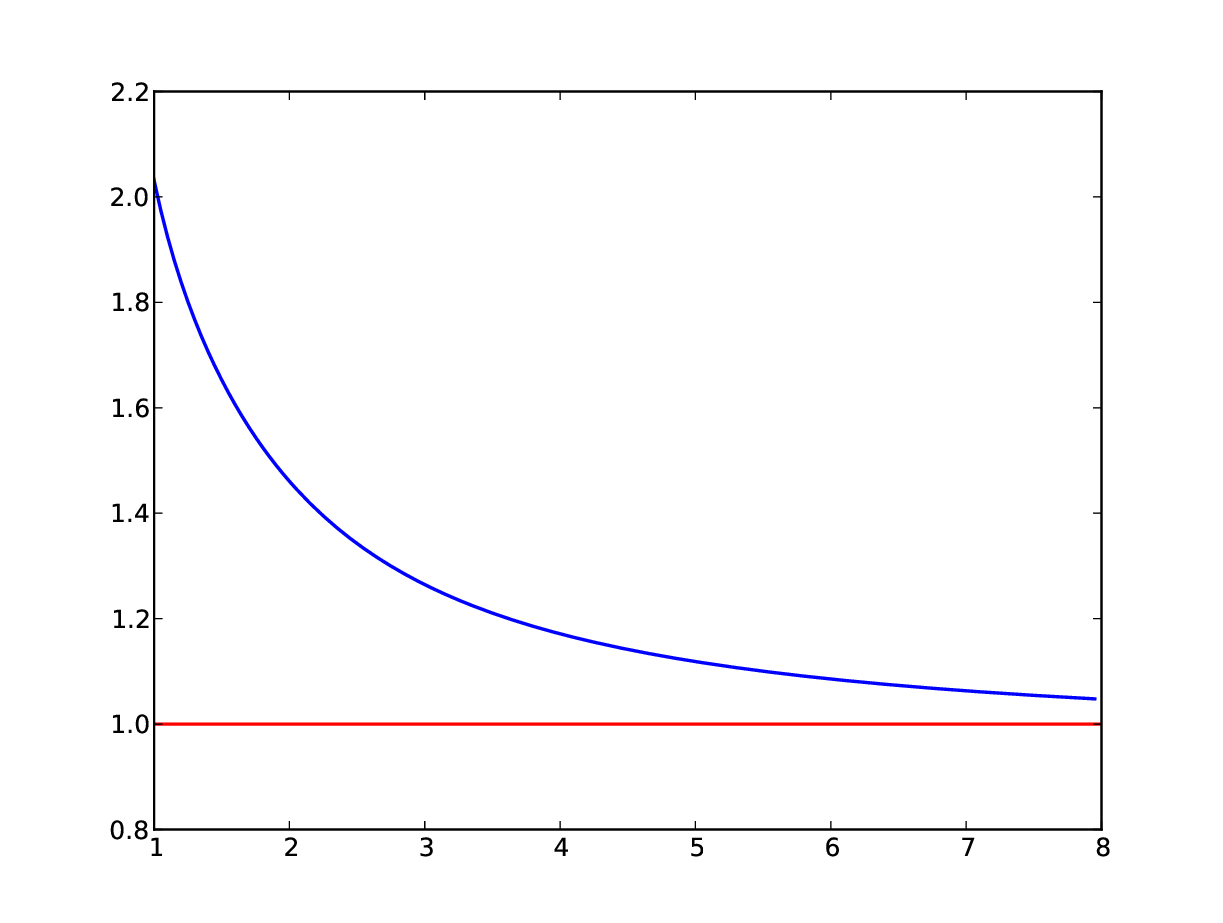}
        }
        \subfigure[$t=10$]{
           \includegraphics[width=0.3\textwidth]{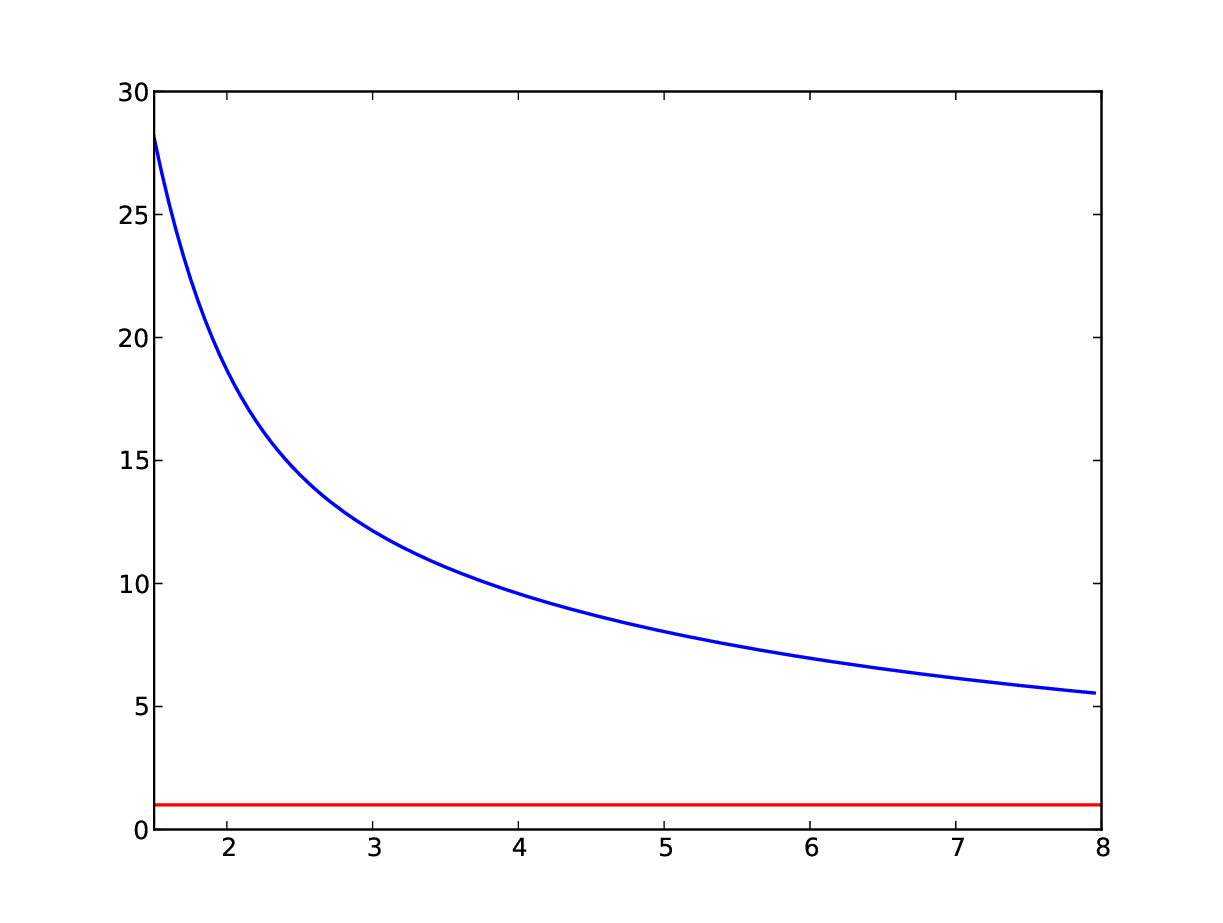}
        }\\ 
    \end{center}
		\caption{
		Illustration of the convergence result in Theorem
		\ref{r:steinSteinWings} for the Stein--Stein model in the case $y>0$ (right `wing' of the local volatility). The blue line shows the
		value of the function $y \mapsto \sigma_{\text{loc}}^2(y,t) / ( y \times z_t(1)^2 )$, with $z_t(1)^2$ the theoretical asymptotic value in \eqref{e:steinSteinWings}, which must converge to $1$ as $y \to \infty$.
		Model parameters: $a=0, b=-0.5, c=0.4, z_0=0.244, \rho=-0.75$.
	  }
    \label{f:fig2}
\end{figure}

\appendix

\section{Technical proofs}

\subsection{Localization} \label{s:app0}

Here we define the precise localization procedure for the SDE \eqref{e:smallNoiseStochVol} that is used in Theorem \ref{t:conditionalLawStochVol}, and state the lemma about the localization of the action that is used therein.
Consider a family of truncation functions $\psi_R \in C^{\infty}_b(\R; \R)$, $R>0$, such that
\[
\psi_R(x) = \left\{\begin{array}{l l} x & \mbox{if } |x| \le R
\\
R+1 &\mbox{if } |x| \ge R+1
\end{array} \right.
\]
and set
\be \label{e:truncated}
\beta_\eps^R(x) = b_{\eps}(\psi_R(x)); \quad
\beta_0^R = \beta_0(\psi_R(x)); \quad
\alpha^R = \alpha(\psi_R(x));
\ee
so that for every $R$, the truncated vector fields $\beta_\eps^R, \beta_0^R$ and $\alpha^R$ coincide with the original ones on the ball $B_R(0)$, but they remain bounded on $\R^n$ (uniformly in $\eps$ in the case of $\beta_\eps^R$, thanks to assumption \eqref{e:driftConvergenceStochVol}).
If $K$ is a common Lipschitz constant for $b_0$ and $\alpha$, then
\be \label{e:truncatedLip}
Lip(b_0^R) \le K; \qquad Lip(\alpha^R) \le K, \qquad \mbox{for all } R
\ee
and it is clear that $\alpha^R(0)=\alpha(0) \neq 0$ under assumption (SV).
It follows that strong \Horm condition (sH) holds at all points also for the controlled system with truncated coefficients
\be \label{e:detFlowTruncated}
\begin{aligned}
dy^R_t(h) &= -\frac12 1_{\theta=0} \bigl(z^R_t(h)\bigr)^2 dt + z^R_t(h) d\overline{h}^1_t, \quad y^R_0(h) = 0,
\\
dz^R_t(h) &= \beta_0^R(z^R_t(h) )dt + \alpha^R(z^R_t(h)) d\overline{h}^2_t, \quad z^R_0(h) = z_0
\end{aligned}
\ee
where $\overline{h} = \sqrt{\Gamma} h$.
Denote $\Lambda_t^R(x) = \inf \{\frac12 |h|_H^2: h \in \mathcal{K}_t^x(R) \}$, $x \in \R^2$, the action of the system \eqref{e:detFlowTruncated}, with $\mathcal{K}_t^x(R) = \{ h \in H: (y^R_0,z^R_0)=x_0, (y^R_t,z^R_t) = x \}$.

\begin{lemma} \label{l:rateFctLoc}
Assume $\beta_0$ and $\alpha$ are Lipschitz continuous on $\R^n$.
Denote $\varphi(h)=(y(h),z(h))$ the solution to the ODE \eqref{e:ODEstochVol}, and $\Lambda^{SV}_t$ the related action as in \eqref{e:rateFctXstochVol}.
For every $R>0$, define $b_0^R$ and $\alpha^R$ according to \eqref{e:truncated}, the corresponding ODE solution $\varphi^R(h)=(y^R(h),z^R(h))$ as in \eqref{e:detFlowTruncated}, and the related action $\Lambda_t^R$.
Let $y \in \R^l$ and $t>0$ be fixed as in Theorem \ref{t:conditionalLawStochVol}.
Then, if at least one of Conditions $(i)$ and $(ii)$ of Theorem \ref{t:conditionalLawStochVol} is satisfied:

\begin{itemize}
\item[(a)] for every $R>0$, there exists $\overline{R}(R)>0$ such that
\[
\Lambda^{\overline{R}(R)}_t(y,z) = \Lambda^{SV}_t(y,z) \qquad \forall \: |z| \le R.
\]
\item[(b)] There exists $\tilde{R}>0$ such that the maps $z \mapsto \Lambda^{SV}_t(y,z)$ and $z \mapsto \Lambda_t^{\tilde R}(y,z)$ attain their (common) global minimum at the same points:
\[
\{z \in \R^{n-l}: \Lambda^{SV}_t(y,z)=\Lambda^{SV}_t(N_y) \} = \{z \in \R^{n-l}: \Lambda_t^{\tilde{R}}(y,z)=\Lambda_t^{\tilde{R}}(N_y)
=\Lambda^{SV}_t(N_y) \}.
\]
\end{itemize}
\end{lemma}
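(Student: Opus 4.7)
The plan is to exploit the Gronwall-type bound \eqref{e:gronwallSolutionMap}: any trajectory $\varphi^h$ of the untruncated system satisfies $\sup_{s \le t} |\varphi^h_s| \le G(|h|_H)$ with $G(\lambda) := C(t,x_0,K)\, e^{C(t,x_0,K)\,\lambda}$, and the same estimate carries over to the truncated dynamics since the truncated vector fields have Lipschitz constant bounded by $K$, see \eqref{e:truncatedLip}. Combined with the fact that truncated and untruncated coefficients coincide on $B_R(0)$ by \eqref{e:truncated}, this yields the key principle driving the whole argument: a trajectory confined to $B_{\overline R}(0)$ is identical under either dynamics. Both (a) and (b) will reduce to choosing the truncation radius large enough that the relevant minimizing trajectories never leave that ball.

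I would tackle (b) first, since only a bound at the minimum of $\Lambda^{SV}_t$ on $N_y$ is needed. Set $M^\star := \Lambda^{SV}_t(N_y) < \infty$ and $\tilde R := G(\sqrt{2 M^\star}) + 1$. Any untruncated minimizer $h_0 \in \mathcal{K}_t^{(y,z^\star)}$ has $|h_0|_H = \sqrt{2 M^\star}$, so its trajectory stays in $B_{\tilde R}$ by Gronwall; it is therefore admissible for the truncated problem with the same cost, yielding $\Lambda_t^{\tilde R}(y,z^\star) \le M^\star$ at every minimum point $z^\star$. Conversely, for any truncated minimizer $\tilde h_0 \in \mathcal{K}_t^{(y,\tilde z^\star)}(\tilde R)$ the previous step gives $|\tilde h_0|_H^2 = 2 \Lambda_t^{\tilde R}(N_y) \le 2 M^\star$, so Gronwall applied to the truncated system confines the trajectory to $B_{\tilde R}$, where it coincides with the untruncated trajectory; hence $\tilde h_0 \in \mathcal{K}_t^{(y,\tilde z^\star)}$ and $\Lambda^{SV}_t(y,\tilde z^\star) \le \tfrac12 |\tilde h_0|_H^2 \le M^\star$, which forces equality since $M^\star = \inf_{N_y} \Lambda^{SV}_t$. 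This simultaneously shows $\Lambda_t^{\tilde R}(N_y) = M^\star$ and that the sets of minimizers coincide.

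For part (a), the same Gronwall localization works pointwise for each $|z| \le R$, provided one has the uniform bound $M_R := \sup_{|z| \le R} \Lambda^{SV}_t(y,z) < \infty$: the choice $\overline R(R) := G(\sqrt{2 M_R}) + 1$ then does the job by the same two-step argument as in (b). Establishing $M_R < \infty$ is the main (and essentially only) technical obstacle, and the plan is to derive it from continuity of $\Lambda^{SV}_t$ on $\{y\} \times \R$ under either of the two standing Conditions. Under Condition (i), the vector fields $\sigma_1(z_0), \sigma_2(z_0)$ in \eqref{e:sigmaStochVol} span $\R^2$ (using $z_0 > 0$, $\alpha(z_0) \neq 0$, $\rho^2 \neq 1$), so Lemma \ref{l:invertibilityLocalEllipt} gives invertibility of $C_{x_0}(h)$ for every control $h$, and Lemma \ref{l:propertiesLambda}(v) then makes $\Lambda^{SV}_t$ continuous on all of $\R^n$. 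Under Condition (ii), any $h \in \mathcal{K}_t^{(y,\cdot)}$ with $y \neq 0$ must drive its $z$-trajectory away from $\{z = 0\}$, for otherwise $\varphi^{(1)}_t \equiv 0 \neq y$ from the limiting controlled ODE \eqref{e:ODEstochVol}; by continuity of the trajectory together with $\alpha(0) \neq 0$, the path visits the elliptic region $\{z \neq 0,\ \alpha(z) \neq 0\}$, so Lemma \ref{l:invertibilityLocalEllipt} again yields invertibility and Lemma \ref{l:propertiesLambda}(v) gives local continuity at each $(y,z)$. In either case $\Lambda^{SV}_t$ is continuous on the compact segment $\{y\}\times[-R,R]$, so $M_R < \infty$ and the argument closes.
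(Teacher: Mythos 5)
Your proof is correct, and for part (a) it mirrors the paper's own argument almost exactly: bound $M_R := \sup_{|z|\le R}\Lambda^{SV}_t(y,z)$ using continuity of $\Lambda^{SV}_t$ near $N_y$ (which in turn follows from invertibility of $C_{x_0}(h)$ under Condition (i) or (ii) via Lemma~\ref{l:invertibilityLocalEllipt} and Lemma~\ref{l:propertiesLambda}(v)), then set the truncation radius via the Gronwall growth function.

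For part (b), however, you take a genuinely cleaner route than the paper. The paper proves (a) first, then picks $R^1$ so that the sublevel set $\{z : \Lambda^{SV}_t(y,z)\le \Lambda^{SV}_t(N_y)+1\}$ sits inside $B_{R^1}(0)$, sets $\tilde R = \overline R(R^1)$, and finally uses the slack of $+1$ together with estimate \eqref{e:inBall} to show $\Lambda^{\tilde R}_t(y,z) \ge \Lambda^{SV}_t(N_y)+1$ for $z\notin B_{R^1}(0)$. Your argument bypasses (a) altogether: the choice $\tilde R = G(\sqrt{2M^\star})+1$ depends only on the minimum value $M^\star = \Lambda^{SV}_t(N_y)$ (finite by the standing assumption of Theorem~\ref{t:conditionalLawStochVol}), and you argue both inclusions of minimizer sets directly — first that untruncated minimizers are truncated-admissible with the same cost (hence $\Lambda^{\tilde R}_t(N_y)\le M^\star$), then that any truncated minimizer has $|\tilde h_0|_H^2 \le 2M^\star$ and therefore, by Gronwall applied to the truncated system (valid since the truncated coefficients retain the Lipschitz constant $K$ by \eqref{e:truncatedLip}), its trajectory is confined to $B_{\tilde R}$ and hence also an untruncated trajectory, forcing equality of minima. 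This is more economical — it yields a smaller $\tilde R$ and requires no finiteness/continuity input beyond $M^\star < \infty$ — at the mild cost of not reusing (a) as a black box. One minor imprecision worth flagging: you write the Gronwall bound for the full trajectory $\varphi^h$, but for the SV system the $y$-coefficients are unbounded, so the estimate \eqref{e:gronwallSolutionMap} should be applied (as the paper does) only to the $z$-component $z_s(h)$, whose ODE is autonomous with Lipschitz coefficients; this is harmless since only the $z$-confinement matters for the truncation, but the statement should read $\sup_{s\le t}|z_s(h)|\le G(|h|_H)$.
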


\begin{proof}
Arguing as in the proof of Theorem \ref{t:conditionalLawStochVol}, if at least one of Conditions $(i)$ or $(ii)$ is satisfied, the deterministic Malliavin matrix $C_{(0,z_0)}(h)$ is invertible for all $h \in \mathcal{K}_t^{(y,\cdot)}$; it then follows from Lemma \ref{l:propertiesLambda}$(v)$ that $\Lambda_t$ is continuous on an open set of $\R^2$ containing the line $N_y=(y,\cdot)$.

Let us prove $(a)$. Fix $R>0$, and define
$\mathcal{H}_R = \Bigl\{h \in H: \frac12 |h|_H^2 \le \sup_{|x|\le R} \Lambda^{SV}_t(x) \Bigr\}$.
Since $\Lambda_t^{(SV)}$ is finite and continuous around $N_y$, $\sup_{|z|\le R} \Lambda^{SV}_t(y,z)$ is finite and $\mathcal{H}_R$ is bounded in $H$.
Denote $\mathcal{K}^x_{min}$ the set of minimizing controls in $\mathcal{K}^x_t$.
It is clear that, for every $z \in B_R(0)$, $\mathcal{K}^{(y,z)}_{min} \subset \mathcal{H}_R$, hence
$\Lambda^{SV}_t(y,z) = \inf \Bigl\{\frac12 |h|_H^2: h \in \mathcal{K}_t^{(y,z)} \cap \mathcal{H}_R \Bigr\}$, and $\{y\} \times B_R(0) \subset \{\varphi_t(h) : h \in \mathcal{H}_R \}$.
Setting
\[
C \exp\Bigl(C \sup_{h \in \mathcal H_R}|h^{(2)}| \Bigr)
\le
C \exp \Bigl(C
\Bigl( 2 \sup_{|z|\le R} \Lambda^{SV}_t(y,z) \Bigr)^{1/2}
\Bigr)
:= \overline{R}(R)
\]
where $C=C(t,z_0,K)$ is the constant defined in Lemma \ref{l:propertiesLambda}$(i)$ and $K$ is a Lipschitz constant for $\beta_0$ and $\alpha$, it follows from estimate \eqref{e:gronwallSolutionMap} that
$\sup_{s \le t} |z_s(h)| \le \overline{R}(R)$ for every $h \in \mathcal{H}_R$.
Therefore, for such $h$ the trajectory $s \mapsto z_s(h)$, $s\le t$, remains in the region where the vector fields $b_0, \sigma_j$ and $b^{\overline{R}(R)}_0, \sigma^{\overline{R}(R)}_j$ coincide: from the uniqueness of solutions for the second ODE in \eqref{e:ODEstochVol}, it follows that $z^{\overline{R}(R)}_s=z_s$ for all $s \in [0,t]$.
Since $y(h)$ only depends on $z(h)$, one also has $y^{\overline{R}(R)}_s(h)=y_s(h)$ for all $s \le t$, hence $\varphi^{\overline{R}(R)}_t(h)=\varphi_t(h)$ for all $h \in \mathcal{H}_R$.
In particular, $\varphi^{\overline{R}(R)}_t(h_0)=\varphi_t(h_0)$ for $h_0 \in \mathcal{K}_{\min}^{(y,z)}$, and this establishes
\be \label{e:localizationIneq:1}
\Lambda_t^{\overline{R}(R)}(y,z) \le 
\frac12 |h_0|^2 = \Lambda^{SV}_t(y,z), \qquad \forall z \in B_R(0).
\ee
On the other hand, $\Lambda_t^{\overline{R}(R)}(y,z) = \frac12|\overline{h}_0|^2$ for some $\overline{h}_0 \in \mathcal{K}^{(y,z)}(\overline{R}(R))$: from \eqref{e:localizationIneq:1}, $\overline{h}_0 \in \mathcal{H}_R$, therefore $\varphi_t(\overline{h}_0)=\varphi^{\overline{R}(R)}_t(\overline{h}_0)=(y,z)$ by the uniqueness argument above.
This implies $\overline{h}_0 \in \mathcal{K}^{(y,z)}$ and $\Lambda^{SV}_t(y,z) \le \frac12 |\overline{h}_0|^2 = \Lambda_t^{\overline{R}(R)}(y,z)$, and $(a)$ is proved.

Let us now prove $(b)$: note that estimate \eqref{e:gronwallSolutionMap} in Lemma \ref{l:propertiesLambda}, together with the bound \eqref{e:truncatedLip} on the Lipschitz constants of $\beta_0^R$ and $\alpha^R$, implies
\be \label{e:inBall}
\sup_{s \le t} |z_s(h)| \vee \: \sup_{s \le t} |z^R_s(h)|
\le C e^{C |h|_H}, \qquad \forall \: R
\ee
where again $C=C(t,z_0,K)$.
Set $R^{1} = C \exp \Bigl(C \Bigl( 2(\Lambda_t^{SV}(N_y)+1) \Bigr)^{1/2}
\Bigr)$.
Estimate \eqref{e:inBall} yields the inclusion
$\{z \in \R^n: \Lambda_t^{SV}(y,z) \le \Lambda^{SV}_t(N_y)+1 \} \subseteq B_{R^{1}}(0)$,
because for every $z$ on the left hand side there exists $h_0$ with $\frac12|h_0|^2=\Lambda_t^{SV}(y,z)\le \Lambda^{SV}_t(N_y)+1$ such that $z=z_t(h_0)$.
In particular, all the points $z^*$ are contained in $B_{R^{1}}(0)$, where $z^*$ is a global minimizer of the map $z \mapsto \Lambda^{SV}_t(y,z)$.
Now setting $\tilde{R}=\overline{R}(R^1)$, point $(a)$ of the current lemma entails that $\Lambda^{\tilde{R}}_t$ and $\Lambda^{SV}_t$ coincide on $\{y\} \times B_{R^{1}}(0)$, therefore they - trivially - attain their common minimum $\Lambda^{SV}_t(N_y)$ on $\{y\} \times B_{R^{1}}(0)$ at the same points.
One has to make sure that $\Lambda^{\tilde{R}}$ is strictly above the level $\Lambda^{SV}_t(N_y)$ outside $\{y\} \times B_{R^{1}}(0)$: but \eqref{e:inBall} ensures that $z^{\tilde{R}}_t=z \notin B_{R^{1}}(0)$ can be reached only along controls $h$ such that $\frac12 |h|_H^2 > \Lambda_t^{SV}(N_y)+1$, hence $\Lambda^{\tilde{R}}_t(y,z) \ge \Lambda_t^{SV}(N_y)+1$ for $z \notin B_{R^{1}}(0)$, and this allows to conclude.
\end{proof}

\subsection{Proofs of Theorem \ref{t:BAL} and Proposition \ref{p:tailInt}} \label{s:app2}
 
The following lemma plays a key role.
The proof is given under weak H\"ormander condition at the starting point $x_0$:
\be \label{e:weakH}
\begin{aligned}
\text{(wH)} \qquad \text{span} \{ \sigma_1, \dots,\sigma_d ; \ &[b_0, \sigma_i] : 1 \le i \le d; \
[\sigma_i, \sigma_j] : 1 \le i,j \le d;
\\
&[[\sigma_i, [\sigma_l, \sigma_m]] : 1 \le i,l,m \le d; \dots \} \bigr|_{x_0} = \R^n
\end{aligned}
\ee
that is, the linear span of $\sigma_1, \dots, \sigma_d$ and all the Lie brackets of $b_0, \sigma_1,
\dots, \sigma_d$ is the whole $\R^n$ at $x_0$.
Notice that under assumptions \eqref{e:initCond} and \eqref{e:convergDrift}, one has
\[
[b_\eps, \sigma_j]_{x_0^{\eps}} = [b_0, \sigma_j]_{x_0} + o(1), \qquad
[\sigma_j, \sigma_k]_{x_0^{\eps}} = [\sigma_j, \sigma_k]_{x_0} + o(1)
\]
hence (wH) also holds when $b_0$ and $x_0$ are replaced with $b_{\eps}$ and $x_0^{\eps}$, when $\eps$ is small enough.
It is then classical that $X^{\eps}_t$ admits a smooth density $p_t^{\eps}$ for all $t > 0$ under condition (wH).

\begin{lemma} \label{l:keyEstim}
Let $X^{\eps}$ be the solution to \eqref{e:baseSDE}.
Assume weak \Horm condition \emph{(wH)} at $x_0$.
Then, for any $t >0$, any $x \in \R^n$ and $q \in (0,1)$ there exist a constant $C_q$ and positive integers $N_1(q)=N_1(q,n)$
and $N(q)=N(q,n,x_0)$ such that
\begin{equation} \label{e:keyEstim}
p^{\eps}_t(x) \le C_q \bigl(1 + R^{-N_1(q)} \bigl)
\ 
\eps^{-N(q)} \ \Prob( |X^{\eps}_t - x| \le R )^q
\end{equation}
for every $R > 0$.
The constant $C_q$ also depends on $t$ and on the bounds on the derivatives of $b_0$ and the $(\sigma_j)_j$.
\end{lemma}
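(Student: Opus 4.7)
The idea is to combine a Malliavin integration-by-parts argument with a tensor-product cutoff decomposition of the Dirac at $0$, then apply Hölder's inequality. Under (wH) at $x_0$, standard Malliavin calculus yields that $X^\eps_t$ has a smooth density and that its Malliavin covariance $\gamma^\eps_t$ is a.s.\ invertible with negative moments of all orders; the quantitative input needed is the small-noise estimate
$\|\det(\gamma^\eps_t)^{-1}\|_{L^p} \le C_p\,\eps^{-N_0(p,n,x_0)}$,
obtained by running a Kusuoka--Stroock/Norris-lemma chain along the Lie-bracket hierarchy spanning $\R^n$ at $x_0$ (equivalently, by Malliavin analysis of $\eps^{-1}(X^\eps_t - x_0^\eps)$). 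Combined with the standard $\|D^k X^\eps_t\|_{L^p} \lesssim \eps^k$, each iterated Skorohod integral $W_S^\eps := H_{e_S}(X^\eps_t,1)$ (with $e_S \in \{0,1\}^n$ the indicator of $S \subseteq \{1,\dots,n\}$) then satisfies $\|W_S^\eps\|_{L^p} \le C_p\,\eps^{-N(p,|S|)}$.

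\textbf{Decomposition of $\delta_0$.} Pick $\phi \in C_c^\infty(\R)$ with $\phi(0)=1$, $\mathrm{supp}\,\phi \subseteq [-1,1]$, $0 \le \phi \le 1$, and set $\phi_R(s) := \phi(s/R)$, $H_1(s) := \mathbf{1}_{s>0}$, $G_R(s) := R^{-1} H_1(s)\,\phi'(s/R)$. Starting from the one-dimensional identity $\partial_s[H_1\phi_R](s) = \delta_0(s) + G_R(s)$, the tensor-product identity $\delta_0(y) = \prod_{i=1}^n \bigl(\partial_{y_i}[H_1\phi_R] - G_R\bigr)(y_i)$ expands as
\[
\delta_0(y) = \sum_{S \subseteq \{1,\dots,n\}} (-1)^{n-|S|}\,\partial^{e_S}\!\Bigl[\prod_{i \in S}(H_1\phi_R)(y_i)\Bigr] \prod_{i \notin S} G_R(y_i).
\]
The crucial feature is that in every summand each coordinate factor is supported in $[-R,R]$ (via $\phi_R$ or via $G_R$), so every summand is supported in $\{|y| \le R\}$ (infinity norm).

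\textbf{IBP and Hölder.} Substitute $y = X^\eps_t - x$, take $\esp[\cdot]$, and transfer the $|S|$ derivatives in each summand to the Wiener side by Malliavin IBP. Since in each term the differentiated factor depends only on coordinates in $S$ and the remaining factor only on coordinates in $S^c$, no Leibniz cross-terms arise, and one obtains
\[
p^\eps_t(x) = \sum_{S} (-1)^{n-|S|}\,\esp\!\Bigl[\prod_{i \in S}(H_1\phi_R)\bigl((X^\eps_t - x)_i\bigr)\,\prod_{i \notin S}G_R\bigl((X^\eps_t - x)_i\bigr)\,W_S^\eps\Bigr].
\]
In each summand the product over $i$ is dominated by $C R^{-(n-|S|)}\mathbf{1}_{|X^\eps_t - x| \le R}$, and Hölder with exponents $1/q$ and $1/(1-q)$ bounds it by $C_q\,R^{-(n-|S|)}\,\eps^{-N(q)}\,\Prob(|X^\eps_t - x| \le R)^q$. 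Summing over $S$ and absorbing the $R$-powers into $1 + R^{-n}$ yields the claim with $N_1(q) = n$.

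\textbf{Main obstacle.} The delicate part is the Malliavin non-degeneracy estimate: under (wH) at a \emph{single} point, and with a non-canonically scaled drift $b_\eps$, extracting the correct $\eps$-power in $\|\gamma^\eps_t{}^{-1}\|_{L^p}$ requires running the Norris-lemma chain by hand through the bracket hierarchy at $x_0$ and checking that drift contributions at each level do not degrade the scaling. In the strong Hörmander case this scaling follows from uniform Kusuoka--Stroock bounds essentially for free, but under (wH) it must be handled pointwise. Once this non-degeneracy is in place, the tensor-product decomposition and the IBP+Hölder step are routine.
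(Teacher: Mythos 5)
Your proof is correct and takes a genuinely different route from the paper's. The paper proceeds through the Fourier side: it introduces a smooth cutoff $\varphi_R$, defines the truncated Fourier transform $\hat p^{\eps}_{t,R}(\xi)=\esp[e^{i\langle\xi,X^\eps_t\rangle}\varphi_R(X^\eps_t)]$, proves the trivial bound $|\hat p^{\eps}_{t,R}|\le\Prob(|X^\eps_t-x|\le R)$, uses Malliavin integration by parts to obtain a $\xi$-polynomial decay $|\hat p^{\eps}_{t,R}(\xi)|\lesssim(1+R^{-nk})\eps^{-N}/(1+(\prod_j\xi^j)^k)$, interpolates these two bounds by $\hat p^q\hat p^{1-q}$, and finally Fourier-inverts; the necessity of choosing $k$ large enough for the $\xi$-integral to converge is what produces their $N_1(q)=\lceil nk^*(1-q)\rceil$. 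Your tensor-product decomposition of $\delta_0$ into $\partial^{e_S}[\prod_{i\in S}H_1\phi_R]\prod_{i\notin S}G_R$ avoids the Fourier inversion entirely: each summand is already compactly supported in the infinity-ball $\{|y|\le R\}$, and one Hölder step per summand gives the small-ball probability directly, with the cleaner $R$-dependence $N_1=n$. Your observation that $\partial^{e_S}$ hits only the $S$-factors, so that the IBP can be performed with $G\equiv 1$ (yielding the model-independent weight $W_S^\eps=H_{e_S}(X^\eps_t,1)$) rather than with $G=\prod_{i\notin S}G_R$ in the weight, is what makes the $R$-powers factor out cleanly. Both proofs funnel through the same quantitative input: the Kusuoka--Stroock non-degeneracy estimate $\esp[\det(\gamma^\eps_t)^{-r}]^{1/r}\lesssim\eps^{-N}$, which the paper derives by noting that in Stratonovich form the quantity $\mathcal V_L(x_0^\eps,v;\overline b_\eps,\eps\sigma)$ dominates $\tfrac{\eps}{2}\mathcal V_L(x_0,v;b_0,\sigma)$ and then invoking \cite[Corollary 3.25]{KusSt}; your alternative framing via a rescaled process or a by-hand Norris chain is a valid substitute for that step, and you are right that this is where the (wH)-at-a-point and $\eps$-dependent drift make the argument delicate.
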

\begin{proof} Given in section \ref{s:lemmaProof}. \end{proof}
\bigskip

\begin{proof}[Proof of Theorem \ref{t:BAL}]
Let us write $\Lambda$ for the action function throughout this proof (dropping the fixed index $t$ from the notation).
The following estimate is obtained with standard uniform large deviations arguments (see also L\'eandre \cite[section 2]{LeMaj}):
 for every $R > 0$, one has
\be \label{e:unifLD}
\limsup_{\eps \to 0} \eps^2 \log \Prob( |X^{\eps}_t - x| \le R ) \le -\Lambda(B_{R}(x))
\ee
uniformly over $x$ in compact sets.

Let first prove \eqref{e:BALupPointwise}.
$x$ is fixed; taking $\limsup_{\eps \to 0} \eps^2 \log$ in estimate \eqref{e:keyEstim} and applying \eqref{e:unifLD}, one has
\[
\limsup_{\eps \to 0} \eps^2 \log p^{\eps}_t(x) \le q \limsup_{\eps \to 0} \eps^2 \log \Prob( |X^{\eps}_t - x| \le R )
\le -q \Lambda(B_{R}(x)).
\] 
Now taking the limit $R \to 0$, $\Lambda(B_{R}(x)) = \inf_{y \in B_R(x)} \Lambda(y) \to \Lambda(x)$ by the lower semi-continuity of $\Lambda$.
Since $q<1$ was arbitrary, \eqref{e:BALupPointwise} follows.

Let us now prove \eqref{e:BALup}.
Under the assumption in Theorem \ref{t:BAL}$(ii)$, it follows from Lemma \ref{l:propertiesLambda}$(iii)$ that $\Lambda$ is continuous on an open neighborhood $V$ of $x$, hence uniformly continuous on compact sets contained in $V$.
Fix a compact ball $B \subset V$ and $\delta > 0$.
We can find $R = R_{B,\delta}$ such that the closed $R$-neighborhood of $B$, $B^{R}=\cup_{y \in B} B_R(y)$, is contained in $V$ and moreover $Osc(\Lambda, B_R(y)) \le \delta$ for all $y \in B$.
In particular,
\be \label{e:unifCont}
\Lambda(B_R(y)) = \inf_{z \in B_R(y)} \Lambda(z) \ge \Lambda(y)-\delta
\ee
for all $y \in B$.
Now taking $\limsup_{\eps \to 0} \eps^2 \log$ in estimate \eqref{e:keyEstim}, and applying \eqref{e:unifLD} and \eqref{e:unifCont}, it follows that for any $q \in (0,1)$
\[
\limsup_{\eps \to 0} \eps^2 \log p^{\eps}_t(y) \le q \limsup_{\eps \to 0} \eps^2 \log \Prob( |X^{\eps}_t - y| \le R )
\le -q \Lambda(B_{R}(y)) \le -q (\Lambda(y)-\delta)
\]
where the limit holds uniformly over $y \in B$.
Since $q<1$ and $\delta$ were arbitrary, the right hand side can be improved to $\Lambda(y)$, and \eqref{e:BALup} follows.

The lower bound \eqref{e:BALdown} is actually estimate (3.5) in Ben Arous and L\'eandre \cite[Theorem III.1]{BAL}. 
Their proof can be adapted to the case where $b$ and $x_0$ depend on $\eps$, under the convergence conditions \eqref{e:initCond} et \eqref{e:convergDrift}.
The statement about $\Lambda=\Lambda_{R}$ in Theorem \ref{t:BAL}$(i)$ is obvious from the definitions of the two actions, and \eqref{e:BALcoincide} is
a direct consequence of \eqref{e:BALupPointwise} and \eqref{e:BALdown}.
\end{proof}
\bigskip

\begin{proof}[Proof of Proposition \ref{p:tailInt}]
It follows from Lemma \ref{l:keyEstim} that, for every $q,q' \in (0,1)$ and $R$ small enough,
\begin{multline*}
\int_{|z-\overline{z}| \ge A} |z|^k p^{\eps}_t(y,z) dz \le
C_q  R^{-N_1(q)} \eps^{-N(q)}
\int_{|z-\overline{z}| \ge A} |z|^k \Prob( |X^{\eps}_t - (y,z)| \le R )^q dz
\\
= C_q  R^{-N_1(q)} \eps^{-N(q)}
\int_{|z-\overline{z}| \ge A} |z|^k \Prob( |X^{\eps}_t - (y,z)| \le R )^{qq'} \: \Prob( |X^{\eps}_t - (y,z)| \le R )^{q(1-q')}
dz.
\end{multline*}
Since
\[
\begin{aligned}
\Prob(|X^{\eps}_t - (y,z)| \le R) &\le
 \Prob(|Y^{\eps}_t - y| \le R, |Z^{\eps}_t - z| \le R)
\end{aligned}
\]
one has
\begin{multline} \label{e:integ1}
\int_{|z-\overline{z}| \ge A} |z|^k p^{\eps}_t(y,z) dz \le
C_q \: R^{-N_1(q)} \eps^{-N(q)} \times
\Prob( |Y^{\eps}_t - y| \le R, |Z^{\eps}_t-\overline{z}| \ge A-R)^{qq'}
\\
\times \int_{|z-\overline{z}| \ge A} |z|^k \Prob( |Z^{\eps}_t-z| \le R )^{q(1-q')} dz.
\end{multline}

The integral on the right hand side of \eqref{e:integ1} can be bounded as follows
\be \label{e:tailIntBound}
\int_{|z-\overline{z}| \ge A} |z|^k \Prob( |Z^{\eps}_t-z| \le R )^{q(1-q')} dz
\le 
\int_{\R^{n-l}} |z|^k \Prob( |Z^{\eps}_t| \ge |z|-R )^{q(1-q')} dz.
\ee
Note that the random variable $|Z^{\eps}_t|$ has moment of all orders uniformly bounded in $\eps$ (for so does $|X^{\eps}_t|$):
precisely, for any $T>0$ and $r > 0$ there exists a constant $C_r=C_{r,T}$ such that
$\sup_{\eps \le 1} \: \esp[ (Z^{\eps}_t)^r] \le C_r$ for all $t \le T$.
Then, from Markov's inequality
\begin{equation} \label{e:MarkovZ}
\sup_{\eps \le 1} \: \Prob( |Z^{\eps}_t| \ge |z|-2R ) \le \frac{C_r}{(|z|-R)^r}
\end{equation}
for all $z$ such that $|z|-R>0$ and all $r>0$.
The exponent $r$ can be chosen sucht that
\[
\sup_{\eps \le 1} \Prob( |Z^{\eps}_t| \ge |z|-R )^{q(1-q')}
\le	
\frac{C^{(1)}}{|z|^{k+n-l+1}},
\]
for all $|z|$ larger than, say, $1+R$, for some positive constant $C^{(1)}=C^{(1)}_T$.
It follows that for any choice of $q,q' \in (0,1)$, the integral on the right hand side of \eqref{e:tailIntBound} is convergent, and uniformly bounded in $\eps$.

Finally taking $\log$, multiplying by $\eps^2$ and taking $\limsup_{\eps \to 0}$ in \eqref{e:integ1}, we have
\begin{multline*}
\limsup_{\eps \to 0} \eps^2 \log \int_{|z-\overline{z}| \ge A} |z|^k p^{\eps}_t(y,z) dz \le 
q q' \: \limsup_{\eps \to 0} \eps^2 \log \Prob( |Y^{\eps}_t - y| \le R, |Z^{\eps}_t-\overline{z}| \ge A-R)
\\
\le
-q q' \: \inf \{ \Lambda_t(y',z') : |y'-y| \le R, |z'-\overline{z}| \ge A-R \}.
\end{multline*}
As $R \downarrow 0$, the right hand side tends to $-q q' \: \inf \{ \Lambda_t(y,z'): |z'-\overline{z}| \ge A \}$: since $q, q'<1$ were arbitrary, we obtain the claim.
\end{proof}

\subsection{Proof of Lemma \ref{l:keyEstim}} \label{s:lemmaProof}

Throughout this section, we denote $X = (X_t; t \geq 0)$ the strong solution of the SDE
\be \label{e:diffusion}
X_t =
x_0 + \int_0^t B(X_s)ds + \sum_{j=1}^d \int_0^t A_j(X_s)dW^j_s, \ \ \ t \geq 0
\ee
where $B, A_j \in \mathcal{C}^{\infty}_b (\R^n;\R^n)$ for all $j$.
For any multi-index $\alpha = (\alpha_1, \dots, \alpha_l) \in \{1,\dots,n\}^l$, we denote $|\alpha|=l$ and $\partial_{\alpha} = \partial^{|\alpha|}_{x_{\alpha_1}, \dots, x_{\alpha_l}}$.
Setting
\[
|f|_k = \sum_{|\alpha|\le k} \sup_{y \in \R^n} |\partial_{\alpha} f(y)|
\]
for smooth real valued functions $f$, we denote $\nB_k = 1 + \sum_{i=1}^n |B^i|_k$ and $\nA_k=1 + \sum_{i,j} |A^i_j|_k$.

\emph{Some elements of Malliavin calculus}. Following the standard notation in \cite{Nual06}, we denote $\mathbb{D}^{k,p}$ the domain of the $k$-th order Malliavin derivative, and $\mathbb{D}^{\infty}= \cap_{k \ge 1} \cap_{p \ge 1} \mathbb{D}^{k,p}$.
It is classical, see \cite{Nual06}, that $X_t$ is a smooth random variable in Malliavin's sense for every $t$, that is $X_t \in \mathbb{D}^{\infty}$.
Denoting $D_r X_t=(D_r^1 X_t, \dots, D_r^d X_t)$, $r \in [0,t]$ the ($d$-dimensional) Malliavin derivative of $X_t$, the $k$-th order derivative is obtained by iterating the operator: $D^{j_1, \dots, j_k}_{r_1, \dots r_k} X_t := D_{r_1}^{j_1} \cdots D_{r_k}^{j_k} X_t$, for every $(j_1, \dots, j_k) \in \{1,\dots,d\}^k$.
It is well-known that the random variables $D^{j_1, \dots, j_k}
_{r_1, \dots, r_k} X_t$ have finite moments of any order: the following lemma gives an explicit estimate on the $L^p$ norms, in terms of the bounds on $A$ and $B$ and their derivatives, and will be useful in what follows.

\begin{lemma}[Lemma 2.1 and Corollary 1 in \cite{SDM}] \label{l:MallDerEstim}
For every $k \geq 1$ and $p>1$ there exist positive integers $\gamma, \gamma'$  and a positive constant
$C$, all depending on $k,p$ but not on the bounds on $B$ and $A$ and their derivatives, such that, for
any $t > 0$
\[
\sup_{r_1,\dots,r_k \leq t} \mathbb{E} \left[ \left| D^{j_1,\dots,j_k}_{r_1,\dots,r_k} X \right|^p \right]
\leq
C_{k,p}
\left( t^{1/2} \nB_k + \nA_k \right)^{ \gamma' }
e^{\gamma (t|B|_1 + t^{1/2}|A|_1)^p}
\]
for all $i=1,\dots,m$ and $(j_1, \dots, j_k) \in \{1,\dots,d\}^k$.
Moreover,
\[
|| \phi( X_t ) ||_{k,p} 
\leq C_{k,p} \: |\phi|_k \:
\bigl( 1 + \bigl( t \vee t^k \bigr)^{1/2} \bigr)
\bigl( t^{1/2} \nB_k + \nA_k \bigr)^{\gamma'}
e^{k \gamma (t|B|_1 + t^{1/2}|A|_1)^p}
\]
for any $\phi \in C^{\infty}(\R^n)$.
\end{lemma}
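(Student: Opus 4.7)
The strategy is to derive a linear SDE for $D^{j_1,\dots,j_k}_{r_1,\dots,r_k}X_t$ and to proceed by induction on $k$. For $k=1$ and $r\le t$, applying the Malliavin derivative to both sides of \eqref{e:diffusion} yields
\begin{equation*}
D^j_r X_t = A_j(X_r) + \int_r^t \nabla B(X_s)\,D^j_r X_s\,ds + \sum_{\ell=1}^d \int_r^t \nabla A_\ell(X_s)\,D^j_r X_s\,dW^\ell_s,
\end{equation*}
which is linear in $D^j_r X_\cdot$ with coefficients pointwise bounded by $|B|_1$ and $|A|_1$, and with source term of size at most $|A|_0\le \nA_1$. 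A standard combination of BDG and Gronwall's lemma applied to the $L^p$ norm of $D^j_r X_t$ produces an estimate of the form $\esp[|D^j_r X_t|^p]^{1/p}\lesssim_p \nA_1 \exp(C_p(t|B|_1+t^{1/2}|A|_1))$, which after raising to the $p$-th power is absorbed into the claimed bound with $k=1$ (the power $\gamma'$ depending only on $n,p$).

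\textbf{Induction step.} For general $k$, the plan is to differentiate the SDE for $D^{j_1,\dots,j_{k-1}}_{r_1,\dots,r_{k-1}}X_t$ one more time. This yields a linear SDE in the top-order quantity $D^{j_1,\dots,j_k}_{r_1,\dots,r_k}X_t$ with the \emph{same} linear coefficients $\nabla B(X_\cdot), \nabla A_\ell(X_\cdot)$, but with a forcing term that is a polynomial expression of degree at most $k$ in lower-order Malliavin derivatives of $X$, with coefficients that are partial derivatives of $B$ and $A$ of order at most $k$ evaluated at $X$. Taking $L^p$ norms, applying H\"older's inequality to the monomials in lower-order Malliavin derivatives, and invoking the inductive hypothesis at each lower order, one obtains an $L^p$-bound on the forcing of the form $C_{k,p}(t^{1/2}\nB_k+\nA_k)^{\gamma'}\exp(\gamma(t|B|_1+t^{1/2}|A|_1)^p)$. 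Applying BDG and Gronwall once more to the top-order linear SDE propagates this bound (up to adjusting $\gamma'$) to $D^{j_1,\dots,j_k}_{r_1,\dots,r_k}X_t$; crucially, since the top-order SDE has linear coefficients involving only $\nabla B, \nabla A_\ell$, only $|B|_1$ and $|A|_1$ enter the exponential factor, while $|B|_k$ and $|A|_k$ appear only polynomially in the prefactor.

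\textbf{The $\phi(X_t)$ bound via Fa\`a di Bruno.} For the second inequality, the plan is to expand the Malliavin derivatives of the composition via Fa\`a di Bruno's formula:
\begin{equation*}
D^{j_1,\dots,j_k}_{r_1,\dots,r_k}\phi(X_t) = \sum_{\pi\in\mathcal{P}(k)} (\partial^{|\pi|}_{\alpha(\pi)}\phi)(X_t)\prod_{B\in\pi} D^{j_B}_{r_B} X_t,
\end{equation*}
where $\pi$ ranges over set-partitions of $\{1,\dots,k\}$ and $\alpha(\pi)$ is the associated multi-index. Bounding $|\partial^{|\pi|}_{\alpha(\pi)}\phi|_\infty$ by $|\phi|_k$ pointwise, and applying H\"older's inequality across the blocks of each partition in combination with the first estimate at the order of each block (each contributing an exponential factor that compounds into $\exp(k\gamma(\cdots)^p)$), yields an $L^p$-bound on the above derivative by $C_{k,p}|\phi|_k(t^{1/2}\nB_k+\nA_k)^{\gamma'}\exp(k\gamma(t|B|_1+t^{1/2}|A|_1)^p)$. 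Integrating the $p$-th power in the time variables $r_1,\dots,r_j\in[0,t]^j$ for each derivative order $j\le k$, summing over $j$ and taking the $p$-th root produces the polynomial factor $(1+(t\vee t^k)^{1/2})$ and the stated bound on $\|\phi(X_t)\|_{k,p}$. The principal obstacle is the combinatorial bookkeeping in the induction and in Fa\`a di Bruno — one must ensure that $\gamma,\gamma'$ are universal in the bounds on $B, A$ and their derivatives, which is essential for the subsequent application to families of SDEs with possibly large coefficient bounds.
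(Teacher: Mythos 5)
Note that the paper does not prove this lemma itself: it is imported verbatim from \cite{SDM} (Lemma 2.1 and Corollary 1 there), so there is no internal proof to compare against. Your sketch --- the linear SDE satisfied by $D^j_r X_t$, induction on $k$ with BDG plus Gronwall arranged so that only $|B|_1,|A|_1$ enter the exponential while the higher derivatives $|B|_k,|A|_k$ appear only polynomially through the forcing term, and the Leibniz/Fa\`a di Bruno expansion with H\"older across blocks followed by integration in the time variables to produce the factor $1+(t\vee t^k)^{1/2}$ in the $\| \phi(X_t)\|_{k,p}$ bound --- is exactly the standard argument carried out in the cited reference and is correct; the only point worth making explicit is that the inductive hypothesis must be available at all moment orders $p$ (as it is), since the H\"older step raises the exponents of the lower-order Malliavin derivatives.
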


\noindent
The notion of non-degeneracy for (Malliavin-)differentiable random variables $F \in \mathbb{D}^{1,2}$ is understood in the sense of the (stochastic) Malliavin covariance matrix
\[
(\gamma_F)_{i,j} = \int_0^t \sum_{l=1}^d D^l_s F^i \: D^l_s F^j ds,
\quad i,j = 1, \dots, d.
\]
A fundamental tool to study density of random variables with invertible covariance matrix is the integration by parts formula: 

\begin{proposition}[Integration by parts formula of the Malliavin calculus; \cite{Nual06}] \label{p:integrByParts}
Let $F = (F^1, \dots, F^d) \in \mathbb{D}^{\infty}$.
Assume that $\gamma_F$ is invertible a.s. and moreover $\esp [\det(\sigma_F)^{-p}] < \infty $ for all $p \geq 1$.
Let $G \in \mathbb{D}^{\infty}$ and $\phi \in C^{\infty}_{\text{pol}}(\R^m)$.
Then, for any $k \geq 1$ and any multi-index $\alpha = (\alpha_1, \dots, \alpha_k)
\in \{1, \dots, m\}^k$ there exists a random variable $H_{\alpha}(F,G) \in \mathbb{D}^{\infty}$ such that
\be \label{e:integrByParts}
\esp \left[ \partial_{\alpha} \phi(F) G \right] = \esp \left[\phi(F) H_{\alpha}(F,G) \right],
\ee
where the $H_{\alpha}(F,G)$ are recursively defined by 
\[
H_{\alpha}(F,G) = H_{(\alpha_k)}( F, H_{(\alpha_1, \dots, \alpha_{k-1})} (F,G)),
\qquad
H_{(i)}(F,G) = \sum_{j = 1}^m \delta \left( G (\sigma_F^{-1})_{i,j} D F^j \right)
\]
where $\delta$ denotes the adjoint operator of $D$.
\end{proposition}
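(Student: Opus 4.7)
The plan is to establish the identity first in the base case $k=1$ by combining the Malliavin chain rule with the duality $\delta = D^*$, and then to iterate in order to reach arbitrary order $k$ via induction on $|\alpha|$. The non-degeneracy hypothesis $\esp[\det(\gamma_F)^{-p}]<\infty$ for every $p$ is precisely what will allow us to insert $\gamma_F^{-1}$ while remaining inside $\mathbb{D}^{\infty}$, which in turn ensures that the integrands to which we apply the divergence $\delta$ always lie in its domain.

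For $k=1$, I would start from the chain rule
\[
D_r \phi(F) \;=\; \sum_{j=1}^m \partial_j \phi(F)\, D_r F^j,
\]
valid because $\phi \in C^\infty_{\mathrm{pol}}$ and $F \in \mathbb{D}^\infty$. Taking the $L^2([0,t];\R^d)$-inner product of both sides with $D F^l$ gives $\langle D\phi(F), DF^l\rangle_{L^2} = \sum_j (\gamma_F)_{j,l}\, \partial_j \phi(F)$. Contracting with $(\gamma_F^{-1})_{l,i}$ and summing over $l$ isolates $\partial_i \phi(F)$. Multiplying by $G$, taking expectations and invoking the duality $\esp[\langle Du, v\rangle_{L^2}] = \esp[u\, \delta(v)]$ (whenever $v \in \mathrm{Dom}(\delta)$) produces
\[
\esp[\partial_i \phi(F)\, G] \;=\; \esp\Bigl[\phi(F)\, \sum_{l} \delta\bigl(G\,(\gamma_F^{-1})_{l,i}\, DF^l\bigr)\Bigr],
\]
which is the asserted formula with $H_{(i)}(F,G)$ as in the statement.

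The induction step is purely formal: assuming the formula for all multi-indices of length $\le k-1$, set $\psi := \partial_{\alpha_1 \cdots \alpha_{k-1}} \phi$, apply the $k=1$ identity to $\partial_{\alpha_k} \psi(F)\cdot G$ to obtain $\esp[\partial_\alpha \phi(F)\, G] = \esp[\psi(F)\, H_{(\alpha_k)}(F,G)]$, and then apply the inductive hypothesis to $\psi$ with weight $H_{(\alpha_k)}(F,G)$ in place of $G$. This delivers the recursion exactly as written.

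The main obstacle, and the only real work, is verifying at each step that the random vector fed to $\delta$ actually lies in $\mathrm{Dom}(\delta)$, and moreover that the resulting $H_\alpha(F,G)$ stays in $\mathbb{D}^\infty$ so that the induction can be continued. Concretely I would check: (a) $\det(\gamma_F) \in \mathbb{D}^\infty$ because $\gamma_F$ is a polynomial in $DF^i$ and $F \in \mathbb{D}^\infty$; (b) by the assumption $\esp[\det(\gamma_F)^{-p}]<\infty$ for every $p$, together with the formula $D(\det(\gamma_F)^{-1}) = -\det(\gamma_F)^{-2} D\det(\gamma_F)$ and Meyer's inequalities applied iteratively, $\det(\gamma_F)^{-1} \in \mathbb{D}^\infty$; (c) by Cramer's rule each entry of $\gamma_F^{-1}$ is therefore in $\mathbb{D}^\infty$; (d) since $\mathbb{D}^\infty$ is an algebra stable under products and since $\mathbb{D}^\infty(H) \subset \mathrm{Dom}(\delta)$ with $\delta$ mapping $\mathbb{D}^\infty(H)$ into $\mathbb{D}^\infty$, each weight $G\,(\gamma_F^{-1})_{l,i}\,DF^l$ is admissible and the output $H_{(i)}(F,G)$ remains in $\mathbb{D}^\infty$. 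Routine propagation through the induction then shows that $H_\alpha(F,G) \in \mathbb{D}^\infty$ for every multi-index $\alpha$, closing the argument.
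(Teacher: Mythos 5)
The paper does not prove this proposition; it is stated as a standard result cited from Nualart's book \cite{Nual06}, so there is no paper proof to compare against. That said, your argument is essentially the canonical proof from Nualart and it is correct in substance: the $k=1$ case via chain rule and duality $\esp[\langle D u, v\rangle] = \esp[u\,\delta(v)]$, and the regularity bookkeeping (moment condition on $\det(\gamma_F)^{-1}$ giving $\gamma_F^{-1} \in \mathbb{D}^\infty$ by Cramer's rule, algebra structure of $\mathbb{D}^\infty$, and $\delta$ mapping $\mathbb{D}^\infty(H)$ into $\mathbb{D}^\infty$) are all the right ingredients. One small caveat: you use $\gamma_F$ throughout, while the statement writes both $\gamma_F$ and $\sigma_F$ for the Malliavin matrix; that is merely a typographical inconsistency in the paper, and your consistent usage is the sensible one.

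There is a minor mismatch in your induction step that is worth being precise about. You peel off the \emph{last} index first (apply $k=1$ to $\partial_{\alpha_k}\psi$ with $\psi = \partial_{\alpha_1\cdots\alpha_{k-1}}\phi$, then apply the inductive hypothesis), which yields
\[
H_\alpha(F,G) \;=\; H_{(\alpha_1,\dots,\alpha_{k-1})}\bigl(F,\, H_{(\alpha_k)}(F,G)\bigr),
\]
whereas the statement asserts the recursion
\[
H_\alpha(F,G) \;=\; H_{(\alpha_k)}\bigl(F,\, H_{(\alpha_1,\dots,\alpha_{k-1})}(F,G)\bigr).
\]
Both produce valid weights satisfying $\esp[\partial_\alpha\phi(F)\,G] = \esp[\phi(F)\,H_\alpha(F,G)]$ (partial derivatives commute, so any order of iterated integration by parts works), but they are not the same random variable in general, and your sentence ``this delivers the recursion exactly as written'' is therefore slightly off. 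To recover the stated recursion verbatim, reverse the order: first apply the inductive hypothesis to the test function $\partial_{\alpha_k}\phi$ with multi-index $(\alpha_1,\dots,\alpha_{k-1})$, obtaining the weight $H_{(\alpha_1,\dots,\alpha_{k-1})}(F,G)$, and then apply the base case $k=1$ in direction $\alpha_k$ to $\phi$ with that weight. Everything else in your argument goes through unchanged.
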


\noindent
The key ingredient required to apply the integration by parts is an estimate of the $L^p$ norms of the Malliavin weights $H_{\alpha}$.
The following theorem, proved in \cite{SDM}, provides explicit bounds in terms of the bounds on $A$ and $B$ and their derivatives.

\begin{theorem}[Theorem 2.3 in \cite{SDM}] \label{t:MallWeigEstim}
For every $k \geq 1$, there exist a positive constant $C_k$ and positive integers $a_k, b_{k}, c_k$ and $r_k$, all possibly depending also on $n$ and $d$, such that for any multi-index $\alpha \in \{1,\dots,n\}^k$, any $G \in \mathbb{D}^{\infty}$
and any $t > 0$,
\[
||H_{\alpha}(X_t,G)||_2 \le C_k (1 + t^{_k}) \ ||G||_{k,2^{k+1}} \: \esp[ \det (\gamma_t)^{-r_k} ]
\\
\times (|B|_{k+1} + |A|_{k+1})^{b_{k}} e^{(t (|B|_1+|A|_1^2) + t^{1/2} |A|_1)^{c_k}}.
\]
\end{theorem}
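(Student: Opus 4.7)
The plan is to prove the estimate by induction on $k$, exploiting the recursive definition $H_\alpha(F,G) = H_{(\alpha_k)}(F, H_{(\alpha_1,\ldots,\alpha_{k-1})}(F,G))$, where $H_{(i)}(F,G) = \sum_j \delta\bigl(G(\gamma_F^{-1})_{i,j}\, DF^j\bigr)$. The core analytic tool will be Meyer's inequality for the Skorohod integral, $\|\delta(u)\|_p \le c_p \|u\|_{1,p}$, which allows one to trade an application of $\delta$ for an application of $D$ at the cost of raising the Sobolev-Malliavin exponent.

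The base case $k=1$ reduces to bounding $\|\delta(G(\gamma_F^{-1})_{i,j}DF^j)\|_2$. First I would apply Meyer's inequality to reduce this to bounding $\|G(\gamma_F^{-1})_{i,j}DF^j\|_{1,2}$. The product rule for $D$ produces three kinds of terms: derivatives of $G$, of $DF^j$ (i.e. second-order derivatives of $F=X_t$), and of $(\gamma_F^{-1})_{i,j}$. For the latter I would use $D(\gamma_F^{-1}) = -\gamma_F^{-1}(D\gamma_F)\gamma_F^{-1}$, and $D\gamma_F$ is a quadratic expression in $DF$ and $D^2F$. A H\"older splitting then separates (a) the random variable $G$ and its derivatives, which contribute $\|G\|_{1,p}$ for some $p$; (b) the Malliavin derivatives of $X_t$ up to order $2$, whose $L^p$ moments are controlled by Lemma \ref{l:MallDerEstim}, yielding the factors $(|B|_{2}+|A|_{2})^{b_1}$ and $e^{(t(|B|_1+|A|_1^2)+t^{1/2}|A|_1)^{c_1}}$; and (c) the powers of $\gamma_F^{-1}$, whose $L^p$ moments yield the factor $\esp[\det(\gamma_t)^{-r_1}]$ by Cramer's rule (the entries of $\gamma_F^{-1}$ being cofactors divided by the determinant).

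For the inductive step, assume the statement at level $k-1$; since $H_{\alpha}(F,G)=H_{(\alpha_k)}(F,\tilde G)$ with $\tilde G := H_{(\alpha_1,\ldots,\alpha_{k-1})}(F,G)$, I would apply the $k=1$ argument with $G$ replaced by $\tilde G$. Meyer's inequality now demands bounding $\tilde G$ in the norm $\|\cdot\|_{1,p}$ for a suitable $p>2$, and by the inductive hypothesis at exponent $p$ (rather than $2$) this in turn requires $\|G\|_{k-1,2p}$. Iterating, the Sobolev exponent doubles at each level, giving the $2^{k+1}$ that appears in $\|G\|_{k,2^{k+1}}$. Derivatives of $\gamma_F^{-1}$ of higher order introduce further powers of $\gamma_F^{-1}$ (hence higher $r_k$) and further Malliavin derivatives of $X_t$, up to order $k+1$ (hence the bound $(|B|_{k+1}+|A|_{k+1})^{b_k}$ after another use of Lemma \ref{l:MallDerEstim}); each new application of $D$ that hits $X_t$ picks up one more factor of the exponential $e^{(t(|B|_1+|A|_1^2)+t^{1/2}|A|_1)^{c_k}}$, with $c_k$ growing with $k$.

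The main obstacle is the bookkeeping: one must verify that at every induction step, the H\"older splitting of $\delta(G(\gamma_F^{-1})_{i,j}DF^j)$ and its derivative genuinely closes with the claimed dependence on $t$, $\det(\gamma_t)^{-1}$, and on $|A|_{k+1}, |B|_{k+1}$, while the $L^p$-exponent at which $G$ is measured remains finite. Controlling the growth of the integer exponents $a_k,b_k,c_k,r_k$ requires defining them recursively from the base case (e.g.\ $r_k$ grows linearly, $c_k$ doubles, etc.) and checking compatibility. Once this arithmetic is settled, the estimate follows by assembling the three factors (G-norm, determinant moments, coefficient bounds, time-polynomial) via H\"older's inequality, concluding the induction.
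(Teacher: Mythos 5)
The paper does not prove Theorem \ref{t:MallWeigEstim}: it is imported verbatim from \cite{SDM}, so there is no in-paper proof to compare against. On its own terms, your sketch identifies the right ingredients -- the recursive definition of $H_\alpha$, Meyer's inequality $\|\delta(u)\|_p\le c_p\|u\|_{1,p}$, the product rule for $D$, Cramer's rule to express $(\gamma_F^{-1})_{ij}$ as cofactor over determinant so that inverse-determinant moments appear, a H\"older splitting, and Lemma~\ref{l:MallDerEstim} to convert moments of $D^k X_t$ into bounds by $|A|_{k+1}$, $|B|_{k+1}$ and the exponential factor. This is the standard and, as far as I am aware, the intended approach.

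There is, however, a gap in how you close the induction. At level $k$ you write $H_\alpha(F,G)=H_{(\alpha_k)}(F,\tilde G)$ with $\tilde G = H_{(\alpha_1,\dots,\alpha_{k-1})}(F,G)$, apply Meyer's inequality, and observe that you now need a bound on $\|\tilde G\|_{1,p}$ for some $p>2$. But the theorem as stated -- and hence the inductive hypothesis, if taken literally -- controls only $\|\tilde G\|_2$; even the strengthening to ``the inductive hypothesis at exponent $p$'' gives only $\|\tilde G\|_p$, still missing the first Malliavin-Sobolev derivative of $\tilde G$. So the recursion does not close in the form you invoke it. The standard fix is to prove by induction a strictly stronger statement: a bound on $\|H_\alpha(X_t,G)\|_{j,p}$ for every $j\ge0$ and $p\ge1$, with the right-hand side depending on $\|G\|_{k+j,\,p\,2^{k}}$, on $\esp[\det(\gamma_t)^{-r_{k,j,p}}]$ and on $(|B|_{k+j+1}+|A|_{k+j+1})$; the stated theorem is then the case $j=0$, $p=2$. (Equivalently, one can unfold the nested $\delta$'s and apply Meyer's inequality $k$ times to a single Sobolev norm $\|\cdot\|_{k,2}$ of the innermost integrand, which sidesteps the induction entirely.) Either version requires the extra bookkeeping you allude to, but the induction as currently phrased is not formally correct without this strengthening. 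A further small point: the factor in the statement is $\esp[\det(\gamma_t)^{-r_k}]$, the raw expectation and not its $1/r_k$-th root, so when you apply H\"older you should check that you land on the expectation itself rather than on an $L^{r_k}$-norm of $\det(\gamma_t)^{-1}$; the two differ by a power and affect how $r_k$ grows with $k$.
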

\bigskip

Let us go back to equation \eqref{e:diffusion}.
If the stochastic integral in \eqref{e:diffusion} is intended in Stratonovich sense, the drift coefficient $B$ is replaced by $A_0 = B - \frac12 \sum_{j=1}^d \sum_{k=1}^n A^k_j \partial_{x_k} A_j$.
If we assume that the vector field $(A_0,A_j)$ satisfy the weak \Horm condition at $x_0$
\be \label{e:weakHAppendix}
\begin{aligned}
\qquad \text{span} \{ A_1, \dots,A_d ; \ &[A_0, A_i] : 1 \le i \le d; \
[A_i, A_j] : 1 \le i,j \le d;
\\
&[[A_i, [A_l, A_m]] : 1 \le i,l,m \le d; \dots \} \bigr|_{x_0} = \R^n,
\end{aligned}
\ee
then, denoting $V^{L}_{x_0}$ the vector space spanned by the Lie brackets of length smaller or equal to $L$ in \eqref{e:weakHAppendix}, and setting\footnote{The sum in \eqref{e:brackets} is in fact taken over a finite number of generating brackets, and it can alternatively be written using the notation introduced in the Appendix of \cite{KusSt}.}
\be \label{e:brackets}
\mathcal{V}_L (x_0,v; A_0, A) = \sum_{V \in V^{L}_{x_0}} \langle v, V v \rangle^2
\qquad v \in \R^n,
\ee
it follows from \eqref{e:weakHAppendix} that there exists some $L \ge 1$ such that $\mathcal{V}_L(x_0,v; A_0,A)=0 \Rightarrow v=0$.
In other words,
\[
\mathcal{V}_L(x_0;A_0,A) = \inf_{|v|=1} \mathcal{V}_L(x,v; A_0, A) > 0
\]
for some $L \ge 1$.
Under condition (wH) at $x_0$, the Malliavin covariance matrix $\gamma_{X_t}$ satisfies the fundamental estimate of Kusuoka and Stroock \cite[Corollary 3.25]{KusSt}: for every $T>0$ and $r > 0$, there exist a constant $C_r=C_r(T)$ and an integer $N(L,n)$ such that
\be  \label{e:KSMallMatrix}
\esp[ \det (\gamma_t)^{-r} ]^{1/r} \le \frac{C_r}{t^{nL} \ \mathcal{V}_L(x_0;A_0,A)^{N(L,n)}}
\ee
for all $t \in (0,T]$.
\medskip

We are now ready to give the following
\medskip

\begin{proof}[Proof of Lemma \ref{l:keyEstim}]
Let $1_{\{ B_{R/2}(0)\}} \le \overline{\varphi}_R \le 1_{\{ B_R(0) \}}$ be a $C^{\infty}$ function.
We can construct $\overline{\varphi}_R$ so that $|\overline{\varphi}_R|_k \le C_k (1 + R^{-k})$ for some constant $C_k$ (eventually depending on the dimension $n$).
Define $\varphi_R(y) := \overline{\varphi}_R(y-x)$ and consider the Fourier transform of $p_t^{\eps} \varphi_R$, that is (up to a constant factor)
\[
\hat{p}_{t,R}^{\eps}(\xi) := \widehat{p_t^{\eps} \varphi_R}(\xi) = \esp [ e^{i\langle \xi, X^{\eps}_t \rangle} \varphi_R (X^{\eps}_t) ].
\]
Since the function $y \to p_t^{\eps}(y) \varphi_R(y)$ is $C^{\infty}$ and compactly supported, $\hat{p}_{t,R}^{\eps}$ is integrable and we can use Fourier inversion in order to write 
\be \label{e:FourierInv}
p_t^{\eps}(x) = p_t^{\eps}(x) \varphi_R(x) = \frac1{(2\pi)^n} \int_{\R^n} e^{-i\langle \xi, x \rangle} \hat{p}_{t,R}^{\eps}(\xi)
d\xi.
\ee
On the one hand, it is clear that
\be \label{e:firstFourEstim}
| \hat{p}_{t,R}^{\eps} | \le \Prob(|X^{\eps}_t - x| \le R). 
\ee
On the other hand, using $\partial^k_{x_1} \dots \partial^k_{x_n} e^{i\langle \xi, x \rangle} = i^{kn} (\prod_{j=1}^n \xi^j)^k e^{i\langle \xi, x \rangle}$, and applying Theorem \ref{t:MallWeigEstim}, we have
\be \label{e:FourierEstimate}
\begin{aligned}
\Bigl| \Bigl( 1 + (\prod_{j=1}^n \xi^j)^k \Bigr) \hat{p}_{t,R}^{\eps}(\xi) \Bigr| &=
\Bigl| \Bigl( 1 + (\prod_{j=1}^n \xi^j)^k \Bigr) \esp [ e^{i\langle \xi, X^{\eps}_t \rangle} \varphi_R (X^{\eps}_t) ] \Bigr|
\\
&\le 1 + | \esp[ \partial_{\alpha} e^{i\langle \xi, X^{\eps}_t \rangle} \varphi_R (X^{\eps}_t) ] |
\\
&= 1 + | \esp[ e^{i\langle \xi, X^{\eps}_t \rangle} H_{\alpha}(X^{\eps}_t, \varphi_R (X^{\eps}_t)) ] |
\\
&\le 1 + || H_{\alpha}(X^{\eps}_t, \varphi_R (X^{\eps}_t))||_2
\end{aligned}
\ee
where $\alpha=(\{1\}^k,\dots,\{n\}^k)$.
Using Lemma \ref{l:MallDerEstim} and the fact that the norms $\{ |b_{\eps}|_k \}_{\eps > 0}$ are bounded in $\eps$, it follows that there exist some
$\eps_0 > 0$ such that
\begin{multline} \label{e:indicatorEstim}
||\varphi_R(X^{\eps}_t)||_{nk,2^{n k+1}}
\le
C_k \bigl( 1 + \bigl( t \vee t^k \bigr)^{1/2} \bigr)
\left( t^{1/2} |b_{\eps}|_k + \eps |\sigma|_k \right)^{\gamma'}
e^{k \gamma (t |b_{\eps}|_1 + t^{1/2} \eps |\sigma|_1)^p}
(1 + R^{-nk})
\\
\le 
C^{(1)}_k (1 + R^{-nk})
\end{multline}
for every $\eps < \eps_0$, for some constant $C^{(1)}_k=C^{(1)}_k(t,|b_0|_k, |\sigma|_k)$.

When the SDE \eqref{e:baseSDE} is written in Stratonovich form, the drift $b_{\eps}$ is replaced by $\overline{b}_\eps = b_\eps-\eps^2 \frac12 \sum_{j=1}^d \sum_{k=1}^n \sigma^k_j \partial_{x_k} \sigma_j$.
Noting that $\mathcal{V}_L(x_0^{\eps},v; \overline{b}_\eps, \eps \sigma)$ contains terms propositional to $\eps$ (coming from the brackets $[b_\eps, \eps \sigma_j]_{x_0^{\eps}} = \eps [b_\eps, \sigma_j]_{x_0^{\eps}} = \eps (
[b_0, \sigma_j]_{x_0} + o(1))$) and terms proportional to $\eps^2$ (coming from the brackets $[\eps \sigma_j, \eps \sigma_k]_{x_0^{\eps}}$), for $\eps$ small enough one has
\[
\mathcal{V}_L(x_0^{\eps},v; \overline{b}_\eps, \eps \sigma) \ge 
\frac{\eps}2 \mathcal{V}_L(x_0,v; b_0, \sigma).
\]
Therefore, $\mathcal{V}^{\eps}_L(x_0) :=\inf_{|v|=1} \mathcal{V}_L(x_0^{\eps},v; \overline{b}_\eps, \eps \sigma) \ge \frac{\eps}2 \mathcal{V}_L(x_0; b_0, \sigma)$. 
Under condition (wH), there exist some $L \ge 1$ such that $\mathcal{V}_L(x_0; b_0, \sigma) > 0$.
Then, it follows from estimate \eqref{e:KSMallMatrix} that for every $r > 0$ and $t>0$ there exist $\eps_1$, a function of time $C_r(t)$ and an integer $N(L,n)$ such that
\be \label{e:MallMatrixSmallNoise} 
\esp[ \det (\gamma^{\eps}_t)^{-r} ] \le C_r(t) \: \eps^{-r N(L,n)}
\ee
for every $\eps < \eps_1$, where $\gamma^{\eps}_t$ is the Malliavin covariance matrix of $X^{\eps}_t$.

Now, it follows from \eqref{e:indicatorEstim}, \eqref{e:MallMatrixSmallNoise} and Theorem \ref{t:MallWeigEstim} that
\be \label{e:weightsSmallNoise}
|| H_{\alpha}(X^{\eps}_t, \varphi_R (X^{\eps}_t))||_2 \le
C^{(2)}_k (1 + R^{-nk}) \eps^{-N^1_k}
\ee
for every $\eps < \eps_0 \wedge \eps_1$, for some constant $C^{(2)}_k$ also depending on $(t,|b_0|_k, |\sigma|_k)$, where $N^1_k = r_k N(L,n)$ and $r_k$ is given in Theorem \ref{t:MallWeigEstim}.
Plugging \eqref{e:weightsSmallNoise} into \eqref{e:FourierEstimate}, one obtains a polynomial estimate for $\hat{p}_{t,R}^{\eps}$, precisely
\be \label{e:secondFourEstim}
\hat{p}_{t,R}^{\eps}(\xi) \le \frac{C^{(3)}_k (1+R^{-nk}) \eps^{-N^1_k}}{1+(\prod_{j=1}^n |\xi|^j)^k}.
\ee
Finally, using \eqref{e:weightsSmallNoise} and applying \eqref{e:firstFourEstim} and \eqref{e:secondFourEstim}, for every $q \in (0,1)$ we can write
\[
\begin{aligned}
p_t^{\eps}(x) &=
\frac1{(2\pi)^n} \int_{\R^n} e^{-i\langle \xi, x \rangle} \hat{p}_{t,R}^{\eps}(\xi)
d\xi
\\
&=
\frac1{(2\pi)^n} \int_{\R^n} e^{-i\langle \xi, x \rangle} \hat{p}_{t,R}^{\eps}(\xi)^{q} \
\hat{p}_{t,R}^{\eps}(\xi)^{1-q} d\xi
\\
&\le
\Prob(|X^{\eps}_t - x| \le R)^q \Bigl( C^{(3)}_k (1 + R^{-nk} ) \eps^{-N^1_k} \Bigr)^{1-q} 
\int_{\R^n} \frac1{\bigl( 1+(\prod_{j=1}^n \xi^j)^k\bigr)^{1-q}} d\xi.
\end{aligned}
\]
Choosing $k=k^*$ large enough (but only dependent on $n$ and $q$), the last integral is convergent.
Then, the final claim is proved, once we have set $N_1(q)=\lceil nk^*(1-q) \rceil$ and $N(q)=\lceil N^1_{k^*} (1-q)
\rceil$.
\end{proof}

\bibliographystyle{siam}
\bibliography{References}

\end{document}